\documentclass[letterpaper, 11pt,  reqno]{amsart}
\usepackage{graphicx} 
\usepackage{amsmath,amssymb,amscd,amsthm,amsxtra,esint}
\usepackage{color}
\usepackage[margin=1.1in,marginparwidth=1.5cm, marginparsep=0.5cm]{geometry} 
\usepackage{tikz-cd}
\tikzset{smalltext/.style={"\textup{\small #1}" description}}

\usepackage[markup=default]{changes}
\definechangesauthor[name={Reviewer 1}, color=orange]{R1} 
\definechangesauthor[name={Reviewer 2}, color=blue]{R2} 
\definechangesauthor[name={Reviewer 3}, color=green]{R3}

\usepackage[implicit=true]{hyperref}
\usepackage{comment}

\allowdisplaybreaks[4]

\definecolor{gr}{rgb}   {0.,   0.69,   0.23 }
\definecolor{bl}{rgb}   {0.,   0.5,   1. }
\definecolor{mg}{rgb}   {0.85,  0.,    0.85}
\definecolor{yl}{rgb}   {0.8,  0.7,   0.}
\definecolor{or}{rgb}  {0.7,0.2,0.2}

\newcommand{\noi}{\noindent}
\newcommand{\R}{\mathbb{R}}
\newcommand{\T}{\mathbb{T}}
\newcommand{\Z}{\mathbb{Z}}
\newcommand{\N}{\mathbb{N}}

\newcommand{\F}{\mathcal{F}}

\newcommand{\Id}{\textup{Id}}

\newcommand{\al}{\alpha}
\newcommand{\be}{\beta}
\newcommand{\ta}{\theta}
\newcommand{\s}{\sigma}

\newcommand{\Dl}{\Delta}
\newcommand{\dl}{\delta}
\newcommand{\ld}{\lambda}

\newcommand{\dt}{\partial_t}
\newcommand{\dx}{\partial_x}
\renewcommand{\dh}{\partial_h}
\newcommand{\ind}{\mathbf 1}
\newcommand{\ft}{\widehat}
\newcommand{\wt}{\widetilde}
\newcommand{\les}{\lesssim}
\newcommand{\ges}{\gtrsim}
\newcommand{\cj}{\overline}

\newcommand{\jb}[1]{\langle #1 \rangle}

\newcommand{\deff}{\stackrel{\textup{def}}{=}}

\newtheorem{theorem}{Theorem}[section]
\newtheorem{lemma}[theorem]{Lemma}
\newtheorem{proposition}[theorem]{Proposition}

\newtheorem{remark}[theorem]{Remark}

\newtheorem*{ackno}{Acknowledgements}

\numberwithin{equation}{section}
\numberwithin{theorem}{section}

\makeatletter
\@namedef{subjclassname@2020}{%
  \textup{2020} Mathematics Subject Classification}
\makeatother

\title[KdV limit of Toda lattice]{The Korteweg-de Vries limit for the global dynamics of the Toda lattice}

\author[R.~Liu and H.~Koch]{Ruoyuan Liu and Herbert Koch}

\address{
Ruoyuan Liu, Mathematical Institute\\
University of Bonn\\
Endenicher Allee 60\\
53115\\
Bonn\\
Germany}

\email{ruoyuanl@math.uni-bonn.de}

\address{
Herbert Koch, Mathematical Institute\\
University of Bonn\\
Endenicher Allee 60\\
53115\\
Bonn\\
Germany}

\email{koch@math.uni-bonn.de}

\subjclass[2020]{35Q53, 37J70, 37K10, 37K60}

\begin{document}

\baselineskip = 14pt

\keywords{Toda lattice, KdV, continuum limit, long-wave limit, integrable system}

\begin{abstract} 
It has been observed that the dynamics of the Toda lattice can be well described by solutions of the Korteweg-de Vries (KdV) equation in the continuum limit. We show that, under the KdV scaling and a suitable translation, the solution of the Toda lattice with $H^1$ initial data converges to that of the KdV equation globally in time. Our proof relies on tools from harmonic analysis and also on the construction and the conservation of mass and energy of the Toda lattice, the latter of which are derived from the completely integrable structure of the Toda lattice. As a consequence, we obtain long-wave KdV limits for the Toda lattice.
\end{abstract}


\maketitle

\tableofcontents

\section{Introduction}

\subsection{Toda lattice}

The Toda lattice is a Hamiltonian lattice model describing the motion of a chain of particles with nearest neighbor interaction. Discovered by Toda in \cite{Toda, Toda2}, the Toda lattice is one of the earliest examples of a nonlinear completely integrable system. It is in fact a particularly interesting case of the more general Fermi-Pasta-Ulam (FPU) system introduced by Fermi, Pasta, and Ulam, with contributions by Tsingou, in \cite{FPUT}. More specifically, the FPU Hamiltonian is given by
\begin{align}
    H (q, p) = \sum_{n \in \Z} \Big( \frac{p (n)^2}{2} + V (q (n + 1) - q (n)) \Big) ,
\label{defH}
\end{align}

\noi
where $q : \Z \to \R$ denotes the position of the $n$th string, $p : \Z \to \R$ denotes the momentum of the $n$th string, and $V : \R \to \R$ is the potential energy from nearest neighbor interactions. The Hamiltonian $H (q, p)$ in \eqref{defH} generates the following FPU system:
\begin{align*}
\begin{cases}
    \dt q (t, n) = \frac{\partial H (q, p)}{\partial p} (t, n) = p (t, n) \vspace{5pt} \\
    \dt p (t, n) = - \frac{\partial H (q, p)}{\partial q} (t, n) = V' (q (t, n + 1) - q (t, n)) - V' (q (t, n) - q (t, n - 1)).
\end{cases}
\end{align*}

\noi
The Toda lattice is the case when the potential function $V: \R \to \R$ is given by (after normalization)
\begin{align*}
V (r) = e^{-r} + r - 1 ,
\end{align*}

\noi
which gives rise to the system
\begin{align}
\begin{cases}
    \dt q (t, n) = p (t, n) \vspace{5pt} \\
    \dt p (t, n) = e^{- (q (t, n) - q (t, n - 1))} - e^{- (q (t, n + 1) - q (t, n))}.
\end{cases}
\label{toda_qp}
\end{align}

\noi
Over the last few decades, there has extensive studies in various aspects of the FPU system. See \cite{BI, CGG, Gal} and the references therein.

It is now well known that the dynamics of the FPU system can be well described in the continuum limit by solutions of the Korteweg-de Vries (KdV) equation:
\begin{align}
    \dt u + \partial_x^3 u + 6 u \partial_x u = 0.
\label{KdV}
\end{align}

\noi
Such connection of the two dynamics was first discovered by Zabusky and Kruskal in \cite{ZK} via formal computations. Since then, there has been some research work on the convergence of solitary waves for the FPU system to the soliton solutions for the KdV equation; see \cite{Eil, FW, FP1, FP2, FP3, FP4, Mizu}. On the other hand, for more general solutions, Schneider and Wayne \cite{SW} proved that the FPU dynamics can be approximated by counter-propagating KdV flows using a dynamical system approach; see \cite{BP, PB} for research on such phenomenon in the periodic setting, \cite{Mie, CCPS, GMWZ} for generalized discrete models, and also \cite{GM04, GM06, Sch10} for the connection of FPU to the cubic nonlinear Schr\"odinger equation. The result in \cite{SW} was then improved by \cite{HKY}, where Hong, Kwak, and Yang simplified the assumptions on initial data and reduced the regularity requirement. In particular, they viewed the FPU system as a nonlinear dispersive equation and established smoothing properties to justify the KdV limit of the FPU system. We point out that the convergence result in \cite{HKY} is only valid for a short time interval.

Compared to the general FPU system, the Toda lattice has the advantage that it is a completely integrable system. See \cite{Gie1, Tes, Gie2, BKP1, BKP2, BKP3} on research that exploited the integrable structure of the Toda lattice. In particular, the Toda lattice admits infinitely many conservation laws, which are useful not only for showing the existence of a global-in-time solution but also for proving its continuum limit to KdV globally in time. The main goal of this paper is to show the latter convergence result. Specifically, we show that under suitable initial conditions, if we perform the KdV scaling and a suitable translation on the Toda lattice, then its solution converges to that of the KdV equation globally in time.
However, we choose not to directly work with the form \eqref{toda_qp}, which is not convenient for us to write out the relevant conservation laws of the Toda lattice. See the next subsection for more details on the setup for the problem.

\subsection{Setup and main results}

By introducing Flaschka's variables \cite{Fla}
\begin{align*}
    \al (t, n) = \exp \Big( - \frac{q (t, n + 1) - q (t, n)}{2} \Big) - 1 \quad \text{and} \quad \be (t, n) = \frac 12 p (t, n) ,
\end{align*}

\noi
we obtain the following equivalent form of the Toda lattice:
\begin{align}
\begin{cases}
    \dt \al (t, n) = - ( 1 + \al (t, n) ) ( \be (t, n + 1) - \be (t, n) ) \vspace{5pt} \\
    \dt \be (t, n) = - \big(1 + \frac 12 \al (t, n) + \frac 12 \al (t, n - 1) \big) (\al (t, n) - \al (t, n - 1)) .
\end{cases}
\label{toda_ab}
\end{align}

\noi
We refer to \eqref{toda_ab} as the {\it Flaschka's form} of the Toda lattice.

Let us now consider the continuum limit of the Toda lattice in Flaschka's form. Given a small parameter $h > 0$, we define the scaled variables $\al^h, \be^h : \R \times h \Z \to \R$ by performing the following scaling compatible with the KdV equation (with an extra factor ``3'' in front of the time variable which will make our analysis more convenient):
\begin{align}
\begin{split}
    \al^h (t, \ld) &:= h^{-2} \al (3 h^{-3} t, h^{-1} \ld) , \\
    \be^h (t, \ld) &:= h^{-2} \be (3 h^{-3} t, h^{-1} \ld)
\end{split}
\label{def_abh}
\end{align}

\noi
for any $\ld \in h \Z$. Then, the discrete system \eqref{toda_ab} is equivalent to the following scaled version of \eqref{toda_ab}:
\begin{align}
\begin{cases}
    \dt \al^h (t, \ld) = - 3 (h^{-3} + h^{-1} \al^h (t, \ld)) (\be^h (t, \ld + h) - \be^h (t, \ld - h)) \vspace{5pt} \\
    \dt \be^h (t, \ld) = - 3 \big( h^{-3} + \frac 12 h^{-1} \al^h (t, \ld) + \frac 12 h^{-1} \al^h (t, \ld - h) \big) (\al^h (t, \ld) - \al^h (t, \ld - h)) .
\end{cases}
\label{toda_abh}
\end{align}

\noi
The linear part of the system \eqref{toda_abh} suggests that we may define the variable $\gamma^h : \R \times h \Z \to \R$ as
\begin{align}
\gamma^h (t, \ld) := 
\begin{cases}
    \al^h ( t, \frac{\ld}{2} ) & \text{if $\frac{\ld}{h}$ is even} \vspace{5pt} \\
    \be^h ( t, \frac{\ld + h}{2} ) & \text{if $\frac{\ld}{h}$ is odd}
\end{cases}
\label{defgh}
\end{align}

\noi
given $t \in \R$ and $\ld \in h \Z$.
Then, the scaled Toda lattice \eqref{toda_abh} is equivalent to 
\begin{align}
    \dt \gamma^h (t, \ld) = - 3 h^{-3} ( \gamma^h (t, \ld + h) - \gamma^h (t, \ld - h) ) - \mathcal{N}^h [\gamma^h] (t, \ld) ,
\label{todah}
\end{align}

\noi
where the scaled nonlinearity $\mathcal{N}^h [\gamma^h]$ is given by
\begin{align}
&\mathcal{N}^h [\gamma^h] (t, \ld) \notag \\
&=
\begin{cases}
    3 h^{-1} \gamma^h (t, \ld) ( \gamma^h (t, \ld + h) - \gamma^h (t, \ld - h) ) & \text{if $\dfrac{\ld}{h}$ is even} \vspace{5pt} \\
    \dfrac 32 h^{-1} (\gamma^h (t, \ld + h) + \gamma^h (t, \ld - h)) (\gamma^h (t, \ld + h) - \gamma^h (t, \ld - h)) & \text{if $\dfrac{\ld}{h}$ is odd} .
\end{cases}  
\label{defNh}
\end{align}



For any function $f^h$ defined on $h \Z$, we extend it to the continuum $\R$ via
\begin{align}
    \mathcal{E} f (x) := \frac{1}{2 \pi} \int_{- \frac{\pi}{h}}^{\frac{\pi}{h}} (\F^h f) (t, \xi) e^{i \xi x} d \xi ,
\label{extend}
\end{align}

\noi
where $\F^h$ denotes the discrete Fourier transform defined in \eqref{defFh} below. This extension is inspired by the work \cite{KOVW1, KOVW} by Killip, Ouyang, Vi\c{s}an, and Wu on the continuum limits of the Ablowitz-Ladik system. Using \eqref{extend}, we can extend $\gamma^h$ defined in \eqref{defgh} to be a function $\mathcal{E} \gamma^h$ defined on $\R$. Similar to \cite{KOVW} on the modified KdV limit of the Ablowitz-Ladik system, in order to obtain the KdV equation in the continuum limit, we remove the leading order translation associated with the scaled Toda lattice \eqref{todah} by defining
\begin{align}
    u^h (t, x) := e^{6 h^{-2} t \dx} \mathcal{E} \gamma^h (t, x) = \mathcal{E} \gamma^h (t, x + 6 h^{-2} t)
\label{defuh}
\end{align}

\noi
given $t \in \R$ and $x \in h \Z$. This is required due to the fact that the symbol for the linear flow of \eqref{todah} is $e^{- 6 i h^{-3} t \sin (h \xi)}$ and we have the expansion $h^{-3} \sin (h \xi) = h^{-2} \xi - \frac{\xi^3}{6} + O(h^2 \xi^5)$, so that the translation removes the diverging $h^{-2} \xi$ term. The equation for $u^h$ then exhibits cubic dispersion corresponding to the Airy flow, which, combined with the quadratic nonlinearity in \eqref{defNh}, provides the heuristic of the continuum limit of the Toda lattice to the KdV equation. 

Before moving on, we mention that well-posedness problem of the KdV equation \eqref{KdV} has been studied extensively; see \cite{BS, KPV, Bour93k, KPV96, CCT, CKSTT03, Guo, Kish, KV} and the references therein. In particular, the unconditional uniqueness of the KdV equation in $L^2 (\R)$ (and so in $H^1 (\R)$) was established in \cite{Zhou}, which we will use in stating our main theorem below.

We are now ready to state the first main result in this paper. For the definitions of the function spaces on $h \Z$, see Subsection~\ref{SUB:space} below.

\begin{theorem}[Continuum limit of Toda to KdV: Flaschka's form]
\label{THM:conv}   
Given $0 < h \leq 1$, let $u_0 \in H^1 (\R)$ and $\gamma_0^h \in H^1 (h \Z)$. Suppose that there exist constants $C_1, C_2 > 0$ independent of $h$ such that 
\begin{align}
    \| \gamma_0^h \|_{H^1 (h \Z)} + \| u_0 \|_{H^1 (\R)} \leq C_1 \quad \text{and} \quad \| \mathcal{E} \gamma_0^h - u_0 \|_{L^2 (\R)} \leq C_2 h,
\label{u0cond}
\end{align}

\noi
where $\mathcal{E}$ is the extension operator defined in \eqref{extend}. We also suppose that
\begin{align}
    h^{\frac 32} \| \gamma_0^h \|_{L^2 (h \Z)} \leq \frac 14 .
\label{h_cond}
\end{align}

\noi
Then, the following statements hold.

\smallskip \noi
\textup{(i)} There exists a unique solution $\gamma^h \in C(\R; H^1 (h \Z))$ to the scaled Toda lattice \eqref{todah} with initial data $\gamma^h |_{t = 0} = \gamma_0^h$.

\smallskip \noi
\textup{(ii)} Let $u \in C (\R; H^1 (\R))$ be the unique global-in-time solution to the KdV equation \eqref{KdV} with initial data $u|_{t = 0} = u_0$.
Let $u^h$ be defined in \eqref{defuh} with $\gamma^h$ being given by part \textup{(i)}. Then, for any $T > 0$, there exist some constants $C, C' > 0$ depending on $C_1$ and $C_2$ such that
\begin{align*}
    \| u^h - u \|_{C ([-T, T] ; L^2 (\R))} \leq C  e^{C' T} h^{\frac 25}.
\end{align*}
\end{theorem}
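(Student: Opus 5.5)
We prove part (i) first, then part (ii) by combining a short-time convergence estimate with uniform-in-$h$ a priori bounds coming from the conservation laws of the Toda lattice.

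\textbf{Part (i).} For fixed $h$, the right-hand side of \eqref{todah} is a locally Lipschitz map on $H^1(h\Z)$. Indeed, $H^1(h\Z)\hookrightarrow \ell^\infty$ with an $h$-dependent constant, and the difference operators are bounded. Picard iteration therefore gives a unique local solution, and it suffices to produce an a priori bound on $\|\gamma^h(t)\|_{H^1(h\Z)}$. For this we use the Toda conservation laws written in Flaschka's variables. The first is the mass, essentially $\sum_n (\al(n)^2/2\text{-type corrections}+\be(n)^2)$, coming from $\mathrm{tr}(L^2)$ of the Lax matrix. The second is the energy, the next trace $\mathrm{tr}(L^4)$ with lower-order traces subtracted. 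After scaling, both should take the form
\[
M^h[\gamma^h]=\|\gamma^h\|_{L^2(h\Z)}^2+O\big(h^{2}\textstyle\sum h|\gamma^h|^3\big), \qquad E^h[\gamma^h]=\|\partial_h\gamma^h\|_{L^2}^2-c\,\textstyle\sum h(\gamma^h)^3+\dots
\]
One complication is that $1+\al>0$ is required for the Lax structure. In scaled variables this means $h^2\gamma^h>-1$ on even sites, and the quantity $h^2\|\gamma^h\|_{\ell^\infty}\le h^{3/2}\|\gamma^h\|_{L^2}$ is controlled by \eqref{h_cond}. A bootstrap on the conserved mass keeps $h^{3/2}\|\gamma^h(t)\|_{L^2}\le \tfrac12$ for all time. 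This in turn keeps $|\al|\le h^{-2}\cdot h^{2}\cdot\tfrac12$ small, so the mass is coercive and equivalent to $\|\gamma^h\|_{L^2}^2$. Coercivity of the energy then follows from a discrete Gagliardo--Nirenberg estimate, $\sum h|\gamma|^3\les \|\gamma\|_{L^2}^{5/2}\|\partial_h\gamma\|_{L^2}^{1/2}$, giving $\sup_t\|\gamma^h(t)\|_{H^1}\le C(C_1)$ uniformly in $h$. This yields global existence.

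\textbf{Part (ii).} We write the equation for $u^h$ in Fourier variables. The linear symbol is $-6h^{-3}\sin(h\xi)+6h^{-2}\xi=\xi^3+O(h^2\xi^5)$ on $|\xi|\le\pi/h$. The nonlinearity \eqref{defNh} is, after extension, $3\partial_x (u^h)^2$ plus errors. These errors come from two sources: the difference quotient $(\gamma(\ld+h)-\gamma(\ld-h))/(2h)$ differs from $\dx$, and the even/odd site structure, i.e.\ the parts at frequency near $\pi/h$, must be shown negligible. The even/odd asymmetry produces components that oscillate like $(-1)^{\ld/h}$. These are high-frequency and nonresonant, since the phase $e^{\pm 6ih^{-3}t\cdot\ldots}$ is large, so we handle them by a normal form or integration by parts in time, which gains a factor of $h$. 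Using the uniform $H^1$ bound, we estimate the difference $w=u^h-u$ in $L^2$ on unit time intervals. We split at frequency $N=h^{-a}$. For low frequencies we use local smoothing / Strichartz (Kato smoothing and maximal function estimates for Airy, as in the KdV $L^2$ theory), with the symbol error costing $h^2N^5$ against the $H^1$ bound. For high frequencies we use $\|P_{>N}(\cdot)\|_{L^2}\le N^{-1}C$. Choosing $a$ optimally, the balance between $h^2 N^{5}\cdot N^{-1}$-type errors and $N^{-1}$ gives a power such as $h^{2/5}$. The result is an estimate of the form $\|w(t)\|_{L^2}\le e^{C'|t-t_0|}(\|w(t_0)\|_{L^2}+Ch^{2/5})$ on intervals of length one, with constants depending only on $C_1$ (because the $H^1$ norms of both $u^h$ and $u$ are uniformly bounded for all time). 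Iterating over $[-T,T]$ gives $Ce^{C'T}h^{2/5}$. The initial error $C_2h$ is absorbed, and uniqueness in $H^1$ from \cite{Zhou} identifies $u$.

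\textbf{Main obstacle.} The hardest step is the $L^2$ energy estimate for $w$. The term $\int w\,\partial_x(u w)$ loses a derivative unless it is integrated by parts. On the lattice this symmetrisation is only approximate, because of the even/odd nonlinearity and the discrete product rule. We will therefore handle this step through a modified energy, $\|w\|_{L^2}^2$ plus a cubic correction, together with a careful commutator estimate for $\mathcal{E}$ acting on products (where $\mathcal{E}(fg)-\mathcal{E}f\,\mathcal{E}g$ is supported near frequency $\pm\pi/h$ modulo aliasing), controlled via the $H^1$ bound.
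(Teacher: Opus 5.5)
\textbf{Part (i).} This is essentially the paper's argument: the mass comes from $\mathrm{Tr}(L^2)$ corrected by the conserved quantity $\sum_n \ln(1+\al(n))$, then the energy, then the discrete Gagliardo--Nirenberg inequality. There is one slip. The energy has to be built from the odd trace $\mathrm{Tr}(L^3)$ (the paper takes $-\frac13\mathrm{Tr}(L^3)+\frac12\mathrm{Tr}(L)+\frac12 M$), not from $\mathrm{Tr}(L^4)$. Conjugating by $\mathrm{diag}((-1)^n)$ shows that $L(\al,-\be)$ is unitarily equivalent to $-L(\al,\be)$. So even traces are even in $\be$, and they cannot produce the cross terms $-\frac12\al(n)\be(n)-\frac12\al(n)\be(n+1)$ of $\frac14\sum_n|\gamma(n+1)-\gamma(n)|^2$, nor the cubic term in $\be$.

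\textbf{Part (ii): the genuine gap.} You claim the $L^2$ estimate for $w=u^h-u$ closes on unit intervals with constants depending only on the uniform $H^1$ bounds. At $H^1$ regularity it does not.
\begin{itemize}
\item After integrating by parts, the main nonlinear term is a multiple of $\int\dx(u+u^h)\,w^2\,dx$. Gronwall then needs $\int_I\|\dx(u+u^h)\|_{L^\infty_x}dt<\infty$.
\item From $H^1$ alone you only get $|\int\dx(u+u^h)\,w^2|\lesssim C(C_1)\|w\|_{L^2}^{3/2}$. This yields $\|w(t)\|_{L^2}^{1/2}\le\|w(0)\|_{L^2}^{1/2}+Ct$, which propagates no smallness.
\item A cubic correction to the energy does not change this. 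The derivative loss you single out is not the real obstruction.
\end{itemize}

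\textbf{What is missing.} You need uniform-in-$h$ space-time control of the lattice solution on unit intervals, and the analogous bounds for $u$:
\begin{itemize}
\item Strichartz: $\|\dh u^h\|_{L^4_tL^\infty_x}$;
\item maximal function: $\|P_{\frac{\pi}{4h}}u^h\|_{L^2_xL^\infty_t}$;
\item local smoothing: $\|\dh\dx u^h\|_{L^\infty_xL^2_t}$ and $\|\dx^2u^h\|_{L^\infty_xL^2_t}$.
\end{itemize}
The paper obtains these from kernel estimates for $e^{-6h^{-2}t(\dh-\dx)}$ that are uniform in $h$. This requires a separate treatment of the degenerate dispersion of $\sin(h\xi)$ at $\xi=\pm\frac{\pi}{h}$ (Lemma~\ref{LEM:KN}, Lemma~\ref{LEM:lin_d}). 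It follows with an $X^{1,b,h}$ bootstrap on intervals of length $T_0(C_1)$, restarted using the conserved $H^1$ bound (Proposition~\ref{PROP:uhbdd}). It is this step, not the $H^1$ bound by itself, that makes the constants depend only on $C_1$.

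\textbf{With these bounds, the rest is simpler than you expect.}
\begin{itemize}
\item The even/odd error is directly small: $\|\mathcal{E}\wt\gamma^h-\mathcal{E}\gamma^h\|_{L^2}\lesssim h\|\gamma^h\|_{\dot H^1(h\Z)}$.
\item The aliasing term is $O(h)\,\|\mathcal{E}\wt\gamma^h\,\mathcal{E}(\dh\gamma^h)\|_{L^2_tH^1_x}$ by Lemmas~\ref{LEM:Eprod} and~\ref{LEM:Ecos}. No normal form is needed.
\item The paper runs the difference in Duhamel form in $S_I=C_tL^2_x\cap\{\dx u\in L^\infty_xL^2_t\}$. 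The derivative in $u\,\dx(u^h-u)$ is absorbed by $\|u\|_{L^2_xL^\infty_t}\|\dx(u^h-u)\|_{L^\infty_xL^2_t}$.
\item The symbol mismatch is handled by $\min(|t|h^2|\xi|^5,1)\le|t|^{\frac15}h^{\frac25}|\xi|$ together with local smoothing for the interpolated flows (Lemma~\ref{LEM:conv0}). This is the rigorous version of your $h^{2/5}$ balancing.
\end{itemize}
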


Our assumption \eqref{u0cond} on the initial data is rather general. In particular, given $0 < h \leq 1$ and $\gamma_0^h \in H^1 (h \Z)$, one may take its extension on $\R$ and set $u_0 = \mathcal{E} \gamma_0^h$ as the initial data for the KdV equation. As another example, one may proceed as in \cite{KOVW} by choosing
\begin{align}
    \gamma_0^h (\ld) = P_{\frac{\pi}{h}} u_0 (\ld)
\label{g0h_ex}
\end{align}

\noi
where $P_{\frac{\pi}{h}}$ is the frequency cutoff onto frequencies $\{ \xi \in \R: |\xi| \leq \frac{\pi}{h} \}$. In both settings, one can easily check that $\gamma_0^h$ and $u_0$ fulfill the assumption \eqref{u0cond} (thanks to Lemma~\ref{LEM:Hsbdd} below). Moreover, due to the $L^2$ difference bound in \eqref{u0cond} (and Lemma~\ref{LEM:Hsbdd}), the assumption \eqref{h_cond} can always be satisfied with a sufficiently small $h > 0$.

Our Theorem~\ref{THM:conv} shows that, for any $T > 0$, $u^h$ converges to $u$ in $C ([-T, T]; L^2 (\R))$ with rate $h^{\frac 25}$ as $h \to 0$. In particular, we obtain global-in-time convergence of the dynamics as compared to the local-in-time convergence in \cite{HKY} (for the FPU system). The convergence rate $h^{\frac 25}$ matches the rate obtained in \cite{HKY}. In fact, this rate is optimal in view of the convergence of the linear flows; see Lemma~\ref{LEM:conv0} below.

Note that the solution of the form \eqref{defgh} for the Toda lattice is considered for the first time in this work. It turns out that this form is convenient not only for showing its continuum limit to the KdV equation, but also it allows us to build the discrete analogue of mass and energy; see Subsection~\ref{SUB:cons} and Remark~\ref{RMK:cons} (see also Remark~\ref{RMK:toda} below for a comment on higher order conserved quantities). The discrete mass and energy play a central role in obtaining uniform-in-$h$ $H^1$ a priori bounds for the solution to the scaled Toda lattice \eqref{todah}. It is here that we need the technical condition \eqref{h_cond} as stated in Theorem~\ref{THM:conv}, which is guaranteed by \eqref{u0cond} (and Lemma~\ref{LEM:Hsbdd}) with $h = h (C_2, \| u_0 \|_{L^2 (\R)}) > 0$ sufficiently small and is used to deal with the logarithmic terms in the discrete mass and energy; see Subsection~\ref{SUB:H1}. The a priori bounds then allow us to show global-in-time convergence of the dynamics in Subsection~\ref{SUB:bdd} and Subsection~\ref{SUB:dyn}. 

Our method for establishing the convergence is based on linear estimates from Fourier analytic tools and resembles that of \cite{KOVW} on the modified KdV limit of the Ablowitz-Ladik system. Nevertheless, we are able to show an explicit convergence rate by using the idea of local smoothing estimate for the discrete system established in \cite{HKY}. 
However, instead of the extension \eqref{extend} via the inverse Fourier transform, the authors in \cite{HKY} used a linear interpolation procedure; see also \cite{Lad, SSB, KLS, HY1, HY} on the work on other discrete models based on the linear interpolation procedure. We do not claim that the procedure based on the Fourier extension \eqref{extend} is better than the linear interpolation procedure, but we do find the former convenient for directly invoking many Fourier analytic tools on the Euclidean space. Moreover, we find an interesting and convenient identity on the Fourier extension of a product of two discrete functions; see Lemma~\ref{LEM:Eprod} below. See also Remark~\ref{RMK:HKY} below for a further comparison of this paper with \cite{HKY}.

\medskip
As a consequence of the continuum limit in Theorem~\ref{THM:conv}, we obtain a long-wave limit result for the Toda lattice \eqref{todah}. 

\begin{theorem}[Long-wave limit of Toda to KdV: Flaschka's form]
\label{THM:lw}
Let $0 < h \leq 1$, $u_0 \in H^1 (\R)$, and $\gamma_0^h \in H^1 (h \Z)$ satisfying the assumptions \eqref{u0cond} and \eqref{h_cond} with constants $C_1 , C_2 > 0$. 
Let $\gamma^h \in C(\R; H^1 (h \Z))$ be the unique global-in-time solution to the scaled Toda lattice \eqref{todah} with initial data $\gamma^h |_{t = 0} = \gamma_0^h$, which is guaranteed by Theorem~\ref{THM:conv}~\textup{(i)}. Let $u \in C (\R; H^1 (\R))$ be the unique global-in-time solution to the KdV equation \eqref{KdV} with initial data $u|_{t = 0} = u_0$. Then, for any $T > 0$, there exist some constants $C, C' > 0$ depending on $C_1$, $C_2$, and $\| u_0 \|_{H^1 (\R)}$ such that
\begin{align*}
    \big\| \gamma^h (t, \ld) - u (t, \ld - 6 h^{-2} t) \big\|_{C ([-T, T] ; L^2 ( h \Z ))} \leq C e^{C' T} h^{\frac 25} .
\end{align*}
\end{theorem}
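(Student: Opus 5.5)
Theorem~\ref{THM:lw} follows from Theorem~\ref{THM:conv}~(ii) by restricting the $L^2(\R)$ convergence to the lattice $h\Z$. The lattice values of $\mathcal{E} f$ are exactly $f$, and $\mathcal{E} f$ is band-limited to $[-\pi/h,\pi/h]$. Plancherel on $h\Z$ and on $\R$ then gives $\|f\|_{L^2(h\Z)} = \|\mathcal{E} f\|_{L^2(\R)}$, with the normalization of the discrete Fourier transform \eqref{defFh}.

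Fix $t\in[-T,T]$. By \eqref{defuh}, for $\ld \in h\Z$ we have $\gamma^h(t,\ld) = \mathcal{E}\gamma^h(t,\ld) = u^h(t, \ld - 6h^{-2}t)$. Hence
\[
\gamma^h(t,\ld) - u(t,\ld-6h^{-2}t) = \big(u^h(t,\cdot) - u(t,\cdot)\big)(\ld - 6h^{-2}t).
\]
So we must bound the $\ell^2(h\Z)$ norm of samples of $w := u^h(t) - u(t)$ on the shifted lattice $h\Z - 6h^{-2}t$. Split $u = P_{\frac{\pi}{h}} u + P_{>\frac{\pi}{h}} u$.

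For the low-frequency part, $u^h(t) - P_{\frac{\pi}{h}}u(t)$ is band-limited to $[-\pi/h,\pi/h]$, since translation preserves Fourier support. Sampling a band-limited function on any translate of $h\Z$ is an $L^2$ isometry, because translation is unitary and commutes with the band-limitation. Thus this piece contributes at most
\[
\|u^h(t) - P_{\frac{\pi}{h}}u(t)\|_{L^2(\R)} \le \|u^h(t) - u(t)\|_{L^2(\R)} \le C e^{C'T} h^{\frac25},
\]
by Theorem~\ref{THM:conv}~(ii). The first inequality holds because $P_{\frac{\pi}{h}}$ is an orthogonal projection and $u^h(t)$ lies in its range.

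The main obstacle is the high-frequency tail $P_{>\frac{\pi}{h}}u(t)$. Sampling it on a lattice is not controlled by its $L^2$ norm alone, so it needs $H^1$ control. I would use the one-dimensional Sobolev-type sampling bound
\[
\sum_{\ld\in h\Z} h\,|g(\ld + a)|^2 \les \|g\|_{L^2}^2 + h^2\|\dx g\|_{L^2}^2,
\]
uniformly in $a$. This follows from the fundamental theorem of calculus on each cell $[\ld+a,\ld+a+h]$: there $|g(\ld+a)|^2 \le 2h^{-1}\int |g|^2 + 2h\int|g'|^2$. Applied to $g = P_{>\frac{\pi}{h}}u(t)$, with $\|g\|_{L^2} \les h\|u(t)\|_{H^1}$, the tail contributes $\les h\|u(t)\|_{H^1(\R)}$.

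Finally, the $H^1$ norm of the KdV solution is uniformly bounded in time by conservation of mass and energy for \eqref{KdV}, in terms of $\|u_0\|_{H^1(\R)}$. This is why the constants in Theorem~\ref{THM:lw} depend on $\|u_0\|_{H^1}$ as well as on $C_1, C_2$. Since $h \le h^{\frac25}$ for $h\le1$, combining the two pieces and taking the supremum over $t\in[-T,T]$ yields the claim. Continuity in time of the difference follows from $\gamma^h\in C(\R;H^1(h\Z))$, $u\in C(\R;H^1(\R))$, and the same sampling bound.
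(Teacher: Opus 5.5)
Your proof is correct. Its skeleton matches the paper's: both deduce the lattice estimate from Theorem~\ref{THM:conv}~(ii) via Plancherel on $h\Z$, and both pay an extra $O(h\sup_t\|u(t)\|_{H^1})$ for the frequencies $|\xi|>\pi/h$.

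The two arguments differ in how they handle the samples $w^h(t,\ld)=u(t,\ld-6h^{-2}t)$ of the KdV solution.

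\textbf{The paper's route.} It applies Poisson summation to $w^h$ directly and writes $\F^h w^h(t,\xi)=e^{-6ih^{-2}t\xi}\ft u(t,\xi)$. It then pays the tail term $\|\ind_{[-\frac{\pi}{h},\frac{\pi}{h})^c}\ft u\|_{L^2}\les h\|u\|_{H^1}$. Strictly speaking, Poisson summation gives the periodization
\[
\sum_{k\in\Z}e^{-6ih^{-2}t(\xi+2\pi k/h)}\,\ft u(t,\xi+2\pi k/h),
\]
and the aliased terms $k\neq 0$ are silently dropped there. They are harmless: for $|\xi|\le\pi/h$ one has $\sum_{k\neq0}|\xi+2\pi k/h|^{-2}\les h^2$. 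Cauchy--Schwarz then bounds their $L^2([-\frac{\pi}{h},\frac{\pi}{h}))$ norm by $\les h\|\dx u\|_{L^2}$, so the paper's conclusion stands once this remark is added.

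\textbf{Your route.} Splitting $u=P_{\frac{\pi}{h}}u+P_{\frac{\pi}{h}}^{\perp}u$ avoids the issue. The band-limited piece, together with $u^h$, is sampled isometrically on any translate of $h\Z$. The lattice samples of the tail consist purely of aliased contributions. Your physical-space inequality $h\sum_{\ld}|g(\ld+a)|^2\le 2\|g\|_{L^2}^2+2h^2\|\dx g\|_{L^2}^2$ controls all of them at once, uniformly in the shift $a$. The same inequality also gives continuity in time of the difference in $L^2(h\Z)$, which the paper does not address explicitly.

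So your argument is slightly more elementary and more careful. The paper's is shorter but needs the aliasing remark to be fully rigorous.
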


We present the proof of Theorem~\ref{THM:lw} in Subsection~\ref{SUB:lw1}. By defining
\begin{align*}
    \gamma (t, n) &:= h^2 \gamma^h (\tfrac{h^3}{3} t, h n), 
\end{align*}

\noi
we scale back to the usual Toda lattice on $\R \times \Z$:
\begin{align}
    \dt \gamma (t, n) = - ( \gamma (t, n + 1) - \gamma (t, n - 1) ) - \mathcal{N} [\gamma] (t, n) ,
\label{todan}
\end{align}

\noi
where the nonlinearity $\mathcal{N} [\gamma]$ is given by
\begin{align}
    \mathcal{N} [\gamma] (t, n) = 
    \begin{cases}
        \gamma (t, n) (\gamma (t, n + 1) - \gamma (t, n - 1)) & \text{if } n \text{ is even} \vspace{5pt} \\
        \frac 12 (\gamma (t, n + 1) + \gamma (t, n - 1)) (\gamma (t, n + 1) - \gamma (t, n - 1))  & \text{if } n \text{ is odd}.
    \end{cases}
\label{defNN}
\end{align}

\noi
Then, from Theorem~\ref{THM:lw}, we obtain the following estimate for $\gamma$:
\begin{align*}
    \big\| \gamma (t, n) - h^2 u (\tfrac{h^3}{3} t, hn - 2 h t) \big\|_{C ([-T, T] ; \ell^2 ( \Z ))} \leq C e^{\frac 13C' h^3 T} h^{\frac 32 + \frac 25} .
\end{align*}

\begin{remark} \rm
\label{RMK:gm-}
We also consider the long-wave limit on the other direction of propagation. Let us define $(\gamma^h)^- : \R \times h \Z \to \R$ by
\begin{align}
(\gamma^h)^- (t, \ld) := 
\begin{cases}
    \al^h ( t, \frac{\ld}{2} ) & \text{if $\frac{\ld}{h}$ is even} \vspace{5pt} \\
    - \be^h ( t, \frac{\ld + h}{2} ) & \text{if $\frac{\ld}{h}$ is odd} 
\end{cases}
\label{defgh-}
\end{align}

\noi
given $t \in \R$ and $\ld \in h \Z$. Then, we see that $(\gamma^h)^-$ satisfies
\begin{align}
    \dt (\gamma^h)^- (t, \ld) = 3 h^{-3} \big( (\gamma^h)^- (t, \ld + h) - (\gamma^h)^- (t, \ld - h) \big) + \mathcal{N}^h [(\gamma^h)^-] (t, \ld),
\label{todah-}
\end{align}

\noi
where the nonlinearity $\mathcal{N}^h$ is defined in \eqref{defNh}. Under the same assumptions of initial data in Theorem~\ref{THM:conv}, we know from minor modifications of Theorem~\ref{THM:conv} that $(u^h)^- (t, x) := \mathcal{E} (\gamma^h)^- (t, x - 6 h^{-2}t)$ converges to $u^-$ satisfying
\begin{align}
    \dt u^- - \dx^3 u^- - 6 u^- \dx u^- = 0 . 
\label{KdV-}
\end{align}

\noi
Then, similar to Theorem~\ref{THM:lw}, for any $T > 0$, there exist some constants $C, C' > 0$ such that
\begin{align*}
    \big\| (\gamma^h)^- (t, \ld) - u^- (t, \ld + 6 h^{-2} t) \big\|_{C ([-T, T] ; L^2 ( h \Z ))} \leq C e^{C' T} h^{\frac 25} .
\end{align*}

\noi
Moreover, by scaling back to $\R \times \Z$, we see that
\begin{align*}
    \gamma^- (t, n) := h^2 (\gamma^h)^- ( \tfrac{h^3}{3} t, h n ) 
\end{align*}

\noi
satisfies
\begin{align*}
    \dt \gamma^- (t, n) = \gamma^- (t, n + 1) - \gamma^- (t, n - 1) + \mathcal{N} [\gamma^-] (t, n)
\end{align*}

\noi
with the nonlinearity $\mathcal{N}[\gamma^-]$ defined in \eqref{defNN}, and that we have the following estimate for $\gamma^-$:
\begin{align*}
    \big\| \gamma^- (t, n) - h^2 u^- (\tfrac{h^3}{3} t, hn + 2 h t) \big\|_{C ([-T, T] ; \ell^2 ( \Z ))} \leq C e^{\frac 13C' h^3 T} h^{\frac 32 + \frac 25} .
\end{align*}

We also point out that the Toda lattice \eqref{todan} is invariant under the symmetry $\gamma (t, n) \mapsto (-1)^{n + 1} \gamma (t, - n)$. Namely, if $\gamma$ is a solution to \eqref{todan}, then $S \gamma (t, n) := (-1)^{n+1} \gamma (t, - n)$ is also a solution to \eqref{todan}. It may be possible use this invariance to investigate waves going to both directions.
\end{remark}

\medskip
We now turn our attention back to the original form of the Toda lattice \eqref{toda_qp} and consider its continuum limits. For this purpose, we define
\begin{align*}
    r (t, n) = - \frac 12 (q (t, n + 1) - q (t, n)),
\end{align*}

\noi
so that \eqref{toda_qp} is equivalent to
\begin{align}
    \dt^2 r (t, n) = \frac 12 \Dl_1 (e^{2r}) (t, n) ,
\label{eqr0}
\end{align}

\noi
where $\Dl_1 f \deff f(\cdot + 1) + f (\cdot - 1) - 2 f$. We then subtract the linear term on the right-hand-side of \eqref{eqr0} to obtain the following discrete nonlinear wave equation: 
\begin{align}
    \dt^2 r (t, n) - \Dl_1 r (t, n) = \frac 12 \Dl_1 (e^{2r} - 2r) (t, n)
\label{eqr1}
\end{align}

\noi
for any $n \in \Z$. Similar to \eqref{defgh}, we define the scaled version of $r$ by performing the KdV scaling:
\begin{align*}
    r^h (t, \ld) := h^{-2} r (3 h^{-3} t, h^{-1} \ld),
\end{align*}

\noi
which then satisfies the following scaled version of \eqref{eqr1}:
\begin{align}
    \dt^2 r^h (t, \ld) - 9 h^{-4} \Dl_h r^h (t, \ld) = \frac 92 h^{-6} \Dl_h (e^{2 h^2 r^h} - 2 h^2 r^h) (t, \ld)
\label{eqr2}
\end{align}

\noi
for any $\ld \in h \Z$, where $\Dl_h$ is a discrete Laplacian on $h \Z$ defined by
\begin{align*}
    \Dl_h f^h \deff h^{-2} \big( f^h (\cdot + h) + f^h (\cdot - h) - 2 f^h (\cdot) \big) .
\end{align*}



We now state the following theorem regarding the long-wave limit of $r^h$. Below, we denote $\dh^+$ by $\dh^+ f^h (\ld) = h^{-1} (f^h (\ld + h) - f^h (\ld))$.

\begin{theorem}[Long-wave limit of Toda to KdV: original form]
\label{THM:long-wave}
Let $u_0 \in H^1 (\R)$ and $u_0^h \in H^1 (h \Z)$ with $0 < h \leq 1$. Suppose that for any $0 < h \leq 1$, there exist constants $C_1, C_2 > 0$ such that 
\begin{align}
    \| u_0^h \|_{H^1 (h \Z)} \leq C_1 \quad \text{and} \quad \| \mathcal{E} u_0^h - u_0 \|_{L^2 (\R)} \leq C_2 h,
\label{u0cond2}
\end{align}

\noi
where $\mathcal{E}$ is the extension operator defined in \eqref{extend}. We define functions $r_{\textup{even}}^h$ and $r_{\textup{odd}}^h$ on $h \Z$ by 
\begin{align}
    r_{\textup{even}}^h (\ld) := u_0^h (2 \ld) \quad \text{and} \quad r_{\textup{odd}}^h (\ld) := u_0^h (2 \ld - h)
\label{r0cond}
\end{align}

\noi
for any $\ld \in h \Z$. Then, there exists $h_0 = h_0 ( C_2, \| u_0 \|_{L^2 (\R)} ) > 0$ sufficiently small such that for any $0 < h \leq h_0$, the following statements hold. 

\smallskip \noi
\textup{(i)} There exists a unique solution $r^h \in C (\R; H^1 (h\Z))$ to the scaled Toda lattice \eqref{eqr2} with initial data $(r^h, \dt r^h) |_{t = 0} = (r_{\textup{even}}^h, 3 h^{- 2} \dh^+ r_{\textup{odd}}^h)$. Moreover, there exists a constant $\wt C > 0$ depending on $C_1$, $C_2$, and $\| u_0 \|_{L^2 (\R)}$ such that
\begin{align*}
    \| r^h \|_{L^\infty (\R ; H^1 ( h \Z ))} \leq \wt C.
\end{align*}

\smallskip \noi
\textup{(ii)} Let $u \in C (\R; H^1 (\R))$ be the unique global-in-time solutions to the KdV equation \eqref{KdV} with initial data $u_0$. Then, for any $T \geq 1$, there exist some constants $C, C' > 0$ depending on $C_1$, $C_2$, and $\| u_0 \|_{H^1 (\R)}$ such that
\begin{align*}
    \big\| r^h (t, \ld) - u (t, 2 \ld - 6 h^{-2} t) \big\|_{C ([-T, T] ; L^2 ( h \Z ))} \leq C e^{C' T} h^{\frac 25} .
\end{align*}
\end{theorem}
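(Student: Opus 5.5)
The plan is to reduce Theorem~\ref{THM:long-wave} to Theorems~\ref{THM:conv} and \ref{THM:lw} by running the change of variables from $(q,p)$ to Flaschka's variables and then to $\gamma^h$ backwards. Set $\al = e^{2r} - 1$ and $\be = \frac 12 \dt q$. Then $\dt r(t,n) = -\frac12(p(t,n+1)-p(t,n)) = -(\be(t,n+1)-\be(t,n))$.

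\emph{Step 1: construct $\gamma_0^h$.} I choose $\gamma_0^h := u_0^h$ and define $\al^h,\be^h$ from it through \eqref{defgh}. Then $\al^h(0,\ld) = u_0^h(2\ld)$ and $\be^h(0,\ld) = u_0^h(2\ld - h)$. The scaled relation reads $\al^h = h^{-2}(e^{2h^2 r^h}-1)$, so I instead impose $r^h(0) = r^h_{\rm even}$ and $\dt r^h(0)=3h^{-2}\dh^+ r^h_{\rm odd}$. This prescribes $\be^h(0)=r^h_{\rm odd}$ up to a constant, which is fixed by $\ell^2$ decay. It also forces
\[
\al^h(0) = h^{-2}\big(e^{2h^2 r^h_{\rm even}}-1\big) = r^h_{\rm even}\cdot\big(2+O(h^2 r^h_{\rm even})\big).
\]
The factor $2$ must be absorbed; it reflects the $2\ld$ in the conclusion (a spatial rescaling by $2$ in the variable $\ld$). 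So the honest choice is to take $\gamma_0^h$ equal to $h^{-2}(e^{2h^2 u_0^h}-1)$ on even sites and $u_0^h$ on odd sites, with the appropriate factor bookkeeping.

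I then check \eqref{u0cond} and \eqref{h_cond} for this $\gamma_0^h$. The discrete Sobolev embedding $\|h^2 u_0^h\|_{L^\infty}\les h^{3/2}\|u_0^h\|_{H^1}$ shows that the nonlinear correction $h^{-2}(e^{2h^2 u}-1)-2u$ is $O(h^2|u|^2)$, which is $O(h^{3/2})$ in $H^1$ and in particular $O(h)$ in $L^2$. Lemma~\ref{LEM:Hsbdd} then transfers the $L^2$ closeness of $\mathcal E u_0^h$ to $u_0$ into closeness of the extension of $\gamma_0^h$. Condition \eqref{h_cond} holds for $h\le h_0(C_2,\|u_0\|_{L^2})$, since $\|\gamma_0^h\|_{L^2}\les \|u_0\|_{L^2}+C_2h$.

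\emph{Step 2: global existence and bound.} Theorem~\ref{THM:conv}(i) gives $\gamma^h\in C(\R;H^1(h\Z))$, and the uniform $H^1$ a priori bound from the conserved mass and energy (Subsection~\ref{SUB:H1}) gives $\sup_t\|\gamma^h\|_{H^1}\le \wt C$. Recover
\[
r^h = \tfrac{1}{2h^2}\log\big(1+h^2\al^h\big).
\]
This is well defined because $\|h^2\al^h\|_{L^\infty}\les h^{3/2}\wt C<\frac12$ for small $h$. The bound $\|r^h\|_{H^1}\les\|\al^h\|_{H^1}$ follows since $\log(1+y)/y$ is smooth and bounded near $0$, together with a discrete Leibniz rule for $\dh^+$. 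Uniqueness for \eqref{eqr2} follows from uniqueness in Theorem~\ref{THM:conv}(i) and the invertibility of this map in the small-$L^\infty$ regime. I also verify that $r^h$ solves \eqref{eqr2} by differentiating \eqref{toda_abh}; this is algebra inherited from \eqref{eqr0}.

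\emph{Step 3: convergence.} Theorem~\ref{THM:lw} gives $\|\gamma^h(t,\ld)-u(t,\ld-6h^{-2}t)\|_{L^2(h\Z)}\le Ce^{C'T}h^{2/5}$. I restrict to even sites $\ld=2\mu$; by discrete smoothness of both functions, at the cost of an $h$-error controlled by the $H^1$ norms, this gives an $L^2(h\Z)$ bound for $\al^h(t,\mu)-u(t,2\mu-6h^{-2}t)$. Restricting a difference to a sublattice is harmless only up to a factor of $\sqrt2$ in the norm. Finally, $r^h-\frac12\al^h$ (or $r^h-\al^h$, depending on the normalization fixed in Step 1) is $O(h^2|\al^h|^2)$, hence $O(h^{3/2})$ in $L^2$.

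The main obstacle is the normalization bookkeeping in Step 1. The even sites of $\gamma^h$ must match $r^h$, not $2r^h$, at leading order, and the time-derivative datum must match $\be^h$ exactly through $\dt r^h = -3h^{-2}\dh^+\be^h$ in scaled form. This requires choosing $\be^h(0)=-r^h_{\rm odd}$ or $+r^h_{\rm odd}$ consistently with the sign in \eqref{toda_abh}. I would first carefully redo the scaling of $r\mapsto\al$ to pin down the constants, and only then apply Theorems~\ref{THM:conv} and \ref{THM:lw}.
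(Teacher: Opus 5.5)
There is a genuine gap, and it sits in the normalization you left open: the relation $\al = e^{2r}-1$ is wrong. Since $r(n) = -\frac12(q(n+1)-q(n))$ and $\al(n) = \exp(-\frac12(q(n+1)-q(n)))-1$, one has $\al = e^{r}-1$. The $e^{2r}$ only enters through the force $e^{-(q(n+1)-q(n))}$ in \eqref{eqr0}. After scaling this is the first line of \eqref{abr}, $\al^h = h^{-2}(e^{h^2 r^h}-1) = r^h + O(h^2|r^h|^2)$, with no factor $2$. An amplitude factor could not be ``absorbed'' by the spatial dilation $\ld\mapsto 2\ld$ anyway.

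The data you end up with are $h^{-2}(e^{2h^2u_0^h}-1)\approx 2u_0^h$ on even sites and $u_0^h$ on odd sites, and these violate \eqref{u0cond}. Neighbouring values differ by roughly $u_0^h$, so by \eqref{H1norm} $\|\gamma_0^h\|_{\dot H^1(h\Z)}$ is of order $h^{-1}\|u_0\|_{L^2(\R)}$. Also $\|\mathcal{E}\gamma_0^h - u_0\|_{L^2(\R)}\geq\|\gamma_0^h-u_0^h\|_{L^2(h\Z)} - C_2h$ is of order $\|u_0\|_{L^2(\R)}$, not $O(h)$. Your check controlled $h^{-2}(e^{2h^2u}-1)-2u$, not the distance to $u_0^h$ that \eqref{u0cond} requires. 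Hence Theorems~\ref{THM:conv} and \ref{THM:lw} cannot be applied. Even if they could, your inversion $r^h = \frac{1}{2h^2}\log(1+h^2\al^h)$ would give $r^h\approx\frac12 u$ rather than $u$.

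With $\al = e^r-1$ your plan becomes the paper's proof:
\begin{itemize}
\item take $\gamma_0^h = g_1^h$ from Lemma~\ref{LEM:exp} with $f^h = u_0^h$, i.e.\ $h^{-2}(e^{h^2u_0^h}-1)$ on even sites and $u_0^h$ on odd sites;
\item note it is within $O(h^{3/2})$ of $u_0^h$ in $L^2$ but only $O(h^{1/2})$ in $H^1$, since an even-sites-only correction costs $h^{-1}$ in $\dot H^1$ (your claimed $O(h^{3/2})$ in $H^1$ is too optimistic, but boundedness suffices);
\item apply Theorem~\ref{THM:conv}~(i) and set $r^h = h^{-2}\log(1+h^2\al^h)$;
\item compare $r^h$ with $\al^h$ on even sites via Theorem~\ref{THM:lw}.
\end{itemize}

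Two of your deviations are fine and slightly leaner. First, since $h^2|\al^h|\leq\frac12$, the pointwise bounds $|r^h|\leq 2|\al^h|$, $|\dh^+ r^h|\leq 2|\dh^+\al^h|$ and $|r^h-\al^h|\leq h^2|\al^h|^2$ give the $H^1$ bound and the $O(h^{3/2})$ comparison directly from Propositions~\ref{PROP:L2bdd} and \ref{PROP:H1bdd}. This bypasses Lemma~\ref{LEM:exp} for $r^h$ and the Hamiltonian bound of Proposition~\ref{PROP:L2bddr}. Second, restricting to even sites needs no smoothness, since $\|F(2\,\cdot)\|_{L^2(h\Z)}\leq\|F\|_{L^2(h\Z)}$.

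On the sign you flagged: your relation $\dt r^h = -3h^{-2}\dh^+\be^h$ is what the definitions give. So the datum $3h^{-2}\dh^+ r_{\textup{odd}}^h$ forces $\be^h(0) = -r_{\textup{odd}}^h$, not $+r_{\textup{odd}}^h$ as you wrote. The paper's proof gets $\be^h(0) = r_{\textup{odd}}^h$ only because \eqref{abr} carries a plus sign. A d'Alembert check on the linear part of \eqref{eqr2} confirms this: right-moving waves have $\dt r^h\approx -3h^{-2}\partial_\ld r^h$. So it is the datum $-3h^{-2}\dh^+ r_{\textup{odd}}^h$ that reduces to Theorem~\ref{THM:lw}. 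The datum as stated leads instead to $(\gamma^h)^-$ of Remark~\ref{RMK:gm-} and a left-moving limit. This sign is therefore not a free choice: it must be flipped for the stated conclusion.
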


We present the proof of Theorem~\ref{THM:long-wave} in Subsection~\ref{SUB:lw2}. In fact, we will reduce the problem to the setting of Theorem~\ref{THM:lw} via the following relation:
\begin{align}
\begin{split}
    \al^h (t, \ld) &= h^{-2} \exp \big( h^2 r^h (t, \ld) \big) - h^{-2}, \\
    \be^h (t, \ld) &= \frac 13 h^2 (\dh^+)^{-1} \dt r^h (t, \ld).
\end{split}
\label{abr}
\end{align}

\noi
Once again, by scaling back to the model \eqref{eqr1} on $\R \times \Z$ via
\begin{align*}
    r (t, n) := h^2 r^h (\tfrac{h^3}{3} t, hn),
\end{align*}

\noi
we get from Theorem~\ref{THM:long-wave} that
\begin{align*}
    \big\| r (t, n) - u (\tfrac{h^3}{3} t, 2 hn - 2 h t) \big\|_{C ([-T, T] ; \ell^2 ( \Z ))} \leq C e^{\frac 13 C' h^3 T} h^{\frac 32 + \frac 25} .
\end{align*}

\begin{remark} \rm
Similar to Remark~\ref{RMK:gm-}, we also obtain the long-wave limit for the other direction of propagation. Let $u_0$, $r_{\textup{even}}^h$, and $r_{\textup{odd}}^h$ be as given in Theorem~\ref{THM:long-wave}. Then, similar to Theorem~\ref{THM:long-wave}, there exists a unique solution $(r^h)^- \in C (\R; H^1 (h\Z))$ to the scaled Toda lattice \eqref{eqr2} with initial data $((r^h)^-, \dt (r^h)^-) |_{t = 0} = (r_{\textup{even}}^h, - 3 h^{- 2} \dh^+ r_{\textup{odd}}^h)$. If we let $u^- \in C (\R; H^1 (\R))$ be the unique global-in-time solutions to the KdV equation \eqref{KdV-}, then for any $T \geq 1$, there exist some constants $C, C' > 0$ such that
\begin{align*}
    \big\| (r^h)^- (t, \ld) - u^- (t, 2 \ld + 6 h^{-2} t) \big\|_{C ([-T, T] ; L^2 ( h \Z ))} \leq C e^{C' T} h^{\frac 25} .
\end{align*}

\noi
Moreover, by scaling back to $\R \times \Z$, we see that 
\begin{align*}
    r^- (t, n) : = h^2 (r^h)^- (\tfrac{h^3}{3} t, h n) 
\end{align*}

\noi
satisfies \eqref{eqr1} with scaled initial data, and that we have the following estimate:
\begin{align*}
    \big\| r^- (t, n) - u^- (\tfrac{h^3}{3} t, 2 hn + 2 h t) \big\|_{C ([-T, T] ; \ell^2 ( \Z ))} \leq C e^{\frac 13 C' h^3 T} h^{\frac 32 + \frac 25} .
\end{align*}
\end{remark}

\medskip
We close the introduction by stating several remarks.

\begin{remark} \rm
\label{RMK:HKY}
As mentioned earlier, our approximation procedure of Toda to KdV is different from that of FPU to KdV by Hong-Kwak-Yang in \cite{HKY}. Moreover, in this paper we choose to focus on Flaschka's form of the Toda lattice \eqref{todah} in order to get access to the $H^1$ conservation law, whereas the authors in \cite{HKY} mainly performed the analysis on the form \eqref{eqr2}. The main difference is manifested in the linear flows of the discrete systems. In our setting, we have mentioned above that the symbol for the linear flow of \eqref{todah} is $e^{- 6 i h^{-3} t \sin (h \xi)}$. In \cite{HKY}, the authors split the discrete linear wave flow (i.e. the linear part of \eqref{eqr2}) into two linear flows with symbols $e^{\pm 12 i h^{-3} t \sin (\frac{h \xi}{2})}$ (after rescaling). 
While our procedure appears to be more transparent, it creates additional difficulty in the analysis around the inflection points $\xi = \pm \frac{\pi}{h}$. This is relevant in Subsection~\ref{SUB:lin}, where we address the issue via some technical treatment.


Nevertheless, a slight variant of Theorem~\ref{THM:long-wave}~(i) allows us to extend the local-in-time long-wave limit in the setting of \cite[Theorem~1.2]{HKY} (with $s = 1$) to a global-in-time long-wave limit in the case of the Toda lattice. Indeed, all we need for the global-in-time limit is a uniform-in-$h$ $H^1$ a priori bound for the solution $r^h$ to the equation \eqref{eqr2}. This is essentially achieved in the proof of Theorem~\ref{THM:long-wave}~(i) in Subsection~\ref{SUB:lw2}, where we use Flaschka's variables to transform the model \eqref{eqr2} into \eqref{toda_abh} so that we can use the $H^1$ conservation law for the Toda lattice. In view of the initial condition in \cite[Theorem~1.2]{HKY}, the use of the conservation laws is slightly different, and we need the following additional assumption on the initial data $(r_{\textup{even}}^h, r_{\textup{odd}}^h)$ for \eqref{eqr2}:
\begin{align*}
    h \sum_{\ld \in h \Z} \big( | r_{\textup{even}}^h (\ld) - r_{\textup{odd}}^h (\ld) |^2 + | r_{\textup{odd}}^h (\ld + h) - r_{\textup{even}}^h (\ld) |^2 \big) \leq C
\end{align*}

\noi
for some $C > 0$ independent of $h$.
\end{remark}

\begin{remark} \rm
In this paper, we work on the $H^1$ initial data, which matches the conservation of energy for the Toda lattice in Flaschka's form \eqref{todah}. If we only want local-in-time convergence of Toda to KdV, the regularity assumption of the initial data in all the above theorems can be reduced to $s > \frac 34$; see \cite{HKY} for the more general FPU case. The constraint $s > \frac 34$ comes from the maximal function estimate in Lemma~\ref{LEM:lin_d}~(iii).

It would be of interest to see if the convergence of Toda (or FPU) to KdV works for $L^2$ initial data (i.e. $s = 0$), since the $L^2$-norm is controlled by the mass (or the Hamiltonian). More precisely, we consider initial data such that the mass $M^h$ in \eqref{defH1} is well-defined and finite. In this setting, the Toda lattice (or FPU) is globally well-posed thanks to the conservation of mass. Also, the KdV equation is known to be globally well-posed in $L^2$ due to Bourgain \cite{Bour93k}. The main difficulty lies in estimating the difference of the two dynamics. See also \cite{HKY} for a discussion on this issue.
\end{remark}

\begin{remark} \rm
There has been extensive research also on the periodic Toda lattice (or the periodic FPU system) and its connection to the KdV equation on the circle. See \cite{Gie1, Gie2, BP, BKP2, BKP3} and the references therein. In a recent preprint \cite{KY}, Kwak and Yang studied the periodic FPU system where the number of particles tends to infinity in a fixed domain. Specifically, they showed a local-in-time result on the continuum limit of the periodic FPU system to the KdV equation on the unit circle. In particular, they reduced the regularity requirement of the initial data to $s > 0$.
This result is remarkable due to the lack of local smoothing estimate and the maximal function estimate on the periodic setting. To overcome this issue, the authors in \cite{KY} implemented the normal form reduction to regularize the system. It would be of interest to see whether, in the case $s = 1$, the global-in-time continuum limit holds true for the periodic Toda lattice with the help of the conservation laws. Nevertheless, the interesting case $s = 0$ for the $L^2$ initial data is left open also in the periodic setting even for the local-in-time continuum limit.
\end{remark}

\begin{remark} \rm
\label{RMK:toda}
As mentioned above, the Toda lattice \eqref{toda_ab} is a completely integrable system. Thus, it enjoys infinitely many conservation laws, which have been built in \cite{HKV}. As we will see in Subsection~\ref{SUB:cons} and Remark~\ref{RMK:cons} below, after a rearrangement of the quantities, we build two conserved quantities for the Toda lattice that are clearly connected to the corresponding conservation laws for the KdV equation. One may then ask the same question for all the other higher order conservation laws for the Toda lattice, i.e.~whether there exists a one-to-one correspondence between the conservation laws for the Toda lattice built in \cite{HKV} and those for the KdV equations.
\end{remark}

\subsection{Organization of the paper}

This paper is organized as follows. In Section~\ref{SEC:lem}, we discuss notations, function spaces, and properties of relevant linear propagators. In Section~\ref{SEC:cons}, we establish some useful conserved quantities of the scaled Toda lattice and prove a uniform $H^1$ a priori bound for the solution of the scaled Toda lattice. This leads to global well-posedness of the scaled Toda lattice as stated in Theorem~\ref{THM:conv}~(i). In Section~\ref{SEC:conv}, we prove all the other claimed results. Specifically, in Subsection~\ref{SUB:bdd}, we prove some useful a priori bounds for solutions of both the scaled Toda lattice and the KdV equation. In Subsection~\ref{SUB:dyn}, we prove Theorem~\ref{THM:conv}~(ii), dynamical convergence of the scaled Toda lattice to the KdV equation. 
In Subsection~\ref{SUB:lw1}, we prove Theorem~\ref{THM:lw}, the long-wave limit of the scaled Toda lattice in Flaschka's form. 
In Subsection~\ref{SUB:lw2}, we prove Theorem~\ref{THM:long-wave}, global well-posedness and also the long-wave limit of the scaled Toda lattice in the original form.

\section{Function spaces and preliminary lemmas}
\label{SEC:lem}

\subsection{Notations}

For any quantities $A, B > 0$, we write $A \les B$ if $A \leq C B$ for some constant $C > 0$ independent of the ranges where $A$ and $B$ are allowed to vary. We write $A \sim B$ to denote that both $A \les B$ and $A \ges B$ hold. We write $A \ll B$ if $A \leq c B$ for some small $c > 0$.

For a Schwartz function $f$, we define the Fourier transform of $f$ as
\begin{align*}
    \ft f (\xi) = \int_\R f (x) e^{- i \xi x} dx ,
\end{align*}

\noi
and also the inverse Fourier transform
\begin{align*}
    f^\vee (x) = \frac{1}{2 \pi} \int_\R f (\xi) e^{i \xi x} d\xi .
\end{align*}

\noi
Note that the Fourier transform of the product of two functions is equal to the convolution of their Fourier transforms:
\begin{align}
\ft{f g} (\xi) = \frac{1}{2 \pi} (\ft f * \ft g) (\xi) = \frac{1}{2 \pi} \int_\R \ft f (\xi - \xi') \ft g (\xi') d \xi' .
\label{prod0}
\end{align}

\noi
We denote by $\dx$ the Fourier multiplier with symbol $i \xi$. Given $s \in \R$, we define 
the inhomogeneous $L^2$-based Sobolev space $H^s (\R)$ via the norms
\begin{align*}
    \| f \|_{H^s (\R)} := \| \jb{\dx}^s f \|_{L^2 (\R)} = \frac{1}{(2 \pi)^{\frac 12}} \| \jb{\cdot}^s \ft f \|_{L^2 (\R)},
\end{align*}

\noi
where the second equality follows from Plancherel's identity. Given $R > 0$, we denote by $P_R$ the frequency projector onto $[-R, R]$, namely $\ft{P_R f} = \ind_{[-R, R]} \ft f$. 
We also denote $P_R^\perp = \textup{Id} - P_R$.

Given $T > 0$ and a Banach space $X$, we sometimes use the short-hand notation $C_T X$ and $L_T^\infty X$ to denote the function spaces $C ([-T, T]; X)$ and $L^\infty ([-T, T]; X)$, respectively.


\subsection{Function spaces on scaled lattice}
\label{SUB:space}
Given a function $f^h : h \Z \to \mathbb{C}$, we define the Lebesgue spaces $L^p (h \Z)$ via the norm
\begin{align*}
    \| f^h \|_{L^p (h \Z)} := \Big( h \sum_{\ld \in h \Z} | f^h (\ld) |^p \Big)^{\frac 1p}
\end{align*}

\noi
for $1 \leq p < \infty$ and
\begin{align*}
    \| f^h \|_{L^\infty (h \Z)} := \sup_{\ld \in h \Z} |f^h (\ld)| .
\end{align*}

\noi
For $f^h \in L^1 (h \Z)$, we define the discrete Fourier transform of $f^h$ by
\begin{align}
    (\F^h f^h) (\xi) := h \sum_{\ld \in h \Z} f^h (\ld) e^{- i \xi \ld}
\label{defFh}
\end{align}

\noi
for any $\xi \in \R / (\frac{2 \pi}{h} \Z) \cong [- \frac{\pi}{h}, \frac{\pi}{h})$, which is a $\frac{2 \pi}{h}$-periodic function. Note that we have the following Plancherel's (or Parseval's) identity:
\begin{align}
    h \sum_{\ld \in h \Z} f^h (\ld) \cj{g^h (\ld)} = \frac{1}{2 \pi} \int_{- \frac{\pi}{h}}^{\frac{\pi}{h}} (\F^h f^h) (\xi) \cj{(\F^h g^h) (\xi)} d \xi .
\label{plan}
\end{align}

\noi
We also have the following identity regarding the discrete Fourier transform of the product of functions:
\begin{align}
    \F^h (f^h g^h) (\xi) 
    = \frac{1}{2 \pi} \int_{- \frac{\pi}{h}}^{\frac{\pi}{h}} (\F^h f^h) (\xi - \xi') (\F^h g^h) (\xi') d \xi' .
\label{Fprod}
\end{align}


We denote by $\dh^+$ the discrete Fourier multiplier with symbol $\frac{e^{i h \xi} - 1}{h} = \frac{2i}{h} e^{\frac{i h \xi}{2}} \sin (\frac{h \xi}{2})$ and by $\dh^-$ the discrete Fourier multiplier with symbol $\frac{1 - e^{- i h \xi}}{h} = \frac{2i}{h} e^{- \frac{i h \xi}{2}} \sin (\frac{h \xi}{2})$. 
These two symbols correspond to the discrete differential operators
\begin{align*}
    \dh^+ f^h (\ld) = \frac{1}{h} \big( f^h (\ld + h) - f^h (\ld) \big) 
    \quad \text{and} \quad
    \dh^- f^h (\ld) = \frac{1}{h} \big( f^h (\ld) - f^h (\ld - h) \big)
\end{align*}

\noi
respectively.
Note that the two symbols have the same modulus: $|\dh^+| = |\dh^-|$. We also write $\dh = \frac 12 ( \dh^+ + \dh^- )$, which is the discrete Fourier multiplier with symbol $\frac{i}{h} \sin (h \xi)$. Given $s \in \R$, we define the homogeneous $L^2$-based Sobolev space $\dot{H}^{s} (h \Z)$ and the inhomogeneous $L^2$-based Sobolev space $H^{s} (h \Z)$ via the norms
\begin{align*}
    \| f^h \|_{\dot{H}^{s} (h \Z)} &:= \| |\dh^+|^s f^h \|_{L^2 (h \Z)}, \\
    \| f^h \|_{H^{s} (h \Z)} &:= \| \jb{\dh^+}^s f^h \|_{L^2 (h \Z)},
\end{align*}

\noi
respectively. 
The reason why we use $|\dh^+|$ instead of $|\dh|$ to define the Sobolev spaces is that for $\xi \in [- \frac{\pi}{h}, \frac{\pi}{h})$, we have $|\frac{2}{h} \sin (\frac{h \xi}{2})| \sim |\xi|$ for $|\dh^+|$, but we only have $|\frac{1}{h} \sin (h \xi)| \les |\xi|$ for $|\dh|$. Another reason is that $\dh$ separates the analysis of functions on even and odd indices, which is not suitable for defining the norm of a function on the lattice. Note that when $s = 1$, we have the following identity in view of Plancherel's identity \eqref{plan}: 
\begin{align}
    \| f^h \|_{\dot{H}^1 (h \Z)}^2 = \sum_{\ld \in h \Z} h^{-1} |f^h (\ld + h) - f^h (\ld)|^2 .
\label{H1norm}
\end{align}

\noi
In view of the extension of a function on $h \Z$ to a function on $\R$ in \eqref{extend}, it makes sense to view $\dh^\pm$ and $\dh$ also as Fourier multipliers on functions defined on $\R$ with the same symbol. In particular, for any function $f$ defined on $\R$, we have
\begin{align}
    \dh f (\cdot) = \frac{1}{2h} \big( f (\cdot + h) - f (\cdot - h) \big) .
\label{dhf}
\end{align}

\medskip
Below, we present some useful lemmas on function spaces on scaled lattice.

\begin{lemma}
\label{LEM:Lpbdd}
Let $0 < h \leq 1$ and $1 \leq p \leq q \leq \infty$. Then, for any $f^h \in L^p (h \Z)$, we have
\begin{align*}
    \| f^h \|_{L^q (h \Z)} \leq h^{\frac{1}{q} - \frac{1}{p}} \| f^h \|_{L^p (h \Z)}
\end{align*}
\end{lemma}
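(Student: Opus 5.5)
The plan is to reduce to the unweighted sequence space $\ell^p(h\Z)$ and use the nesting $\ell^p \subset \ell^q$. Write $a_\ld = |f^h(\ld)|$. By definition $\| f^h \|_{L^p (h \Z)} = h^{\frac 1p} \| a \|_{\ell^p}$ for $p < \infty$, and $\| f^h \|_{L^\infty (h \Z)} = \| a \|_{\ell^\infty}$. This is consistent with $p = \infty$ if we read $h^{1/\infty} = 1$. Hence the claimed inequality is equivalent to
\begin{align*}
    h^{\frac 1q} \| a \|_{\ell^q} \leq h^{\frac 1q - \frac 1p} h^{\frac 1p} \| a \|_{\ell^p},
\end{align*}
that is, to $\| a \|_{\ell^q} \leq \| a \|_{\ell^p}$ for $p \leq q$.

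I will prove this nesting in three steps.

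\begin{itemize}
\item[(a)] First, $\| a \|_{\ell^\infty} \leq \| a \|_{\ell^p}$. This holds because each single term satisfies $a_\ld^p \leq \sum_{\mu} a_\mu^p$. It settles the case $q = \infty$.
\item[(b)] For $p \leq q < \infty$, normalize so that $\| a \|_{\ell^p} = 1$. Then every $a_\ld \leq 1$ by (a), so $a_\ld^q \leq a_\ld^p$. Summing gives $\| a \|_{\ell^q}^q \leq 1$, and homogeneity then gives the general case.
\item[(c)] The case $p = q$ is trivial, and $p = \infty$ forces $q = \infty$, which is also trivial.
\end{itemize}

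There is no real obstacle. The only care needed is the bookkeeping of the powers of $h$ and the endpoint conventions at $p$ or $q = \infty$. Note that the hypothesis $h \leq 1$ is not actually needed for the inequality as stated, since the exponent arithmetic is exact.
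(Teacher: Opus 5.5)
Your proposal is correct and is essentially the paper's proof: it rewrites $\| f^h \|_{L^p (h \Z)} = h^{\frac 1p} \| f^h \|_{\ell^p}$ and invokes the contractive embedding $\ell^p \subset \ell^q$ for $p \leq q$. You also supply the short proof of that embedding and handle the endpoint conventions, and your observation that the hypothesis $h \leq 1$ is not actually needed is right.
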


\begin{proof}
The estimate follows directly from the fact that $\| f^h \|_{L^q (h \Z)} = h^{\frac{1}{q}} \| f^h \|_{\ell^q (h \Z)}$ and the embedding $\ell^p \subset \ell^q$ whenever $1 \leq p \leq q \leq \infty$.
\end{proof}

We also have the following Gagliardo-Nirenburg inequality. For a proof, see \cite[Proposition~2.4]{HY}.
\begin{lemma}
\label{LEM:GN}
Let $0 < h \leq 1$. Let $f^h$ be a function on $h \Z$.
Let $s \geq 0$, $2 \leq q \leq \infty$, and $0 < \ta < 1$ be such that $\ta s = \frac{1}{2} - \frac{1}{q}$. Then, we have
\begin{align*}
    \| f^h \|_{L^q (h \Z)} \les \| f^h \|_{L^2 (h \Z)}^{1 - \ta} \| f^h \|_{\dot{H}^{s} (h \Z)}^{\ta} ,
\end{align*}

\noi
where the underlying constant is independent of $h$.
\end{lemma}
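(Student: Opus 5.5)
The plan is to reduce to the continuum Gagliardo--Nirenberg inequality through the Fourier extension $\mathcal{E}$ of \eqref{extend}, handling the endpoints $q=2$ (trivial) and $q=\infty$ separately from the range $2<q<\infty$.

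\textbf{Case $q=2$.} Then $\ta s=0$, so $s=0$ and both sides are $\|f^h\|_{L^2(h\Z)}$.

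\textbf{Case $q=\infty$.} Here $\ta s=\frac12$, so $s>\frac12$, and we use a Fourier-side argument. Since $|\F^h f^h|\le \|f^h\|_{L^1}$ and $f^h(\ld)=\frac1{2\pi}\int_{-\pi/h}^{\pi/h}\F^h f^h(\xi)e^{i\xi\ld}\,d\xi$, we get
\[
\|f^h\|_{L^\infty(h\Z)}\les \int_{-\pi/h}^{\pi/h}|\F^h f^h(\xi)|\,d\xi .
\]
Split the integral at $|\xi|=R$ and apply Cauchy--Schwarz on each piece. On $|\xi|\le R$ this gives $R^{1/2}\|f^h\|_{L^2}$, by \eqref{plan}. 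On $|\xi|>R$ it gives $R^{1/2-s}\|f^h\|_{\dot H^s}$, using $|\frac2h\sin\frac{h\xi}{2}|\sim|\xi|$ on $[-\frac\pi h,\frac\pi h)$; the constant here is uniform in $h$. Optimizing with $R=(\|f^h\|_{\dot H^s}/\|f^h\|_{L^2})^{1/s}$ yields the exponent $\ta=\frac1{2s}$. If this $R$ exceeds $\pi/h$, the high-frequency part is empty and the bound only improves.

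\textbf{Case $2<q<\infty$.} By Plancherel, $\|\mathcal{E}f^h\|_{L^2(\R)}=\|f^h\|_{L^2(h\Z)}$ and $\|\mathcal{E}f^h\|_{\dot H^s(\R)}\sim\|f^h\|_{\dot H^s(h\Z)}$, the latter again because $|\frac2h\sin\frac{h\xi}2|\sim|\xi|$. The continuum Gagliardo--Nirenberg inequality then gives
\[
\|\mathcal{E}f^h\|_{L^q(\R)}\les \|f^h\|_{L^2(h\Z)}^{1-\ta}\|f^h\|_{\dot H^s(h\Z)}^{\ta}.
\]
It remains to prove the sampling bound $\|f^h\|_{L^q(h\Z)}\les\|\mathcal{E}f^h\|_{L^q(\R)}$ with a constant uniform in $h$. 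This step is the main obstacle.

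For the sampling bound, note that $\mathcal{E}f^h$ is band-limited to $[-\pi/h,\pi/h]$ and that $\mathcal{E}f^h(\ld)=f^h(\ld)$ for $\ld\in h\Z$. Choose a Schwartz $\psi_h(x)=h^{-1}\psi(x/h)$ whose Fourier transform equals $1$ on $[-\pi/h,\pi/h]$. Then $\mathcal{E}f^h=\psi_h*\mathcal{E}f^h$, and by Hölder,
\[
|f^h(\ld)|^q\les \int |\psi_h(\ld-y)|\,|\mathcal{E}f^h(y)|^q\,dy .
\]
Multiplying by $h$ and summing over $\ld\in h\Z$ uses $h\sum_{\ld}|\psi_h(\ld-y)|\les1$ uniformly in $y$ and $h$, which gives the sampling bound. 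This is a Plancherel--Pólya type estimate, and combining it with the continuum inequality above completes the case $2<q<\infty$ with an $h$-independent constant.

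An alternative that avoids the extension is to rescale directly: setting $g(n)=f^h(hn)$ converts the claim into the $h=1$ inequality on $\Z$, since both sides scale identically precisely when $\ta s=\frac12-\frac1q$.
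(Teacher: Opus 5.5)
Your proof is correct. The paper gives no argument for this lemma and only cites [HY, Proposition~2.4], so your write-up is an independent, self-contained route rather than a reconstruction of the paper's.

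What your route buys is that it runs entirely through the band-limited extension $\mathcal{E}$ already used throughout the paper. For $\mathcal{E}$, the $L^2$ comparison is an exact isometry. The $\dot{H}^s$ comparison follows from $\frac{2}{\pi}|\xi|\le|\frac{2}{h}\sin(\frac{h\xi}{2})|\le|\xi|$ for \emph{every} $s\ge 0$, as in Lemma~\ref{LEM:Hsbdd}. So the only non-Hilbertian ingredient is the Plancherel--P\'olya sampling bound $\|f^h\|_{L^q(h\Z)}\les\|\mathcal{E}f^h\|_{L^q(\R)}$. Your convolution argument proves it with an $h$-independent constant, because $h\sum_{\ld\in h\Z}|\psi_h(\ld-y)|=\sum_{n\in\Z}|\psi(n-y/h)|\les 1$.

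Compare a proof through a piecewise-linear interpolant, the device underlying [HY] and the related works cited in the introduction. There the $L^q$ comparison is elementary. However, the interpolant has kinks, so its $\dot{H}^s(\R)$ norm is controlled by the discrete one only for $s<\frac32$. Your argument covers the full range $s\ge 0$ stated in the lemma.

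Three minor remarks. First, the separate treatment of $q=\infty$ is unnecessary: the sampling bound and the continuum Gagliardo--Nirenberg inequality (with $s>\frac12$) both hold at $q=\infty$. Your direct Fourier-side argument there is still fine. Second, the preliminary observation $|\F^h f^h|\le\|f^h\|_{L^1}$ plays no role in that case. Third, your closing scaling remark correctly explains why the constant cannot depend on $h$. By itself, though, it only reduces the claim to the case $h=1$ on $\Z$, which still needs an argument like the one you give.
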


\subsection{On the extension operator}
We recall that $\mathcal{E}$ denotes the extension of a function on $h \Z$ to be a function on $\R$ defined in \eqref{extend}. The following lemma shows that, under the extension operator $\mathcal{E}$, the $H^s$-norm on the lattice and the $H^s$-norm on the continuum are equivalent.

\begin{lemma}
\label{LEM:Hsbdd}
Let $0 < h \leq 1$ and $s \geq 0$. Let $f^h$ be a function on $h \Z$. Then, we have
\begin{align*}
    \| f^h \|_{H^s (h \Z)} \leq \| \mathcal{E} f^h \|_{H^s (\R)} \les \| f^h \|_{H^s (h \Z)},
\end{align*}

\noi
where the underlying constant is independent of $h$,
\end{lemma}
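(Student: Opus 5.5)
The plan is to compute both norms on the Fourier side and compare the weights pointwise in $\xi$. By \eqref{extend}, $\mathcal{E} f^h$ is the inverse (continuum) Fourier transform of $\ind_{[-\frac{\pi}{h}, \frac{\pi}{h})}(\xi)\, (\F^h f^h)(\xi)$. In other words, $\ft{\mathcal{E} f^h}$ is the restriction of the periodic function $\F^h f^h$ to one fundamental period, extended by zero. Plancherel on $\R$ then gives
\begin{align*}
\| \mathcal{E} f^h \|_{H^s (\R)}^2 = \frac{1}{2\pi} \int_{-\frac{\pi}{h}}^{\frac{\pi}{h}} \jb{\xi}^{2s} |(\F^h f^h)(\xi)|^2 d\xi .
\end{align*}
The discrete Plancherel identity \eqref{plan}, applied to $\jb{\dh^+}^s f^h$, gives
\begin{align*}
\| f^h \|_{H^s (h\Z)}^2 = \frac{1}{2\pi} \int_{-\frac{\pi}{h}}^{\frac{\pi}{h}} \Big(1 + \tfrac{4}{h^2} \sin^2 \big(\tfrac{h\xi}{2}\big)\Big)^{s} |(\F^h f^h)(\xi)|^2 d\xi .
\end{align*}
Here I use that the symbol of $\dh^+$ has modulus $\frac{2}{h}|\sin(\frac{h\xi}{2})|$, and I interpret $\jb{\dh^+}$ as the multiplier with symbol $(1 + |\text{symbol of } \dh^+|^2)^{1/2}$.

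With both norms written over the same interval against the same density $|(\F^h f^h)(\xi)|^2$, the lemma reduces to an elementary comparison of weights for $|\xi| \leq \frac{\pi}{h}$. Set $\theta = \frac{h\xi}{2} \in [-\frac{\pi}{2}, \frac{\pi}{2}]$. Then $|\sin\theta| \leq |\theta|$ gives $\frac{2}{h}|\sin(\frac{h\xi}{2})| \leq |\xi|$. Jordan's inequality $|\sin\theta| \geq \frac{2}{\pi}|\theta|$ on this range gives $\frac{2}{h}|\sin(\frac{h\xi}{2})| \geq \frac{2}{\pi}|\xi|$.

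From these, and since $s \geq 0$, I get
\begin{align*}
\Big(1 + \tfrac{4}{h^2}\sin^2\big(\tfrac{h\xi}{2}\big)\Big)^s \leq \jb{\xi}^{2s} \leq \big(\tfrac{\pi}{2}\big)^{2s} \Big(1 + \tfrac{4}{h^2}\sin^2\big(\tfrac{h\xi}{2}\big)\Big)^s .
\end{align*}
Integrating against $|(\F^h f^h)(\xi)|^2$ yields both inequalities. The first holds with constant exactly $1$. The second holds with constant $(\pi/2)^s$, which depends only on $s$ and not on $h$.

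There is no genuine obstacle. The only points needing care are bookkeeping. First, the normalizations of $\F^h$ and $\ft{\ }$ must match, so that the $\frac{1}{2\pi}$ factors agree. Second, restricting to the fundamental interval $[-\frac{\pi}{h}, \frac{\pi}{h})$ is exactly what makes Jordan's inequality applicable. Third, for $f^h \in L^2(h\Z)$ not in $L^1$, the transform $\F^h f^h$ should be understood by density, which is harmless since both sides are continuous in $L^2$-type norms.
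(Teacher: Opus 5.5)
Your proposal is correct and follows essentially the same route as the paper: both norms are written via Plancherel over $[-\frac{\pi}{h},\frac{\pi}{h})$ against $|(\F^h f^h)(\xi)|^2$. The weights are then compared pointwise using $\frac{2}{\pi}|\xi| \leq |\frac{2}{h}\sin(\frac{h\xi}{2})| \leq |\xi|$. Your explicit constant $(\pi/2)^s$ is a valid instance of the paper's implicit one.
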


\begin{proof}
From Plancherel's identities and the fact that $|\frac{2}{h} \sin (\frac{h \xi}{2})| \leq |\xi|$, we have
\begin{align*}
    \| f^h \|_{H^s (h \Z)}^2 
    &= \frac{1}{2\pi} \int_{- \frac{\pi}{h}}^{\frac{\pi}{h}} \langle\tfrac{2}{h} \sin ( \tfrac{h \xi}{2} ) \rangle^{2 s} | \F^h (f^h) (\xi) |^2 d \xi \\
    &\leq \frac{1}{2 \pi} \int_{- \frac{\pi}{h}}^{\frac{\pi}{h}} \jb{\xi}^{2 s} \big| \ft{f^h} (\xi) \big|^2 d \xi \\
    &= \| \mathcal{E} f^h \|_{H^s (\R)}^2 .
\end{align*}

\noi
Also, from Plancherel's identities and the fact that $|\frac{2}{h} \sin (\frac{h \xi}{2})| \geq \frac{2}{\pi} |\xi|$ for $\xi \in [- \frac{\pi}{h}, \frac{\pi}{h}]$, we have
\begin{align*}
    \| \mathcal{E} f^h \|_{H^s (\R)}^2 &= \frac{1}{2 \pi} \int_{- \frac{\pi}{h}}^{\frac{\pi}{h}} \jb{\xi}^{2 s} \big| \ft{f^h} (\xi) \big|^2 d \xi \\
    &\les \frac{1}{2 \pi} \int_{- \frac{\pi}{h}}^{\frac{\pi}{h}} \langle\tfrac{2}{h} \sin ( \tfrac{h \xi}{2} ) \rangle^{2 s} | \F^h (f^h) (\xi) |^2 d \xi \\
    &= \| f^h \|_{H^s (h \Z)}^2 .
\end{align*}

\noi
Thus, we obtain the norm equivalence.
\end{proof}

We now establish a useful identity regarding the extension of a product of two functions defined on the lattice.
\begin{lemma}
\label{LEM:Eprod}
Let $0 < h \leq 1$. Let $f^h$ and $g^h$ be two functions on $h \Z$. Then, we have
\begin{align*}
    \mathcal{E} (f^h g^h) = P_{\frac{\pi}{h}} \big( (1 + 2 \cos (\tfrac{2 \pi}{h} \cdot)) \mathcal{E} f^h \mathcal{E} g^h \big) .
\end{align*}
\end{lemma}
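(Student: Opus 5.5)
The plan is to prove the identity on the Fourier side. Both sides are supported in $[-\frac{\pi}{h}, \frac{\pi}{h}]$: the left side by the definition \eqref{extend}, the right side because of $P_{\frac{\pi}{h}}$. So it suffices to show that for a.e. $\xi \in [-\frac{\pi}{h}, \frac{\pi}{h}]$,
\begin{align*}
\F^h (f^h g^h)(\xi) = \mathcal{F}\big( (1 + 2\cos(\tfrac{2\pi}{h}\cdot)) \mathcal{E} f^h \, \mathcal{E} g^h \big)(\xi),
\end{align*}
where $\mathcal{F}$ denotes the Fourier transform on $\R$. I would first treat $f^h, g^h \in L^2(h\Z)$, so that all transforms are $L^2$ functions and the convolutions below converge absolutely.

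\emph{Step 1 (periodization).} By \eqref{extend}, $\ft{\mathcal{E} f^h} = \ind_{[-\frac{\pi}{h}, \frac{\pi}{h})} \F^h f^h$; that is, $\ft{\mathcal{E} f^h}$ is one period of the $\frac{2\pi}{h}$-periodic function $\F^h f^h$. Hence $\F^h f^h(\eta) = \sum_{k \in \Z} \ft{\mathcal{E} f^h}(\eta + \frac{2\pi k}{h})$, and for any given $\eta$ at most one term is nonzero. In the product formula \eqref{Fprod}, take $\xi, \xi' \in [-\frac{\pi}{h}, \frac{\pi}{h})$. Then $\xi - \xi' \in (-\frac{2\pi}{h}, \frac{2\pi}{h})$, so only $k \in \{-1, 0, 1\}$ can contribute, and
\begin{align*}
\F^h f^h(\xi - \xi') = \sum_{k=-1}^{1} \ft{\mathcal{E} f^h}(\xi - \xi' + \tfrac{2\pi k}{h}).
\end{align*}
Since $\ft{\mathcal{E} g^h}$ vanishes outside $[-\frac{\pi}{h}, \frac{\pi}{h})$, the $\xi'$-integral in \eqref{Fprod} can be extended to all of $\R$. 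Using \eqref{prod0}, this gives
\begin{align*}
\F^h(f^h g^h)(\xi) = \sum_{k=-1}^{1} \ft{\mathcal{E} f^h \, \mathcal{E} g^h}(\xi + \tfrac{2\pi k}{h}).
\end{align*}

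\emph{Step 2 (modulation).} Since $\ft{e^{-iax}F}(\xi) = \ft F(\xi + a)$, the terms with $k = \pm 1$ are the Fourier transforms of $e^{\mp \frac{2\pi i}{h} x} \mathcal{E} f^h \, \mathcal{E} g^h$. Adding the three terms produces the multiplier $1 + e^{\frac{2\pi i x}{h}} + e^{-\frac{2\pi i x}{h}} = 1 + 2\cos(\frac{2\pi x}{h})$. Restricting to $|\xi| \leq \frac{\pi}{h}$ is exactly the application of $P_{\frac{\pi}{h}}$, and inverting the Fourier transform gives the lemma.

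\emph{Main obstacle.} The only delicate point is the bookkeeping in Step 1. One has to check that for $\xi, \xi'$ in the fundamental interval exactly one of the shifts $k \in \{-1, 0, 1\}$ is active, and that no wraparound beyond these three occurs. The half-open versus closed endpoints $\pm\frac{\pi}{h}$ form a null set and do not matter. If one wants the lemma beyond $L^2$, one can pass to the limit by density, but the $L^2$ case is all that the paper uses.
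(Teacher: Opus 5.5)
Your proof is correct and follows the same route as the paper. Like you, the paper uses \eqref{Fprod} and the $\frac{2\pi}{h}$-periodicity of $\F^h f^h$ to split the integrand into three pieces corresponding to your shifts $k \in \{-1,0,1\}$. It then rewrites these via \eqref{prod0} as $\ft{\mathcal{E} f^h} * \ft{\mathcal{E} g^h}$ evaluated at $\xi$ and $\xi \pm \frac{2\pi}{h}$, turns the shifts into the factor $1 + 2\cos(\frac{2\pi}{h}\cdot)$, and restricts to $[-\frac{\pi}{h}, \frac{\pi}{h})$.

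Your explicit restriction to $f^h, g^h \in L^2(h\Z)$, which makes all transforms and convolutions well defined pointwise, is a reasonable clarification that the paper leaves implicit.
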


\begin{proof}
From \eqref{Fprod} and the fact that $\F^h f^h$ and $\F^h g^h$ are $\frac{2 \pi}{h}$-periodic functions, we have that for any $\xi \in [-\frac{\pi}{h}, \frac{\pi}{h})$,
\begin{align}
\begin{split}
    \F^h (f^h g^h) (\xi) &= \frac{1}{2\pi} \int_{- \frac{\pi}{h}}^{\frac{\pi}{h}} (\F^h f^h) (\xi - \xi') (\F^h g^h) (\xi') d \xi' \\
    &= \frac{1}{2\pi} \int_\R (\F^h f^h) (\xi - \xi') \ind_{[- \frac{\pi}{h}, \frac{\pi}{h})} (\xi - \xi') (\F^h g^h) (\xi') \ind_{[- \frac{\pi}{h}, \frac{\pi}{h})} (\xi') d \xi' \\
    &\,\,\,\, + \frac{1}{2\pi} \int_\R (\F^h f^h) (\xi - \tfrac{2 \pi}{h} - \xi') \ind_{[- \frac{\pi}{h}, \frac{\pi}{h})} (\xi - \tfrac{2 \pi}{h} - \xi') (\F^h g^h) (\xi') \ind_{[- \frac{\pi}{h}, \frac{\pi}{h})} (\xi') d \xi' \\
    &\,\,\,\, + \frac{1}{2\pi} \int_\R (\F^h f^h) (\xi + \tfrac{2 \pi}{h} - \xi') \ind_{[- \frac{\pi}{h}, \frac{\pi}{h})} (\xi + \tfrac{2 \pi}{h} - \xi') (\F^h g^h) (\xi') \ind_{[- \frac{\pi}{h}, \frac{\pi}{h})} (\xi') d \xi' .
\end{split}
\label{Eprod1}
\end{align}

\noi
Then, from \eqref{Eprod1} and \eqref{prod0}, we write
\begin{align}
\begin{split}
    \F^h (f^h g^h) (\xi) &= \frac{1}{2 \pi} \big( (\ft{\mathcal{E} f^h} * \ft{\mathcal{E} g^h} ) (\xi) + (\ft{\mathcal{E} f^h} * \ft{\mathcal{E} g^h} ) (\xi - \tfrac{2 \pi}{h}) + (\ft{\mathcal{E} f^h} * \ft{\mathcal{E} g^h} ) (\xi + \tfrac{2 \pi}{h}) \big) \\
    &= \ft{\mathcal{E} f^h \mathcal{E} g^h} (\xi) + \ft{\mathcal{E} f^h \mathcal{E} g^h} (\xi - \tfrac{2 \pi}{h}) + \ft{\mathcal{E} f^h \mathcal{E} g^h} (\xi + \tfrac{2 \pi}{h}) \\
    &= \big( (1 + e^{i \frac{2 \pi}{h} \cdot} + e^{- i \frac{2 \pi}{h} \cdot}) \mathcal{E} f^h \mathcal{E} g^h \big)^\wedge (\xi) \\
    &= \big( (1 + 2 \cos (\tfrac{2 \pi}{h} \cdot)) \mathcal{E} f^h \mathcal{E} g^h \big)^\wedge (\xi) .
\end{split}
\label{Eprod2}
\end{align}
Thus, by putting $\ind_{[- \frac{\pi}{h}, \frac{\pi}{h})} (\xi)$ on both sides of \eqref{Eprod2} and taking the inverse Fourier transform, we obtain the desired identity.
\end{proof}

The cosine function in Lemma~\ref{LEM:Eprod} turns out to be useful via the following estimate.
\begin{lemma}
\label{LEM:Ecos}
Let $0 < h \leq 1$. Let $f$ be a function on $\R$. Then, we have
\begin{align*}
    \big\| P_{\frac{\pi}{h}} ( e^{\pm i \frac{2 \pi}{h} \cdot} f ) \big\|_{L^2 (\R)} \les h \| f \|_{H^1 (\R)} .
\end{align*}
\end{lemma}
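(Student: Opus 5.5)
Work on the Fourier side with Plancherel. Consider the sign $+$; the sign $-$ is symmetric. The Fourier transform of $e^{i \frac{2\pi}{h} x} f$ is $\ft f(\xi - \frac{2\pi}{h})$. The projection $P_{\frac{\pi}{h}}$ restricts to $|\xi| \leq \frac{\pi}{h}$, so
\begin{align*}
\big\| P_{\frac{\pi}{h}} ( e^{i \frac{2 \pi}{h} \cdot} f ) \big\|_{L^2 (\R)}^2 = \frac{1}{2\pi} \int_{-\frac{\pi}{h}}^{\frac{\pi}{h}} \big|\ft f(\xi - \tfrac{2\pi}{h})\big|^2 d\xi = \frac{1}{2\pi} \int_{-\frac{3\pi}{h}}^{-\frac{\pi}{h}} |\ft f(\eta)|^2 d\eta .
\end{align*}
The point is that only frequencies of $f$ with $|\eta| \geq \frac{\pi}{h}$ contribute.

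On this range $\jb{\eta} \geq |\eta| \geq \frac{\pi}{h}$, so $1 \leq \frac{h^2}{\pi^2}\jb{\eta}^2$. Inserting this gives
\begin{align*}
\frac{1}{2\pi} \int_{-\frac{3\pi}{h}}^{-\frac{\pi}{h}} |\ft f(\eta)|^2 d\eta \leq \frac{h^2}{\pi^2} \cdot \frac{1}{2\pi}\int_\R \jb{\eta}^2 |\ft f(\eta)|^2 d\eta = \frac{h^2}{\pi^2}\| f \|_{H^1(\R)}^2 .
\end{align*}
Taking square roots yields the claim with constant $\frac{1}{\pi}$. Equivalently, the left side is bounded by $\|P_{\frac{\pi}{h}}^\perp f\|_{L^2} \les h \|f\|_{H^1}$.

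For the $-$ sign, the shifted frequency range is $[\frac{\pi}{h}, \frac{3\pi}{h}]$, and the same bound applies. The only step needing care is the change of variables: it shows the projected modulated function only sees the high-frequency part of $f$. There is no real obstacle, and one may take $f \in H^1(\R)$ throughout so that all integrals are finite.
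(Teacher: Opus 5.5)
Your proposal is correct and follows essentially the same approach as the paper: the modulation shifts the frequencies of $f$ that survive the projection $P_{\frac{\pi}{h}}$ into $[-\frac{3\pi}{h}, -\frac{\pi}{h}]$ (or $[\frac{\pi}{h}, \frac{3\pi}{h}]$ for the other sign), where $|\eta| \geq \frac{\pi}{h}$ gives the factor $h$ against the $H^1$ weight. Your Plancherel computation simply makes the paper's argument explicit, including the constant $\frac{1}{\pi}$.
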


\begin{proof}
Without loss of generality, we only consider the case with $e^{i \frac{2 \pi}{h} \cdot}$. We denote $\xi$ the frequency of $P_{\frac{\pi}{h}} ( e^{i \frac{2 \pi}{h} \cdot} f )$ and $\xi'$ the frequency of $f$. Then, we have $\xi \in [- \frac{\pi}{h}, \frac{\pi}{h}]$ and $\xi = \xi' + \frac{2 \pi}{h}$, which implies that $\xi' \in [- \frac{3 \pi}{h}, - \frac{\pi}{h}]$. Thus, by denoting $P_{[- \frac{3 \pi}{h}, - \frac{\pi}{h}]}$ be the frequency projection onto $[- \frac{3 \pi}{h}, - \frac{\pi}{h}]$ and using the fact that $|\xi'| \geq \frac{\pi}{h}$, we have
\begin{align*}
    \big\| P_{\frac{\pi}{h}} ( e^{\pm i \frac{2 \pi}{h} \cdot} f ) \big\|_{L^2 (\R)} = \big\| P_{\frac{\pi}{h}} \big( e^{\pm i \frac{2 \pi}{h} \cdot} P_{[- \frac{3 \pi}{h}, - \frac{\pi}{h}]} f \big) \big\|_{L^2 (\R)} \leq \big\| P_{[- \frac{3 \pi}{h}, - \frac{\pi}{h}]} f \big\|_{L^2 (\R)} \les h \| f \|_{H^1 (\R)} ,
\end{align*}

\noi
as desired.
\end{proof}

\subsection{Linear propagators and Fourier restriction norms}
\label{SUB:lin}

In this subsection, we exploit the properties of the linear propagator for the equation satisfied by $u^h$ defined in \eqref{defuh}. 

Let us first discuss a periodic Littlewood-Paley theory, for which we follow the procedure described in \cite[Lemma~2.7]{KOVW}. We let $\varphi$ be a smooth even function supported in $(- \frac{3 \pi}{4}, \frac{3 \pi}{4})$ such that
\begin{align*}
    \sum_{k \in \Z} \varphi (\xi + k \pi) = 1
\end{align*}

\noi
for any $\xi \in \R$.
Given $0 < h \leq 1$, we let $\varphi^h (\cdot) = \varphi (h \cdot)$, so that $\varphi^h$ is a smooth even function supported in $(- \frac{3 \pi}{4h}, \frac{3 \pi}{4h})$ such that
\begin{align*}
    \sum_{k \in \Z} \varphi^h (\xi + \tfrac{k \pi}{h}) = 1
\end{align*}

\noi
for any $\xi \in \R$. Given a dyadic number $N \leq 1$, we define
\begin{align*}
    \varphi_N^{h, +} (\xi) := \sum_{k \in \Z} \varphi^h (N^{-1} (\xi + \tfrac{2 k \pi}{h})) \quad \text{and} \quad \varphi_N^{h, -} (\xi) := \varphi^h (N^{-1} (\xi + \tfrac{(2 k - 1) \pi}{h})) ,
\end{align*}

\noi
and we define
\begin{align*}
    \psi_N^{h, \pm} (\xi) := \varphi_N^{h, \pm} (\xi) - \varphi_{\frac{N}{2}}^{h, \pm} (\xi) .
\end{align*}

\noi
We then denote $\mathbf{P}^{h, \pm}_N$ as the Fourier multiplier operator by $\psi_N^{h, \pm}$. Note that we have
\begin{align}
    \Id = \sum_{\s = \pm} \sum_{\substack{N \leq 1 \\ \text{dyadic}}} \mathbf{P}_N^{h, \pm} .
\label{per_dya}
\end{align}

In view of the formulation \eqref{eq_uh} below, the linear propagator is given by $e^{- 6 h^{-2} t (\dh - \dx)}$, which is a Fourier multiplier operator with symbol $e^{- 6 i h^{-3} t (\sin (h \xi) - h \xi)}$. We shall show useful linear estimates for $e^{- 6 h^{-2} t (\dh - \dx)}$. For some of these estimates, we need to restrict our attention to the frequency range $\xi \in [- \frac{\pi}{h}, \frac{\pi}{h}]$ or even a smaller range. For this reason and also in view of the periodic dyadic decomposition \eqref{per_dya}, we will study the following parts of the integral kernel for $|\dh|^s e^{- 6 h^{-2} t (\dh - \dx)} \mathbf{P}_N^{h, \pm}$ given $s \in \R$:
\begin{align}
\begin{split}
    |\dh|^s K_N^{h, +} (t, x) &= \frac{1}{2\pi} \int_{- \frac{3\pi}{4h}}^{\frac{3\pi}{4h}} e^{- 6 i h^{-3} t (\sin (h \xi) - h \xi) + i \xi x} |h^{-1} \sin(h \xi)|^s \psi_N^{h, +} (\xi) d \xi , \\
    |\dh|^s K_N^{h, -} (t, x) &= \frac{1}{2\pi} \int_{- \frac{7\pi}{4h}}^{- \frac{\pi}{4h}} e^{- 6 i h^{-3} t (\sin (h \xi) - h \xi) + i \xi x} |h^{-1} \sin(h \xi)|^s \psi_N^{h, -} (\xi) d \xi \\
    &\quad + \frac{1}{2\pi} \int_{\frac{\pi}{4h}}^{\frac{7\pi}{4h}} e^{- 6 i h^{-3} t (\sin (h \xi) - h \xi) + i \xi x} |h^{-1} \sin(h \xi)|^s \psi_N^{h, -} (\xi) d \xi .
\end{split}
\label{defKN}
\end{align}

Let us first show the following computations.
\begin{lemma}
\label{LEM:KN}
Let $0 < h \leq 1$.
\textup{(i)} For any $t \neq 0$ and $x \in \R$, we have
\begin{align}
    \bigg\| \sum_{\substack{N \leq 1 \\ \textup{dyadic}}} |\dh|^{\frac 12} K_N^{h, +} (t, x) \bigg\|_{L_x^\infty (\R)} \les |t|^{- \frac 12} 
\label{KN+}
\end{align}

\noi
and
\begin{align}
    \bigg\| \sum_{\substack{N \leq 1 \\ \textup{dyadic}}} |\dh|^{\frac 12} K_N^{h, -} (t, x) \bigg\|_{L_x^\infty (\R)} \les |t|^{- \frac 12} ,
\label{KN-}
\end{align}

\noi
where the underlying constant is independent of $h$.

\smallskip \noi
\textup{(ii)} Let $T > 0$. Then, for any dyadic number $h < N \leq 1$, we have
\begin{align}
    \Big\| \sup_{t \in [-2T, 2T]} |K_N^{h, +} (t, x)| \Big\|_{L_x^1 (\R)} \les (1 + T)^{\frac 12} h^{- \frac 32} N^{\frac 32} .
\label{KN+1}
\end{align}

\noi
Also, we have
\begin{align}
    \bigg\| \sup_{t \in [-2T, 2T]} \bigg| \sum_{\substack{N \leq h \\ \textup{dyadic}}} K_N^{h, +} (t, x) \bigg| \bigg\|_{L_x^1 (\R)} \les (1 + T)^2 .
\label{KN+2}
\end{align}
\end{lemma}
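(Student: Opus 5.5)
Both parts of Lemma~\ref{LEM:KN} reduce to oscillatory integral estimates for the phase
\[
\Phi(\xi) = -6 h^{-3} t (\sin (h\xi) - h\xi) + \xi x ,
\]
for which
\[
\Phi'(\xi) = -6 h^{-2} t(\cos(h\xi)-1) + x = 12 h^{-2} t \sin^2 (\tfrac{h\xi}{2}) + x , \qquad \Phi''(\xi) = 6 h^{-1} t \sin(h\xi).
\]
On the support of $\psi_N^{h,+}$ (inside $|\xi|<\frac{3\pi}{4h}$) we have $|\xi|\sim N/h$. There $|\sin(h\xi)|\sim h|\xi|\sim N$, so $|\Phi''|\sim |t| N/h$.

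\textbf{Part (i), estimate \eqref{KN+}.} I will handle each dyadic piece by van der Corput's lemma with the second derivative. The amplitude $|h^{-1}\sin(h\xi)|^{1/2}\psi_N^{h,+}$ is bounded by $(N/h)^{1/2}$. Its total variation is of the same size, since $\psi$ is smooth on scale $N/h$. Van der Corput then gives
\[
(N/h)^{1/2}\,(|t|N/h)^{-1/2}=|t|^{-1/2}
\]
for each $N$, which is uniform but not summable over $N$. To sum, I will replace the dyadic sum by the single kernel with amplitude $|h^{-1}\sin(h\xi)|^{1/2}\varphi^h(\xi)$, using telescoping: the sum over $N\le 1$ is $\varphi^h$ restricted to the fundamental period. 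I then argue as in the classical Airy estimate of Kenig--Ponce--Vega. Set $\xi_0=(|t|h^{-1})^{-1/2}$, hmm; more precisely, near $\xi=0$ I use the scale $\rho$ at which $|\Phi''|\rho^2\sim 1$. On the region $|\xi|\lesssim\rho$ I use the trivial bound $\int_{|\xi|\lesssim \rho} |\xi|^{1/2} d\xi$. On the complement I apply van der Corput, using that the amplitude is monotone on each dyadic piece.

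It is cleaner to treat $\sum_{N\ge N_0}$ and $\sum_{N<N_0}$ separately, with $N_0$ chosen so that $(N_0/h)^3|t|\sim 1$:
\begin{itemize}
\item For $N<N_0$, the trivial bound gives $\sum (N/h)^{3/2}\lesssim |t|^{-1/2}$.
\item For $N\ge N_0$, I use $|\Phi''|\gtrsim |t||\xi|$ on the whole region $|\xi|\gtrsim N_0/h$. Van der Corput for the combined integral, with a bounded-variation amplitude divided by its size, then yields $(N_0/h)^{1/2}$ times $(|t|N_0/h)^{-1/2}$, again $\lesssim|t|^{-1/2}$. To make this rigorous I will factor the amplitude as $|\xi|^{1/2}$ times the smooth ratio $|h^{-1}\sin h\xi|^{1/2}/|\xi|^{1/2}$, and apply van der Corput on dyadic pieces with the weight $|\Phi''|^{1/2}$ (the Stein--Kenig--Ponce--Vega form: $\big|\int e^{i\Phi}|\Phi''|^{1/2}\chi\big|\lesssim 1$ for $\Phi''$ monotone).
\end{itemize}

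\textbf{Estimate \eqref{KN-}.} Here the frequencies lie near $\pm\pi/h$, and I shift $\xi=\pm\pi/h+\eta$ with $|\eta|\sim N/h$. Then $|\sin(h\xi)|\sim N$ again and $|\Phi''|\sim|t|N/h$, so the same argument applies verbatim. The linear term $h\xi$ in the phase only affects $\Phi'$, not $\Phi''$. This is the inflection point issue, and it is harmless for the $L^\infty$ bound.

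\textbf{Part (ii).} For \eqref{KN+1} I will bound the maximal kernel pointwise in two regimes.
\begin{itemize}
\item For $|x|\lesssim (1+T)(N/h)^2$, which is the range of $|12h^{-2}t\sin^2(h\xi/2)|\sim |t|(N/h)^2$, I use the trivial bound $|K_N|\lesssim N/h$ together with the $|t|^{-1/2}$ decay bound from part (i) applied to the piece without $|\dh|^{1/2}$. This gives $|K_N|\lesssim (N/h)^{1/2}|t|^{-1/2}$; after optimizing and using the stationary point location, I aim for $|K_N|\lesssim (N/h)^{1/2}|x|^{-1/2}$.
\item For $|x|\gg (1+T)(N/h)^2$, I have $|\Phi'|\gtrsim|x|$, and two integrations by parts give $|K_N|\lesssim (N/h)\,(|x|N/h)^{-2}$.
\end{itemize}
Integrating over $x$ gives
\[
\int_{|x|\lesssim (1+T)(N/h)^2}(N/h)^{1/2}|x|^{-1/2}\,dx \;\sim\; (1+T)^{1/2}(N/h)^{3/2}
\]
plus a harmless tail. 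The main obstacle is justifying the bound $(N/h)^{1/2}|x|^{-1/2}$ uniformly in $t$ near the stationary set, including small $|x|$ where I fall back on $N/h$. For \eqref{KN+2} the frequencies satisfy $|\xi|\lesssim 1$, so the amplitude is a fixed smooth bump $\varphi(\xi)$ (since $\varphi^h(\xi)$ at $N\le h$ is $\varphi$ of a rescaled argument). The phase derivatives are bounded by $1+T$ uniformly in $h$. Integrating by parts twice then gives $|K|\lesssim (1+T)^2\langle x\rangle^{-2}$ for $|x|\gg 1+T$, and the trivial bound $O(1)$ otherwise. Hence the $L^1$ norm is $\lesssim(1+T)^2$.
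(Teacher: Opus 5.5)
Your low-frequency step matches the paper: choosing $N_0\sim h|t|^{-1/3}$ and bounding $\sum_{N<N_0}(N/h)^{3/2}\lesssim|t|^{-1/2}$ trivially is exactly right. Your argument for \eqref{KN+2} is also correct.

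\textbf{The gap in part (i).} The problem is the sum over $N_0\le N\le 1$ in \eqref{KN+}, and therefore also in \eqref{KN-}.
\begin{itemize}
\item Your ``combined'' van der Corput step is wrong. On $|\xi|\gtrsim N_0/h$ the amplitude $|h^{-1}\sin(h\xi)|^{1/2}$ is as large as $h^{-1/2}$, not $(N_0/h)^{1/2}$. The second-derivative test therefore only gives $h^{-1/2}(|t|N_0/h)^{-1/2}\sim h^{-1/2}|t|^{-1/3}$, which is not uniform in $h$.
\item Applying the weighted estimate on dyadic pieces does not help. It again gives $|t|^{-1/2}$ per piece, and there are $\sim\log(|t|^{1/3}/h)$ pieces.
\item Applying it to the whole integral is not justified either. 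The bound $|\int e^{i\Phi}|\Phi''|^{1/2}\chi\,d\xi|\lesssim 1$ cannot be quoted for arbitrary monotone $\Phi''$. It fails for a smoothed staircase $\Phi''$ whose steps grow by huge factors: each step acts like a fresh half-Fresnel integral of size $\sim 1$, and the step lengths can be tuned so these contributions add in phase. For the present phase the bound would have to be proved, and proving it is precisely the content of \eqref{KN+}.
\end{itemize}

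\textbf{What is missing: the non-stationary bound.} You use integration by parts in part (ii) but not in part (i). On the support of $\psi_N^{h,+}$ we have $\Phi'=x+12h^{-2}t\sin^2(h\xi/2)$, where $12h^{-2}|t|\sin^2(h\xi/2)\sim h^{-2}N^2|t|$.
\begin{itemize}
\item Unless $N\sim h|x|^{1/2}|t|^{-1/2}$, this gives $|\Phi'|\gtrsim D_N:=\max(|x|,h^{-2}N^2|t|)$.
\item Moreover $|\Phi''|\lesssim D_N\,h/N$ and $|\Phi'''|\lesssim D_N(h/N)^2$, and each derivative of the amplitude costs a factor $h/N$.
\item Two integrations by parts then give $\big||\partial_h|^{1/2}K_N^{h,+}(t,x)\big|\lesssim (h/N)^{1/2}D_N^{-2}\le (h/N)^{9/2}|t|^{-2}$.
\item This sums over $N\gtrsim h|t|^{-1/3}$ to $\lesssim|t|^{-1/2}$.
\item Van der Corput is then needed only for the $O(1)$ scales $N\sim h|x|^{1/2}|t|^{-1/2}$, each contributing $|t|^{-1/2}$.
\end{itemize}
For \eqref{KN-} the same scheme works, but not quite verbatim. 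You must first write $\sin(h\xi)-h\xi=(\sin(h\xi)+h\xi)-2h\xi$, so that the non-stationary bound is in terms of $x+12h^{-2}t$ rather than $x$.

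\textbf{Part (ii).} There is a smaller slip in \eqref{KN+1}. Without $|\partial_h|^{1/2}$, the van der Corput bound is $(N/h)^{-1/2}|t|^{-1/2}$, not $(N/h)^{1/2}|t|^{-1/2}$. The uniformity in $t$ that you flag as the main obstacle is settled by splitting in $t$ for fixed $1\le|x|\lesssim(1+T)(N/h)^2$:
\begin{itemize}
\item If $|t|\sim(h/N)^2|x|$, van der Corput gives $(N/h)^{-1/2}|t|^{-1/2}\sim(N/h)^{1/2}|x|^{-1/2}$.
\item Otherwise $|\Phi'|\gtrsim\max(|x|,h^{-2}N^2|t|)$. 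Integration by parts then gives $(h/N)|x|^{-2}\le(N/h)^{1/2}|x|^{-1/2}$, since $|x|\ge 1>h/N$.
\end{itemize}
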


\begin{proof}
We first show some useful estimates for $|\dh|^s K_N^{h, +}$ with a general $s \in \R$ to be used in both parts of the proof. Let $s \in \R$. Since $\xi \in [- \frac{3\pi}{4h}, \frac{3\pi}{4h}] \cap \text{supp} \, \psi_N^{h, +}$, we have $|\xi| \sim h^{-1} N$, so that
\begin{align}
    (\sin (h \xi) - h \xi)' = h (\cos (h \xi) - 1) = -2 h \sin (\tfrac{h \xi}{2})^2 \sim - h^3 \xi^2 \sim - h N^2.
\label{der}
\end{align}

\noi
Thus, using \eqref{der} along with the fact that $|h^{-1} \sin (h \xi)| \sim |\xi| \sim h^{-1} N$, $|(\psi_N^{h, +})' (\xi)| \les h N^{-1}$, and $|(\psi_N^{h, +})'' (\xi)| \les h^2 N^{-2}$, we directly compute that if $x^2 \ll h^{-4} N^4 t^2$ or $x^2 \gg h^{-4} N^4 t^2$, then
\begin{align*}
    \bigg| \Big( \frac{1}{x - 6 h^{-3} t (\sin (h \xi) - h \xi)'} \Big( \frac{|h^{-1} \sin (h \xi)|^s \psi_N^{h, +} (\xi)}{x - 6 h^{-3} t (\sin (h \xi) - h \xi)'} \Big)' \Big)' \bigg| \les \frac{h^{- s + 2} N^{s-2}}{\max (x^2, h^{-4} N^4 t^2)} ,
\end{align*}

\noi
so that integration by parts twice yields
\begin{align}
    \big| |\dh|^s K_N^{h, +} (t, x) \big| \les \frac{h^{- s + 1} N^{s-1}}{\max (x^2, h^{-4} N^4 t^2)} .
\label{KNbdd1-1}
\end{align}

\noi
We also note that when $\xi \in [- \frac{3\pi}{4h}, \frac{3\pi}{4h}] \cap \text{supp} \, \psi_N^{h, +}$, we have
\begin{align}
    \Big| \frac{d^2}{d \xi^2} \big( - 6 h^{-3} t (\sin (h \xi) - h \xi) + \xi x \big) \Big| = 6 h^{-1} |t \sin (h \xi)| \sim |t| |\xi| \sim h^{-1} N |t| .
\label{van}
\end{align}

\noi
Thus, from the Van der Corput lemma (see \cite[Corollary~2.6.8]{Gra1}) along with the fact that $|h^{-1} \sin (h \xi)| \sim |\xi| \sim h^{-1} N$ and $|(\psi_N^{h, +})' (\xi)| \les h N^{-1}$, we obtain
\begin{align}
    \big| |\dh|^s K_N^{h, +} (t, x) \big| \les h^{-s + \frac 12} N^{s - \frac 12} |t|^{- \frac 12} .
\label{KNbdd1-2}
\end{align}

\noi
Also, for any dyadic $N_0 \leq 1$, we have the trivial bound
\begin{align}
    \bigg| \sum_{\substack{N \leq N_0 \\ \text{dyadic}}} |\dh|^s K_N^{h, +} (t, x) \bigg| \les h^{-s - 1} N_0^{s + 1} .
\label{KNbdd1-3}
\end{align}

We now consider each part of the lemma.

\smallskip \noi
(i) We first consider \eqref{KN+}. From \eqref{KNbdd1-1}, \eqref{KNbdd1-2}, and \eqref{KNbdd1-3} with $N_0 \sim h |t|^{- \frac 13}$ and $s = \frac 12$, we have for any $t \neq 0$ and $x \in \R$ that
\begin{align}
\begin{split}
    \bigg| \sum_{\substack{N \leq 1 \\ \text{dyadic}}} |\dh|^{\frac 12} K_N^{h, +} (t, x) \bigg| &\leq \bigg| \sum_{\substack{N \ll h |t|^{- \frac 13} \\ \text{dyadic}}} |\dh|^{\frac 12} K_N^{h, +} (t, x) \bigg| + \sum_{\substack{N \sim h |t|^{- \frac 12} |x|^{\frac 12} \\ \text{dyadic}}} \big| |\dh|^{\frac 12} K_N^{h, +} (t, x) \big| \\
    &\quad + \sum_{\substack{h |t|^{- \frac 13} \les N \leq 1, N \nsim h |t|^{- \frac 12} |x|^{\frac 12} \\ \text{dyadic}}} \big| |\dh|^{\frac 12} K_N^{h, +} (t, x) \big| \\
    &\les |t|^{- \frac 12} + |t|^{- \frac 12} + \sum_{\substack{h |t|^{- \frac 13} \les N \leq 1 \\ \text{dyadic}}} h^{\frac 92} N^{- \frac 92} |t|^{-2}  \\
    &\les |t|^{- \frac 12} ,
\end{split}
\label{KNbdd1}
\end{align}

\noi
which gives \eqref{KN+}.

We now consider \eqref{KN-}. When $\xi \in [-\frac{7 \pi}{4h}, - \frac{\pi}{4h}] \cap \text{supp}\, \psi_N^{h, -}$, we have $|\xi + \tfrac{\pi}{h}| \sim h^{-1} N$, so that
\begin{align}
    (\sin (h \xi) + h \xi)' = h (\cos (h \xi) + 1) = 2 h \cos (\tfrac{h \xi}{2})^2 = 2h \sin (\tfrac{h \xi}{2} + \tfrac{\pi}{2})^2 \sim h^3 |\tfrac{\xi}{h} + \tfrac{\pi}{h}|^2 \sim h N^2 .
\label{der2}
\end{align}

\noi
Thus, using \eqref{der2} along with the fact that $|h^{-1} \sin (h \xi)| = |h^{-1} \sin (h \xi + \pi)| \sim |\xi + \frac{\pi}{h}| \sim h^{-1} N$, $|(\psi_N^{h, -})' (\xi)| \les h N^{-1}$, and $|(\psi_N^{h, -})'' (\xi)| \les h^2 N^{-2}$, we directly compute that if $(x + 12 h^{-2}t)^2 \ll h^{-4} N^4 t^2$ or $(x + 12 h^{-2} t)^2 \gg h^{-4} N^4 t^2$, then
\begin{align}
\begin{split}
    \bigg| &\Big( \frac{1}{x + 12 h^{-2} t - 6 h^{-3} t (\sin (h \xi) + h \xi)'} \Big( \frac{|h^{-1} \sin (h \xi)|^{\frac 12} \psi_N^{h, -} (\xi)}{x + 12 h^{-2} t - 6 h^{-3} t (\sin (h \xi) + h \xi)'} \Big)' \Big)' \bigg| \\
    &\les \frac{h^{\frac 32} N^{- \frac 32}}{\max ((x + 12 h^{-2} t)^2, h^{-4} N^4 t^2)} .
\end{split}
\label{ibp2}
\end{align}

\noi
When $\xi \in [\frac{\pi}{4 h}, \frac{7 \pi}{4 h}] \cap \text{supp} \, \psi_N^{h, -}$, we have $|h^{-1} \sin (h \xi)| = |h^{-1} \sin (h \xi - \pi)| \sim |\xi - \frac{\pi}{h}| \sim h^{-1} N$. Thus, using similar steps, the estimate \eqref{ibp2} also holds in this case. Thus, by integration by parts twice, we see that if $(x + 12 h^{-2}t)^2 \ll h^{-4} N^4 t^2$ or $(x + 12 h^{-2} t)^2 \gg h^{-4} N^4 t^2$, then
\begin{align}
    \big| |\dh|^{\frac 12} K_N^{h, -} (t, x) \big| \les \frac{h^{\frac 12} N^{- \frac 12}}{\max ((x + 12 h^{-2} t)^2, h^{-4} N^4 t^2)} .
\label{KNbdd2-1}
\end{align}

\noi
Also, since $|h^{-1} \sin (h \xi)| \sim h^{-1} N$ for $\xi \in ([- \frac{7 \pi}{4 h}, - \frac{\pi}{4 h}] \cup [\frac{\pi}{4 h}, \frac{7 \pi}{4 h}] ) \cap \text{supp} \, \psi_N^{h, -}$, we know that \eqref{van} holds for such $\xi$, so that from the Van der Corput lemma (see \cite[Corollary~2.6.8]{Gra1}) as in \eqref{KNbdd1-2}, we obtain
\begin{align}
    \big| |\dh|^{\frac 12} K_N^{h, -} (t, x) \big| \les  |t|^{- \frac 12} .
\label{KNbdd2-2}
\end{align}

\noi
Moreover, for any dyadic $N_0 \leq 1$, we have the trivial bound
\begin{align}
    \bigg| \sum_{\substack{N \leq N_0 \\ \text{dyadic}}} |\dh|^{\frac 12} K_N^{h, -} (t, x) \bigg| \les h^{- \frac 32} N_0^{\frac 32} .
\label{KNbdd2-3}
\end{align}

\noi
Thus, the desired estimate \eqref{KN-} follows from similar steps as in \eqref{KNbdd1} along with \eqref{KNbdd2-1}, \eqref{KNbdd2-2}, and \eqref{KNbdd2-3}.

\smallskip \noi
(ii) We first consider \eqref{KN+1}. Since $h < N \leq 1$, from \eqref{KNbdd1-1}, \eqref{KNbdd1-2}, and \eqref{KNbdd1-3}, we have
\begin{align}
    \sup_{t \in [-2T, 2T]} |K_N^{h, +} (t, x)| \les
    \begin{cases}
        h^{-1} N & \text{if } |x| \leq 1  \\
        h^{- \frac 12} N^{\frac 12} |x|^{- \frac 12} & \text{if } 1 \leq |x| \les h^{-2} N^2 T \\
        h N^{-1} |x|^{-2} & \text{if } |x| \gg h^{-2} N^2 T .
    \end{cases}
\label{KNbdd4-1}
\end{align}

\noi
The desired estimate \eqref{KN+1} then follows directly from \eqref{KNbdd4-1}.

We now consider \eqref{KN+2}. Let us write
\begin{align*}
    K^{h, +} (t, x) := \sum_{\substack{N \leq h \\ \text{dyadic}}} K_N^{h, +} (t, x) = \frac{1}{2\pi} \int_{- \frac{3 \pi}{4h}}^{\frac{3 \pi}{4h}} e^{- 6 i h^{-3} t (\sin (h \xi) - h \xi) + i \xi x} \psi^{h, +} (\xi) d \xi
\end{align*}

\noi
with
\begin{align*}
    \psi^{h, +} (\xi) = \sum_{\substack{N \leq h \\ \text{dyadic}}} \psi_N^{h, +} (\xi) .
\end{align*}

\noi
Using the fact that $|\xi| \les 1$ on the support of $\psi^{h, +}$, we have
\begin{align*}
    &| h^{-3} t (\sin (h \xi) - h \xi)' | = |h^{-2} t (\cos (h \xi) - 1)| \les |t| |\xi|^2 \les |t| , \\
    &| h^{-3} t (\sin (h \xi) - h \xi)'' | = |h^{-1} t \sin (h \xi)| \les |t| |\xi| \les |t|,
\end{align*}

\noi
so that using the fact that $|(\psi^{h, +})' (\xi)| \les 1$ and $|(\psi^{h, +})'' (\xi)| \les 1$, we directly compute that
\begin{align*}
    \big| \big( e^{- 6 i h^{-3} t (\sin (h \xi) - h \xi)} \psi^{h, +} (\xi) \big)'' \big| \les t^2 .
\end{align*}

\noi
Thus, by integration by parts twice, we get
\begin{align}
    \sup_{t \in [-2T, 2T]} |K^{h, +} (t, x)| \les
    \begin{cases}
        1 & \text{if } |x| \leq 1  \\
        T^2 |x|^{-2} & \text{if } |x| > 1 .
    \end{cases}
\label{KNbdd4-2}
\end{align}

\noi
The desired estimate \eqref{KN+2} then follows direct from \eqref{KNbdd4-2}.
\end{proof}

We now show the following list of useful linear estimates for not only the linear propagator $e^{- 6 h^{-2} t (\dh - \dx)}$ but also the Airy flow $e^{-t \dx^3}$.

\begin{lemma}
\label{LEM:lin_d}
Let $0 < h \leq 1$. Let $f$ be a function defined on $\R$.

\smallskip \noi
\textup{(i)~($H^s$-isometry)} For any $s \in \R$, we have
\begin{align*}
    \big\| e^{- 6 h^{-2} t (\dh - \dx)} P_{\frac{\pi}{h}} f \big\|_{C_t (\R ; H_x^s (\R))} \leq \| f \|_{H^s (\R)} 
\end{align*}

\noi
and
\begin{align*}
    \big\| e^{-t \dx^3} f \big\|_{C_t (\R ; H_x^s ( \R))} = \| f \|_{H^s (\R)} .
\end{align*}

\smallskip \noi
\textup{(ii)~(Strichartz estimates)} Let $4 \leq q \leq \infty$ and $2 \leq r \leq \infty$ be such that $\frac{2}{q} + \frac{1}{r} = \frac 12$. Then, we have
\begin{align*}
    \big\| |\dh|^{\frac 1q} e^{- 6 h^{-2} t (\dh - \dx)} P_{\frac{\pi}{h}} f \big\|_{L^q_t ( \R ; L^{r}_x (\R))} \les \| f \|_{L^2 (\R)} ,
\end{align*}

\noi
where the underlying constant is independent of $h$. Moreover, we have
\begin{align*}
    \big\| |\dx|^{\frac 1q} e^{- t \dx^3} f \big\|_{L^q_t ( \R ; L^{r}_x (\R))} \les \| f \|_{L^2 (\R)} .
\end{align*}

\smallskip \noi
\textup{(iii)~(Maximal function estimates)} Let $s > \frac 34$ and $T > 0$. Then, we have
\begin{align*}
    \big\| e^{- 6 h^{-2} t (\dh - \dx)} P_{\frac{\pi}{4h}} f \big\|_{L_x^2 (\R ; L_t^\infty ([-T, T]))} \les (1 + T) \| \jb{\dh}^s f \|_{L^2 (\R)} ,
\end{align*}

\noi
where the underlying constant is independent of $h$. Moreover, we have
\begin{align*}
    \big\| e^{- t \dx^3} f \big\|_{L_x^2 ( \R ; L_t^\infty ([-T, T]))} \les (1 + T)^{\frac 12} \| f \|_{H^s (\R)}.
\end{align*}

\smallskip \noi
\textup{(iv)~(Local smoothing estimates)} For any $0 \leq \ta \leq 1$, we have
\begin{align*}
    \big\| \dh^\pm e^{- 6 \ta h^{-2} t (\dh - \dx) - (1 - \ta) t \dx^3} P_{\frac{\pi}{h}} f \big\|_{L_x^\infty ( \R ; L_t^2 (\R))} \les \| f \|_{L^2 (\R)} ,
\end{align*}

\noi
where the underlying constant is independent of $h$ and $\ta$. The same estimate holds if $\dh^\pm$ is replaced by $\dx$. Moreover, we have
\begin{align*}
    \| \dx e^{- t \dx^3} f \|_{L_x^\infty ( \R ; L_t^2 (\R))} \les \| f \|_{L^2 (\R)}.
\end{align*}
\end{lemma}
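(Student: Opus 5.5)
We prove the four parts of Lemma~\ref{LEM:lin_d} separately. For each part we take the Airy statement from the literature and derive the discrete statement from the kernel bounds of Lemma~\ref{LEM:KN} or from a change of variables.

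\emph{Part (i).} The symbol $e^{-6ih^{-3}t(\sin(h\xi)-h\xi)}$ has modulus one, so Plancherel gives
\[
\| e^{-6h^{-2}t(\dh-\dx)}P_{\frac{\pi}{h}}f\|_{H^s}=\|P_{\frac{\pi}{h}}f\|_{H^s}\le\|f\|_{H^s}.
\]
Continuity in time follows from dominated convergence. The same argument applies to the Airy group, with equality.

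\emph{Part (ii).} We use the standard $TT^*$ argument.
\begin{itemize}
\item Split $P_{\frac{\pi}{h}}$ using the periodic Littlewood--Paley decomposition \eqref{per_dya}. On $[-\frac{\pi}{h},\frac{\pi}{h}]$ this writes the operator as a sum of a ``$+$'' piece and a ``$-$'' piece.
\item For each piece, the kernel of $|\dh|^{\frac12}e^{-6h^{-2}t(\dh-\dx)}$ is exactly $\sum_N|\dh|^{\frac12}K_N^{h,\pm}$. Lemma~\ref{LEM:KN}~(i) gives the dispersive bound $L^1\to L^\infty$ with constant $|t|^{-\frac12}$.
\item Interpolate this against the $L^2$ isometry. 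This yields $\| |\dh|^{\frac{1}{q}\cdot 2} e^{\cdot}\|_{L^{r'}\to L^r}\les |t|^{-(\frac12-\frac1r)}$.
\item Hardy--Littlewood--Sobolev in time then closes the $TT^*$ estimate at the admissible exponent $\frac2q=\frac12-\frac1r$. This gives the $|\dh|^{\frac1q}$ gain.
\item Alternatively, apply the abstract Keel--Tao/Kenig--Ponce--Vega argument for the operator $|\dh|^{\frac12}U(t)$.
\end{itemize}
The Airy case is classical (Kenig--Ponce--Vega).

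\emph{Part (iii).} This is the main obstacle. We decompose $P_{\frac{\pi}{4h}}f$ dyadically with $\mathbf{P}_N^{h,+}$ only; the cutoff keeps us away from the inflection points $\pm\frac{\pi}{h}$.
\begin{itemize}
\item \emph{High frequencies, $h<N\le1$.} Use $TT^*$ on the localized maximal operator, replacing $\psi_N$ by a fattened cutoff. The kernel of $TT^*$ is $K_N^{h,+}(t-t',x-y)$ with $|t-t'|\le2T$. By Young's inequality and \eqref{KN+1}, the $L^2_x L^1_t\to L^2_xL^\infty_t$ norm is bounded by $(1+T)^{\frac12}h^{-\frac32}N^{\frac32}$. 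Hence each dyadic piece satisfies
\[
\|\sup_t|U(t)\mathbf{P}_Nf|\|_{L^2_x}\les(1+T)^{\frac12}(h^{-1}N)^{\frac34}\|\mathbf{P}_Nf\|_{L^2}.
\]
Since $|\dh|\sim h^{-1}N$ on this piece, the extra $\jb{\dh}^{s-\frac34}$ with $s>\frac34$ makes the sum over $N$ converge.
\item \emph{Low frequencies, $N\le h$.} Here \eqref{KN+2} bounds the kernel's maximal function in $L^1_x$ by $(1+T)^2$. Young then gives a bound $(1+T)^2\|f\|_{L^2}$ directly; the square root from $TT^*$ turns this into $(1+T)$.
\end{itemize}
The Airy version is Kenig--Ponce--Vega's maximal estimate. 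The care needed is to check that the fattened cutoffs still satisfy the kernel bounds of Lemma~\ref{LEM:KN}.

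\emph{Part (iv).} Write the phase as $\phi_\theta(\xi)=6\theta h^{-3}(\sin(h\xi)-h\xi)+(1-\theta)\xi^3$ on $[-\frac{\pi}{h},\frac{\pi}{h}]$.
\begin{itemize}
\item Compute $|\phi_\theta'(\xi)|=6\theta h^{-2}(1-\cos h\xi)+3(1-\theta)\xi^2$. Because $1-\cos h\xi=2\sin^2(\frac{h\xi}2)\sim h^2\xi^2$ on this interval, $|\phi_\theta'(\xi)|\ges\xi^2$ uniformly in $\theta$ and $h$. Moreover, $\phi_\theta$ is odd and monotone, so the change of variables $\tau=\phi_\theta(\xi)$ is injective.
\item Fix $x$. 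Plancherel in $t$ and the substitution $\tau=\phi_\theta(\xi)$ give
\[
\|\dh^\pm U_\theta(t)f(x)\|_{L^2_t}^2\les\int\frac{|\tfrac{2}{h}\sin\tfrac{h\xi}{2}|^2}{|\phi_\theta'(\xi)|}|\ft f(\xi)|^2\,d\xi\les\|f\|_{L^2}^2,
\]
using $|\tfrac{2}{h}\sin\tfrac{h\xi}{2}|\le|\xi|$.
\item Replacing $\dh^\pm$ by $\dx$ or taking $\theta=0$ follows from the same computation.
\end{itemize}
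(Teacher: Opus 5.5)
Your proposal is correct and follows essentially the same route as the paper in all four parts: Plancherel for (i); the dispersive bound from Lemma~\ref{LEM:KN}~(i), complex interpolation with the $L^2$ isometry, Hardy--Littlewood--Sobolev and $TT^*$ for (ii); the dyadic $TT^*$ argument with \eqref{KN+1} for $h<N\le 1$ and \eqref{KN+2} for $N\le h$ for (iii); and Plancherel in $t$ after the substitution $\tau=\phi_\theta(\xi)$ with $\phi_\theta'\ge \frac{12}{\pi^2}\xi^2$ for (iv). Two small fixes are needed: in (iv) the phase is $-6\theta h^{-3}(\sin(h\xi)-h\xi)+(1-\theta)\xi^3$ (with your sign the two terms of $\phi_\theta'$ have opposite signs and nearly cancel near $\xi=0$ when $\theta=\frac12$, although the formula you then write for $|\phi_\theta'|$ is the correct one); and in (ii) the operator with the sharp cutoff $P_{\frac{\pi}{h}}$ is not literally convolution with $\sum_N|\dh|^{\frac12}K_N^{h,\pm}$, because the kernels $K_N^{h,-}$ integrate past $\pm\frac{\pi}{h}$ and $P_{\frac{\pi}{h}}$ is not bounded on $L^1$, which the paper handles by proving the estimate with a smooth $L^1$-bounded cutoff $\mathcal{P}_{\frac{\pi}{h}}$ and then using $P_{\frac{\pi}{h}}=\mathcal{P}_{\frac{\pi}{h}}P_{\frac{\pi}{h}}$.
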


\begin{proof}
(i) The $H^s$-isometry follows directly from Plancherel's identity. The continuity in time follows from the dominated convergence theorem.

\smallskip \noi
(ii) The Strichartz estimates for the Airy flow $e^{- t \dx^3}$ are classical. See, for example, \cite{KPV}.

We now prove the Strichartz estimates for $e^{-6 h^{-2} t (\dh - \dx)}$. Let us first consider a smooth frequency cutoff on $[-\frac{\pi}{h}, \frac{\pi}{h}]$ instead of the sharp frequency cutoff $P_{\frac{\pi}{h}}$. 
Let $\phi$ be a smooth function supported on $\{ \xi \in \R: |\xi| \leq \frac 54 \}$ such that $\phi (\xi) = 1$ for all $|\xi| \leq 1$ and for any $R > 0$, we define $\phi_R (\cdot) = \phi (R^{-1} \cdot)$. We denote by $\mathcal{P}_R$ as the Fourier multiplier operator by $\phi_R$. Note that $\mathcal{P}_{R}$ is bounded uniformly in $R$ in $L^p (\R)$ for any $1 \leq p \leq \infty$, which follows easily from Young's convolution inequality and a scaling argument. Using the periodic dyadic decomposition \eqref{per_dya} and recalling the definitions of $K_N^{h, +}$ and $K_N^{h, -}$ in \eqref{defKN}, we write
\begin{align}
    |\dh|^{\frac 12} e^{- 6 h^{-2} t (\dh - \dx)} \mathcal{P}_{\frac{\pi}{h}} f = \sum_{\s = \pm} \sum_{\substack{N \leq 1 \\ \text{dyadic}}} |\dh|^{\frac 12} K_N^{h, \s} (t) * \mathcal{P}_{\frac{\pi}{h}} f .
\label{KNker}
\end{align}

\noi
By \eqref{KNker}, Young's convolution inequality, Lemma~\ref{LEM:KN}~(i), and the boundedness of $\mathcal{P}_{\frac{\pi}{h}}$ in $L^1 (\R)$, we obtain the following dispersive estimate:
\begin{align}
\begin{split}
    \big\| |\dh|^{\frac 12} e^{- 6 h^{-2} t (\dh - \dx)} \mathcal{P}_{\frac{\pi}{h}} f \big\|_{L_x^\infty (\R)} &= \bigg\| \sum_{\s = \pm} \sum_{\substack{N \leq 1 \\ \text{dyadic}}} |\dh|^{\frac 12} K_N^{h, \s} (t) * \mathcal{P}_{\frac{\pi}{h}} f \bigg\|_{L_x^\infty (\R)} \\
    &\les \bigg\| \sum_{\s = \pm} \sum_{\substack{N \leq 1 \\ \text{dyadic}}} |\dh|^{\frac 12} K_N^{h, \s} (t, x) \bigg\|_{L_x^\infty (\R)} \| \mathcal{P}_{\frac{\pi}{h}} f \|_{L^1 (\R)} \\
    &\les |t|^{- \frac 12} \| f \|_{L^1 (\R)}.
\end{split}
\label{disp}
\end{align}

\noi
Using a complex interpolation of \eqref{disp} and the $L^2$-isometry in part (i), we obtain
\begin{align}
    \big\| |\dh|^{\frac{2}{q}} e^{- 6 h^{-2} t (\dh - \dx)} \mathcal{P}_{\frac{\pi}{h}} f \big\|_{L_x^r (\R)} \les |t|^{- (\frac 12 - \frac 1r)} \| f \|_{L^{r'} (\R)} 
\label{disp2}
\end{align}

\noi
for any $2 \leq r \leq \infty$ and $4 \leq q \leq \infty$ satisfying $\frac 2q + \frac 1r = \frac 12$, where $r'$ is the H\"older conjugate of $r$. Thus, for any space-time function $F$ defined on $\R \times \R$, from Minkowski's integral inequality, \eqref{disp2}, and the Hardy-Littlewood-Sobolev inequality, we obtain
\begin{align*}
    \bigg\| &\int_\R |\dh|^{\frac 2q} e^{- 6 h^{-2} (t - t') (\dh - \dx)} \mathcal{P}_{\frac{\pi}{h}} F (t', x) dt' \bigg\|_{L_t^q ( \R ; L_x^r (\R))} \\
    &\leq \bigg\| \int_\R \big\| |\dh|^{\frac 2q} e^{- 6 h^{-2} (t - t') (\dh - \dx)} \mathcal{P}_{\frac{\pi}{h}} F (t', x) \big\|_{L_x^r (\R)} dt' \bigg\|_{L_t^q (\R)} \\
    &\les \bigg\| \int_\R |t - t'|^{- (\frac 12 - \frac 1r)} \| F (t', x) \|_{L_x^{r'} (\R)} dt' \bigg\|_{L_t^q (\R)} \\
    &\les \| F \|_{L_t^{q'} ( \R ; L_x^{r'} (\R) )} ,
\end{align*}

\noi
where $q'$ is the H\"older conjugate of $q$. Therefore, from a standard $TT^*$-argument, we obtain
\begin{align*}
    \big\| |\dh|^{\frac{2}{q}} e^{- 6 h^{-2} t (\dh - \dx)} \mathcal{P}_{\frac{\pi}{h}} f \big\|_{L_t^q (\R; L_x^r (\R))} \les \| f \|_{L^2 (\R)} .
\end{align*}

\noi
The desired Strichartz estimate then follows from the fact that $P_{\frac{\pi}{h}} = \mathcal{P}_{\frac{\pi}{h}} P_{\frac{\pi}{h}}$.

\smallskip \noi
(iii) The maximal function estimate for the Airy flow $e^{-t \dx^3}$ is established in \cite{KPV}. We focus on proving the maximal function estimate for $e^{- 6 h^{-2} t (\dh - \dx)}$. Since we have the frequency restriction $\xi \in [- \frac{\pi}{4h}, \frac{\pi}{4h}]$, we use the periodic dyadic decomposition \eqref{per_dya} to write
\begin{align}
    e^{- 6 h^{-2} t (\dh - \dx)} P_{\frac{\pi}{4h}} f = \sum_{\substack{N \leq 1 \\ \text{dyadic}}} e^{- 6 h^{-2} t (\dh - \dx)} \mathbf{P}_N^{h, +} P_{\frac{\pi}{4h}} f = \sum_{\substack{N \leq 1 \\ \text{dyadic}}} K_N^{h, +} * P_{\frac{\pi}{4h}} f ,
\label{KNker2}
\end{align}

\noi
where $K_N^{h, +}$ is defined in \eqref{defKN}.
Then, for any dyadic number $h < N \leq 1$ and space-time function $F$ defined on $\R \times h\Z$, by \eqref{KNker2}, Minkowski's integral inequality, \eqref{KN+1} in Lemma~\ref{LEM:KN}~(ii), and Young's convolution inequality, we obtain
\begin{align}
\begin{split}
    \bigg\| &\int_{-T}^T  e^{- 6 h^2 (t - t') (\dh - \dx)} (\mathbf{P}_N^{h, +})^2 P_{\frac{\pi}{4h}}^2 F (t', x) dt' \bigg\|_{L_x^2 ( \R ; L_t^\infty ([-T, T]))} \\
    &\leq \bigg\| \int_{-T}^T \sup_{t \in [-T, T]} | K_N^{h, +} (t - t', \cdot) | * |\mathbf{P}_N^{h, +} P_{\frac{\pi}{4h}}^2 F (t')| dt'  \bigg\|_{L_x^2 (\R)} \\
    &\les (1 + T)^{\frac 12} h^{- \frac 32} N^{\frac 32} \| F \|_{L_x^2 ( \R ; L_t^1 ([-T, T]))}.
\end{split}
\label{max_est}
\end{align}

\noi
Thus, a standard $TT^*$-argument along with \eqref{max_est} yields
\begin{align}
    \big\| e^{- 6 h^{-2} t (\dh - \dx)} \mathbf{P}_N^{h, +} P_{\frac{\pi}{4h}} f \big\|_{L_x^2 (\R ; L_t^\infty ([-T, T]))} \les (1 + T)^{\frac 14} h^{- \frac 34} N^{\frac 34} \| f \|_{L^2 (\R)} 
\label{maxN1}
\end{align}

\noi
when $h < N \leq 1$.
Similarly, by proceeding as in \eqref{max_est} but with \eqref{KN+2} in Lemma~\ref{LEM:KN}~(ii), we obtain
\begin{align}
    \bigg\| \sum_{\substack{N \leq h \\ \text{dyadic}}}  e^{- 6 h^{-2} t (\dh - \dx)} \mathbf{P}_N^{h, +} P_{\frac{\pi}{4h}} f \bigg\|_{L_x^2 (\R ; L_t^\infty ([-T, T]))} \les (1 + T) \| f \|_{L^2 (\R)} .
\label{maxN2}
\end{align}

\noi
Therefore, using \eqref{KNker2}, \eqref{maxN1}, \eqref{maxN2}, and a Cauchy-Schwarz inequality in dyadic $N$ with $h < N \leq 1$, we get
\begin{align*}
    \big\| &e^{- 6 h^{-2} t (\dh - \dx)} P_{\frac{\pi}{4h}} f \big\|_{L_x^2 (\R ; L_t^\infty ([-T, T]))} \\
    &\leq \bigg\| \sum_{\substack{N \leq h \\ \text{dyadic}}}  e^{- 6 h^{-2} t (\dh - \dx)} \mathbf{P}_N^{h, +} P_{\frac{\pi}{4h}} f \bigg\|_{L_x^2 (\R ; L_t^\infty ([-T, T]))} \\
    &\quad + \sum_{\substack{h < N \leq 1 \\ \text{dyadic}}} \big\| e^{- 6 h^{-2} t (\dh - \dx)} \mathbf{P}_N^{h, +} P_{\frac{\pi}{4h}} f \big\|_{L_x^2 (\R ; L_t^\infty ([-T, T]))} \\
    &\les (1 + T) \| f \|_{L^2 (\R)} + (1 + T)^{\frac 14} \sum_{\substack{h < N \leq 1 \\ \text{dyadic}}} h^{- \frac 34} N^{\frac 34} \| f \|_{L^2 (\R^2)} \\
    &\les (1 + T) \| \jb{\dh}^s f \|_{L^2 (\R)}
\end{align*}

\noi
for any $s > \frac 34$, which is the desired estimate.

\smallskip \noi
(iv) The proof of the local smoothing estimates follows from minor modifications of \cite[Proposition~5.1~(ii)]{HKY} using change of variables and Plancherel's identity, which follows from the argument in \cite{CS}. The main observation is that the phase function 
$$\phi (\xi) = - 6 \ta h^{-3} (\sin (h \xi) - h \xi) + (1 - \ta) \xi^3$$ has derivative
$$\phi' (\xi) = - 6 \ta h^{-2} (\cos (h \xi) - 1) + 3 (1 - \ta) \xi^2,$$

\noi
which is positive for almost every $\xi \in [- \frac{\pi}{h}, \frac{\pi}{h})$. Also, from \eqref{der}, we have 
\begin{align*}
    \phi' (\xi) = 12 \ta h^{-2} \sin (\tfrac{h \xi}{2})^2 + 3 (1 - \ta) \xi^2 \geq \tfrac{12}{\pi^2} \ta \xi^2 + 3 (1 - \ta) \xi^2 \geq \tfrac{12}{\pi^2} \xi^2
\end{align*}

\noi
given $\xi \in [- \frac{\pi}{h}, \frac{\pi}{h})$ and $0 \leq \ta \leq 1$. We omit the rest of the steps and refer the readers to \cite[Proposition~5.1~(ii)]{HKY}.
\end{proof}

\begin{remark} \rm
The Strichartz estimates in Lemma~\ref{LEM:lin_d}~(ii) also hold in the discrete setting. Let $f^h$ be a function defined on $h \Z$ and let $4 \leq q \leq \infty$ and $2 \leq r \leq \infty$ be such that $\frac{2}{q} + \frac{1}{r} = \frac 12$. Then, we have
\begin{align*}
    \big\| |\dh|^{\frac 1q} e^{- 6 h^{-2} t \dh} f^h \big\|_{L_t^q (\R; L_\ld^r (h \Z))} \les \| f^h \|_{L^2 (h \Z)} .
\end{align*}

\noi
This can be obtained from using similar steps as in proof of Lemma~\ref{LEM:lin_d}~(ii). Also, for the Strichartz estimates, it is crucial to use $|\dh|$ instead of $|\dh^\pm|$ or $|\dx|$.

For the maximal function estimate in Lemma~\ref{LEM:lin_d}~(iii), we make the projection onto a smaller frequency set $[- \frac{\pi}{2 h}, \frac{\pi}{2h}]$. We remark that this projection captures the right going wave of the Toda dynamics.
\end{remark}

For later convenience, we will use the $X^{s,b}$-spaces, or the Fourier restriction norm method introduced by Bourgain in \cite{Bour93}.
Given $s, b \in \R$, we define the following $X^{s,b}$-norm for the Airy flow:
\begin{align*}
    \| u \|_{X^{s,b}} := \big\| e^{t \dx^3} u \big\|_{H^b_t H^s_x} = \| \jb{\xi}^s \jb{\tau - \xi^3}^b \ft u (\tau, \xi) \|_{L_\tau^2 L_\xi^2} .
\end{align*}

\noi
Given $0 < h \leq 1$ and $s, b \in \R$, we define the following $X^{s, b, h}$-norm for the linear propagator $e^{- 6 h^{-2} t (\dh - \dx)}$:
\begin{align*}
    \| u \|_{X^{s,b, h}} := \big\| e^{6 h^{-2} t (\dh - \dx)} u \big\|_{H^b_t H^s_x} = \| \jb{\xi}^s \jb{\tau + 6 h^{-3} t (\sin (h \xi) - h \xi)}^b \ft u (\tau, \xi) \|_{L_\tau^2 L_\xi^2} .
\end{align*}

\noi
For $X = X^{s, b}$ or $X^{s, b, h}$, given an interval $I \subset \R$, we define $X_I$ as the norm with the time restriction:
\begin{align*}
    \| u \|_{X_I} := \inf \{ \| v \|_{X}: u = v \text{ on } I \}.
\end{align*}

\noi
For convenience for later presentation, we write $X^{s, b, 0} = X^{s, b}$.

We first recall the following linear estimates. For a proof, see \cite{Bour93, GTV, Tao}.

\begin{lemma}
\label{LEM:Xsb_lin}
Let $0 \leq h \leq 1$ and $s \in \R$.

\smallskip \noi
\textup{(i)} For any $b \in \R$ and $I \subset \R$ a closed interval with $0 < |I| \leq 1$, we have
\begin{align*}
    \big\| e^{- 6 h^{-2} t (\dh - \dx)} f \big\|_{X_I^{s, b, h}} \les \| f \|_{H^s (\R)}.
\end{align*}

\smallskip \noi
\textup{(ii)} For any $\frac 12 < b < 1$, $t_0 \in \R$, and $I \subset \R$ a closed interval with $t_0 \in I$ and $0 < |I| \leq 1$, we have
\begin{align*}
    \bigg\| \int_{t_0}^t e^{6 h^{-2} (t - t') (\dh - \dx)} F (t') dt' \bigg\|_{X^{s, b, h}_{I}} \les |I|^{1 - b} \| F \|_{L_t^2 H_x^s (I \times \R)}.
\end{align*}
\end{lemma}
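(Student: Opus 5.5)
The plan is to conjugate away the linear flow, which turns both estimates into standard one-dimensional Sobolev estimates in time. Write $U_h(t) := e^{-6h^{-2}t(\dh-\dx)}$ when $h>0$, and $U_0(t) := e^{-t\dx^3}$ (recall the convention $X^{s,b,0}=X^{s,b}$). By definition,
\begin{align*}
\| u \|_{X^{s,b,h}} = \big\| \jb{\tau}^b \jb{\xi}^s \ft{(U_h(-\cdot) u)} (\tau,\xi) \big\|_{L^2_\tau L^2_\xi}.
\end{align*}
Here the sign of the phase is whatever makes $U_h(-t)U_h(t)=\Id$. I will read the propagator in part (ii) as $U_h(t-t')$, i.e.\ the Duhamel operator for the same linear flow; the sign written in front of $6h^{-2}$ there is immaterial for the argument. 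After this conjugation, every estimate reduces to a scalar estimate in $t$ with values in $H^s_x$. The constants come from time-cutoff estimates only, so they are independent of $h$ and of $s$.

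\emph{Part (i).} Fix a smooth bump $\eta$ with $\eta=1$ on $[-1,1]$ and support in $[-2,2]$. Let $t_I$ be the center of $I$. The function $v(t) := \eta(t-t_I)\,U_h(t)f$ agrees with $U_h(t)f$ on $I$, because $|I|\le1$. Its conjugate is $U_h(-t)v(t)=\eta(t-t_I)f$, and therefore
\begin{align*}
\|U_h(t)f\|_{X_I^{s,b,h}} \le \|v\|_{X^{s,b,h}} = \|\eta(\cdot-t_I)\|_{H^b_t}\,\|f\|_{H^s} .
\end{align*}
The factor $\|\eta(\cdot-t_I)\|_{H^b_t}$ is a constant depending only on $b$, since translation does not change the $H^b$ norm.

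\emph{Part (ii).} Translating in time, we may assume $t_0=0\in I$. Set $G(t') := U_h(-t')F(t')\ind_I(t')$. On $I$, the Duhamel term then equals $U_h(t)\int_0^t G(t')\,dt'$. Extend it to all of $\R$ by $w(t) := \eta_{|I|}(t)\,U_h(t)\int_0^t G$, where $\eta_{|I|}$ is a cutoff adapted to scale $|I|$ (for instance $\eta(t/|I|)$). Conjugating reduces the claim to the scalar, $H^s_x$-valued estimate
\begin{align*}
\Big\|\eta_{|I|}(t)\int_0^t G(t')\,dt'\Big\|_{H^b_t H^s_x}\les |I|^{1-b}\|G\|_{L^2_tH^s_x},
\end{align*}
for $G$ supported in an interval of length $|I|\le1$ containing $0$. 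The conjugation costs nothing on the right-hand side, because $U_h$ is an isometry on $H^s$ and hence $\|G\|_{L^2_tH^s_x}=\|F\|_{L^2_tH^s_x(I\times\R)}$.

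To prove the scalar estimate, I will interpolate between $b=0$ and $b=1$. At $b=0$, the Cauchy–Schwarz inequality on the support gives
\begin{align*}
\Big|\int_0^t G\Big| \le |I|^{1/2}\|G\|_{L^2},
\end{align*}
and since $\eta_{|I|}$ lives on a set of size $|I|$, the $L^2_t$ norm is $\les |I|\,\|G\|_{L^2}$. At $b=1$, the product rule gives $\|\partial_t(\cdots)\|_{L^2}\les \|G\|_{L^2} + |I|^{-1}|I|\,\|G\|_{L^2}$. Interpolating the two bounds yields $|I|^{1-b}$. Alternatively, one can argue directly on the Fourier side, splitting $\int_0^t G = \int \ft G(\tau)\frac{e^{it\tau}-1}{i\tau}\,d\tau$ into $|\tau|\lesssim|I|^{-1}$ and $|\tau|\gg|I|^{-1}$, as in \cite{GTV,Tao}.

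The main obstacle I expect is the restriction $b<1$ combined with the sharp cutoff $\ind_I$. In the direct Fourier argument, this is exactly what makes the high-frequency piece converge. In the interpolation route, it is handled by the cutoff $\eta_{|I|}$. Since only $b\in(\tfrac12,1)$ is required, the interpolation route should suffice, and I regard the remaining steps as routine.
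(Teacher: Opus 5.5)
Your proof is correct. The paper gives no argument of its own and only refers to the standard literature. The textbook proofs there use the same conjugation by the linear flow and the same smooth time cutoffs. They obtain the scalar estimate in one of two ways. One is to split $\frac{e^{it\tau}-1}{i\tau}$ into $|\tau|\les |I|^{-1}$ and $|\tau|\gg|I|^{-1}$. The other combines the Duhamel bound $X^{s,b-1,h}\to X^{s,b,h}$ with the time-localization estimate $\|\ind_I F\|_{X^{s,b-1,h}}\les |I|^{1-b}\|F\|_{L^2_tH^s_x}$ (duality plus Sobolev embedding in $t$); this is where the hypothesis $b>\frac12$ enters.

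Your route is more elementary. With $\Phi(t)=\int_0^t G$, the endpoint bounds $\|\eta_{|I|}\Phi\|_{L^2_tH^s_x}\les |I|\,\|G\|_{L^2_tH^s_x}$ and $\|\eta_{|I|}\Phi\|_{H^1_tH^s_x}\les\|G\|_{L^2_tH^s_x}$ are immediate. H\"older's inequality on the Fourier side, $\|v\|_{H^b_tH^s_x}\le\|v\|_{L^2_tH^s_x}^{1-b}\|v\|_{H^1_tH^s_x}^{b}$, then finishes the proof for every $0\le b\le 1$. What it gives up is generality: it only handles forcing in $L^2_tH^s_x$. The literature's version also covers forcing in $X^{s,b'}$ with $-\frac12<b'<0$, which low-regularity well-posedness needs. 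The paper only uses the $L^2_tH^1_x$ version, in \eqref{lwp1}, so nothing is lost here.

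Two small corrections. First, the obstacle in your last paragraph does not exist. Since $\Phi'=G\in L^2_tH^s_x$, the function $\eta_{|I|}\Phi$ lies in $H^1_tH^s_x$ regardless of the sharp cutoff $\ind_I$. Your interpolation therefore uses neither $b<1$ nor $b>\frac12$.

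Second, the sign in (ii) is not immaterial. Your reading $e^{-6h^{-2}(t-t')(\dh-\dx)}$ is the intended one, and it is how the lemma is applied in \eqref{lwp1}. With the sign as printed, the conjugated Duhamel term picks up the phase $e^{12ith^{-3}(\sin(h\xi)-h\xi)}$. This costs a factor $\jb{12h^{-3}(\sin(h\xi)-h\xi)}^b$, which is unbounded in $\xi$, so the literal statement is false.
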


We will also need the continuity of the $X_T^{s, b, h}$-norm with respect to $T$. For a proof of the following lemma, see \cite[Lemma~2.7]{BLLZ}.

\begin{lemma}
\label{LEM:Xsb_cty}
Let $0 \leq h \leq 1$, $s, b \in \R$, $t_0 \in \R$, and $u \in X^{s, b}$. Then, the functions
\begin{align*}
    t \mapsto \| u \|_{X^{s, b, h}_{[t_0 - t, t_0]}} \quad \text{and} \quad t \mapsto \| u \|_{X^{s, b, h}_{[t_0, t_0 + t]}}
\end{align*}

\noi
are continuous on $[0, \infty)$.
\end{lemma}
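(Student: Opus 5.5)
The plan is to reduce the statement to a property of vector-valued Sobolev spaces in time, then prove monotonicity, left continuity and right continuity in turn. By definition, $\|u\|_{X^{s,b,h}} = \|F\|_{H^b_t H^s_x}$ with $F := e^{6h^{-2}t(\dh-\dx)}u$ (with $e^{t\dx^3}$ when $h=0$). Also $u=v$ on $I$ if and only if $e^{6h^{-2}t(\dh-\dx)}u = e^{6h^{-2}t(\dh-\dx)}v$ on $I$. Hence $\|u\|_{X^{s,b,h}_I}$ equals the restriction norm
\begin{align*}
N(I) := \inf\{\|G\|_{H^b(\R;H^s)} : G=F \text{ on } I\},
\end{align*}
where $H := H^s(\R)$ is a fixed Hilbert space and $F\in H^b(\R;H)$. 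So it suffices to show that $t\mapsto N([t_0,t_0+t])$ is continuous on $[0,\infty)$; the case $[t_0-t,t_0]$ is symmetric. First, $N$ is nondecreasing in $t$, since any extension from a larger interval is also an extension from a smaller one.

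For left continuity at $t>0$, take $t_n\uparrow t$. Monotonicity gives $L:=\lim N([t_0,t_0+t_n])\le N([t_0,t_0+t])$. For the reverse inequality, choose $G_n$ with $G_n=F$ on $[t_0,t_0+t_n]$ and $\|G_n\|_{H^b H}\le L+1/n$. Since $H^b(\R;H)$ is a Hilbert space, a subsequence converges weakly to some $G$ with $\|G\|\le L$. Restriction to an open subinterval is continuous for the distributional topology, so $G=F$ on each $(t_0,t_0+t_m)$, hence on $(t_0,t_0+t)$. If $b>\frac12$, continuity of $G$ and $F$ then gives equality on the closed interval; if $b\le \frac12$, equality almost everywhere is the relevant notion. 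Therefore $N([t_0,t_0+t])\le L$.

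For right continuity at $t>0$, I would use time dilation about $t_0$: set $D_\ld G(\tau):=G(t_0+(\tau-t_0)/\ld)$. The map $\ld\mapsto D_\ld G$ is continuous into $H^b(\R;H)$ at $\ld=1$. This holds by density of smooth compactly supported functions and the uniform bound $\|D_\ld\|_{H^b\to H^b}\les 1$ for $\ld$ near $1$, which is easy on the Fourier side. Fix $\delta>0$ and pick $G$ with $G=F$ on $I=[t_0,t_0+t]$ and $\|G\|\le N(I)+\delta$. Put $\widetilde G := D_\ld G + (F - D_\ld F)$. On $\ld I := [t_0,t_0+\ld t]$ we have $D_\ld G = D_\ld F$, so $\widetilde G = F$ there. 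Consequently
\begin{align*}
N(\ld I)\le \|D_\ld G\| + \|F - D_\ld F\| \longrightarrow \|G\| \le N(I)+\delta \quad (\ld\downarrow 1).
\end{align*}
The key point is that $F$ itself lies in $H^b(\R;H)$ because $u\in X^{s,b}$, which is what lets us correct the dilated extension back to $F$. Combined with monotonicity, this gives right continuity for $t>0$.

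The main obstacle is right continuity at the degenerate endpoint $t=0$, where dilation is useless. Here I would take $G$ nearly optimal for the one-point interval $\{t_0\}$; such $G$ exist with norm $0$ when $b\le\frac12$ (for instance $G=0$). I would then set $G_\varepsilon := G + \chi((\cdot-t_0)/\varepsilon)(F-G)$, where $\chi\in C_c^\infty$ and $\chi=1$ on $[0,1]$, so that $G_\varepsilon=F$ on $[t_0,t_0+\varepsilon]$. It remains to show $\|\chi((\cdot-t_0)/\varepsilon)(F-G)\|_{H^b H}\to 0$ as $\varepsilon\to0$. For $b>\frac12$, $F-G$ vanishes at $t_0$, and I would use the standard fact that $\chi(\cdot/\varepsilon)w\to0$ in $H^b$ when $w\in H^b$ with $w(0)=0$; this can be proved by approximating $w$ by smooth functions vanishing at $0$, for which the bound is explicit. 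For $b<\frac12$, the analogous fact holds without any vanishing condition. The borderline $b=\frac12$ requires a logarithmic cutoff in place of $\chi(\cdot/\varepsilon)$. Negative $b$ follows from the $b\ge0$ case, since the cutoff multiplier is uniformly bounded there and $H^0$-smallness suffices.
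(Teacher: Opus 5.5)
The paper gives no argument for this lemma; it only cites \cite[Lemma~2.7]{BLLZ}, so your proposal is a self-contained substitute. Your treatment of $t>0$ is sound for every $b\in\R$. This covers the reduction to the restriction norm $N(I)$ of $F=e^{6h^{-2}t(\dh-\dx)}u$ in $H^b(\R;H^s)$, where you tacitly and correctly read the hypothesis as $u\in X^{s,b,h}$, which is what puts $F$ in $H^b$. It also covers monotonicity and the weak-compactness argument for left continuity. Finally, it covers the dilation argument for right continuity: the correction term $F-D_\ld F$ is a clean device that avoids any multiplier bounds.

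The gap is at the endpoint $t=0$ when $b>\frac12$. You claim that $\chi(\cdot/\varepsilon)w\to0$ in $H^b$ whenever $w\in H^b$ and $w(0)=0$, but this is false for $b\ge\frac32$. For smooth $w$ vanishing at $0$ the explicit bound is only $\|\chi(\cdot/\varepsilon)w\|_{\dot H^b}\les\varepsilon^{\frac32-b}$, and $w(\tau)=\tau\phi(\tau)$ with $\phi=1$ near $0$ shows this is sharp.

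This is not only a defect of your method: for $b>\frac32$ the statement itself fails at $t=0$. Take $F(t)=(t-t_0)\psi(t-t_0)f$ with $\psi\in C_c^\infty(\R)$, $\psi=1$ near $0$, and $0\neq f\in H^s$. Then $N(\{t_0\})=0$ (take $G=0$). However, every $G\in H^b(\R;H^s)\subset C^1(\R;H^s)$ with $G=F$ on $[t_0,t_0+\varepsilon]$ satisfies $\dt G(t_0)=f$. Hence $\|G\|_{H^bH^s}\ges\|f\|_{H^s}$, so $N([t_0,t_0+\varepsilon])\ges\|f\|_{H^s}$ for every $\varepsilon>0$. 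The lemma, and your proof, must therefore be restricted, for instance to $b<\frac32$. That range still contains the $\frac12<b<1$ actually used in Proposition~\ref{PROP:uhbdd}.

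Even within that range, your density step as written needs an extra ingredient, and you have two options:
\begin{itemize}
\item A bound for $w\mapsto\chi(\frac{\cdot-t_0}{\varepsilon})w$ on $\{w(t_0)=0\}$ that is uniform in $\varepsilon$. This is the Hardy inequality $\||\tau|^{-b}w\|_{L^2}\les\|w\|_{\dot H^b}$, valid exactly for $\frac12<b<\frac32$.
\item More simply, your own correction trick: set $G_\varepsilon=G+\chi(\frac{\cdot-t_0}{\varepsilon})w_k+(w-w_k)$, with $w_k$ smooth, $w_k(t_0)=0$, and $\|w-w_k\|_{H^b}$ small.
\end{itemize}

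The same caution applies to negative $b$. For $b<-\frac12$, cutting off $F$ itself gives no smallness: for $F=\delta_{t_0}\otimes f$ one has $\chi(\frac{\cdot-t_0}{\varepsilon})F=F$. Instead use $G_\varepsilon=\chi(\frac{\cdot-t_0}{\varepsilon})F_k+(F-F_k)$, with $F_k\in L^2_tH^s$ close to $F$ in $H^b$.
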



We also have the following list of useful linear estimates in the context of $X^{s, b, h}$ and $X^{s, b}$-spaces.

\begin{lemma}
\label{LEM:Xsb_d}
Let $0 < h \leq 1$ and $b > \frac 12$. Let $u$ be a function defined on $\R \times \R$. 

\smallskip \noi
\textup{(i)} Let $s \in \R$. Then, we have
\begin{align*}
    \| u \|_{C_t (\R ; H_x^s (\R))} \les \| u \|_{X^{s, b, h}}  ,
\end{align*}

\noi
where the underlying constant is independent of $h$. Moreover, we have
\begin{align*}
    \| u \|_{C_t ( \R; H_x^s (\R))} \les \| u \|_{X^{s, b}} .
\end{align*}

\smallskip \noi
\textup{(ii)} Let $4 \leq q \leq \infty$ and $2 \leq r \leq \infty$ be such that $\frac{2}{q} + \frac{1}{r} = \frac 12$. Then, we have
\begin{align*}
    \big\| |\dh|^{\frac 1q} P_{\frac{\pi}{h}} u \big\|_{L^q_t (\R ; L^{r}_x (\R))} \les \| u \|_{X^{0, b, h}} ,
\end{align*}

\noi
where the underlying constant is independent of $h$. Moreover, we have
\begin{align*}
    \big\| |\dx|^{\frac 1q} u \big\|_{L^q_t (\R ; L^{r}_x (\R))} \les \| u \|_{X^{0, b}}.
\end{align*}

\smallskip \noi
\textup{(iii)} Let $s > \frac 34$ and $T > 0$. Then, we have
\begin{align*}
    \| P_{\frac{\pi}{4h}} u \|_{L_x^2 ( \R ; L_t^\infty ([-T, T]))} \les (1 + T) \| u \|_{X_T^{s, b, h}} ,
\end{align*}

\noi
where the underlying constant is independent of $h$. Moreover, we have
\begin{align*}
    \| u \|_{L_x^2 ( \R ; L_t^\infty ([-T, T]))} \les (1 + T)^{\frac 12} \| u \|_{X_T^{s, b}} .
\end{align*}

\smallskip \noi
\textup{(iv)} We have
\begin{align*}
    \| \dh^\pm P_{\frac{\pi}{h}} u \|_{L_x^\infty ( \R ; L_t^2 (\R))} \les \| u \|_{X^{0, b, h}} ,
\end{align*}

\noi
where the underlying constant is independent of $h$. The same estimate holds if $\dh^\pm$ is replaced by $\dx$. Moreover, we have
\begin{align*}
    \| \dx u \|_{L_x^\infty ( \R; L_t^2 (\R))} \les \| u \|_{X^{0, b}}.
\end{align*}
\end{lemma}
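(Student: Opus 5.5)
The plan is the standard transference principle: each estimate in Lemma~\ref{LEM:Xsb_d} follows from the corresponding linear estimate in Lemma~\ref{LEM:lin_d} by writing $u$ as a superposition of modulated free solutions. Fix $u$ and set $v := e^{6 h^{-2} t (\dh - \dx)} u$, so that $\| u \|_{X^{s,b,h}} = \| v \|_{H^b_t H^s_x}$. Let $\ft v(\tau, \cdot)$ denote the time Fourier transform of $v$ and put $f_\tau := \ft v (\tau, \cdot)$. Then
\begin{align*}
u (t) = \frac{1}{2\pi} \int_\R e^{i t \tau} e^{- 6 h^{-2} t (\dh - \dx)} f_\tau \, d\tau ,
\end{align*}
and by Cauchy--Schwarz in $\tau$ with $b > \frac 12$,
\begin{align*}
\int_\R \| f_\tau \|_{H^s} \, d\tau \les \Big( \int_\R \jb{\tau}^{2b} \| f_\tau \|_{H^s}^2 d\tau \Big)^{\frac 12} \sim \| u \|_{X^{s,b,h}} .
\end{align*}
The same representation, with $e^{-t\dx^3}$ in place of $e^{-6h^{-2}t(\dh-\dx)}$, handles the $X^{s,b}$ statements.

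For each part, I apply the relevant norm to the representation and use Minkowski's inequality to move the norm inside the $\tau$-integral. The factor $e^{it\tau}$ has modulus one and depends only on $t$, so it does not affect any of the space-time norms involved: $L^q_t L^r_x$, $L^2_x L^\infty_t$, $L^\infty_x L^2_t$, or $C_t H^s_x$. Frequency projections such as $P_{\frac{\pi}{h}}$ and $P_{\frac{\pi}{4h}}$, and the multipliers $|\dh|^{\frac 1q}$ and $\dh^\pm$, all commute with the propagator and with $e^{it\tau}$. This reduces each part to the corresponding part of Lemma~\ref{LEM:lin_d} applied to $f_\tau$ (or to $P_{\frac{\pi}{h}} f_\tau$):
\begin{itemize}
\item[(i)] the $H^s$-isometry in Lemma~\ref{LEM:lin_d}~(i);
\item[(ii)] the Strichartz estimate in Lemma~\ref{LEM:lin_d}~(ii);
\item[(iii)] the maximal function estimate in Lemma~\ref{LEM:lin_d}~(iii);
\item[(iv)] the local smoothing estimate in Lemma~\ref{LEM:lin_d}~(iv) with $\ta = 1$ (respectively $\ta = 0$ for the Airy case).
\end{itemize}
Integrating the resulting bounds in $\tau$ and applying the Cauchy--Schwarz step above gives each claim with constants independent of $h$.

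Two technical points remain. In (i), continuity in time follows from dominated convergence on the integral representation, since each map $t \mapsto e^{it\tau} e^{-6h^{-2}t(\dh - \dx)} f_\tau$ is continuous in $H^s$ and the integrand is dominated by the integrable function $\| f_\tau \|_{H^s}$. In (iii), the statement uses the restricted norm $X_T^{s,b,h}$. I take an arbitrary extension $w$ with $w = u$ on $[-T,T]$. Since the left-hand side only sees $t \in [-T,T]$, it is unchanged when $u$ is replaced by $w$. Applying the argument to $w$ and taking the infimum over extensions gives the bound with $\| u \|_{X_T^{s,b,h}}$. In the Airy case, the factor $(1+T)^{\frac 12}$ comes directly from Lemma~\ref{LEM:lin_d}~(iii).

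The only real subtlety is checking that Minkowski's inequality is legitimate in mixed norms where the time integral is inside, such as $L^2_x L^\infty_t$ and $L^\infty_x L^2_t$. It is, because each is a genuine norm, and the unimodular factor $e^{it\tau}$ is harmless inside $L^\infty_t$ and $L^2_t$ alike. So I expect no substantive obstacle; all the hard work sits in Lemma~\ref{LEM:lin_d}.
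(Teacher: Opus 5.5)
Your proposal is correct and follows the paper's own argument: the paper proves this lemma in one line, by the transference principle applied to the linear estimates of Lemma~\ref{LEM:lin_d}. The one step the paper singles out, which you skip, is in (iii): the linear estimate there has $\| \jb{\dh}^s f_\tau \|_{L^2}$ on the right-hand side, so you need $|h^{-1}\sin(h\xi)| \leq |\xi|$ to bound it by $\| f_\tau \|_{H^s}$ before your Cauchy--Schwarz step in $\tau$.
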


\begin{proof}
All the estimates follow from the linear estimates in Lemma~\ref{LEM:lin_d} and the transference principle (see \cite[Lemma~2.9]{Tao}). For the maximal function estimate in part (iii), we need the fact that $|h^{-1} \sin (h \xi)| \leq |\xi|$ for any $\xi \in \R$.
\end{proof}

\section{Conservation laws and uniform $H^1$ bound}
\label{SEC:cons}

In this section, we discuss conservation laws for the Toda lattice and establish uniform-in-time bounds for the solution of the Toda lattice.

\subsection{Construction of mass and energy}
\label{SUB:cons}

As mentioned in the introduction, the Toda lattice is a completely integrable system. 
Let us consider the system \eqref{toda_ab} for $\al$ and $\be$, which is equivalent to the following Lax equation:
\begin{align}
    \frac{d}{dt} L (t) = L(t) P(t) - P(t) L (t),
\label{Lax}
\end{align}

\noi
where the pair of operators $(L (t), P (t))$ is a Lax pair defined by
\begin{align*}
    &L(t) : f (n) \mapsto \frac 12 (1 + \al (t, n)) f (n + 1) + \frac 12 (1 + \al (t, n - 1)) f (n - 1) + \be (t, n) f (n) , \\
    &P(t) : f (n) \mapsto \frac 12 (1 + \al (t, n)) f (n + 1) - \frac 12 (1 + \al (t, n - 1)) f (n - 1) .
\end{align*}

\noi
The operator $L$ is called a Jacobi operator, which can be understood as the following tridiagonal matrix:
\begin{align*}
\begin{pmatrix}
\ddots & \ddots & \ddots & \\
\frac 12 (1 + \al (n - 2)) & \be (n - 1) & \frac 12 (1 + \al (n - 1)) & 0 & 0 \\
0 & \frac 12 (1 + \al (n - 1)) & \be (n) & \frac 12 (1 + \al (n)) & 0 \\
0 & 0 & \frac 12 (1 + \al (n)) & \be (n + 1) & \frac 12 (1 + \al (n + 1)) \\
 & & \ddots & \ddots & \ddots
\end{pmatrix} .
\end{align*}

\noi
The properties of the Jacobi operators are well studied in \cite{Tes}.

In order to rigorously derive the conserved quantities for the Toda lattice, we consider the following periodic Toda lattice consisting of $2N$ particles for a large $N \in \N$: for any $-N \leq n \leq N - 1$ and $k \in \Z$,
\begin{align}
\begin{cases}
    \dt \al_N (t, n) = - (1 + \al_N (t, n)) (\be_N (t, n + 1) - \be_N (t, n))  \\
    \dt \be_N (t, n) = - \big( 1 + \frac 12 \al_N (t, n) + \frac 12 \al_N (t, n - 1) \big) (\al_N (t, n) - \al_N (t, n - 1)) \\
    \al_N (t, n + 2 k N) = \al_N (t, n) \\
    \be_N (t, n + 2 k N) = \be_N (t, n) .
\end{cases}
\label{toda_abN}
\end{align}

\noi
The periodic Toda lattice \eqref{toda_abN} is a finite dimensional completely integrable system with the Lax operator $L_N$ given by the following $2N \times 2N$ Jacobi matrix:
\begin{align*}
\small
\begin{pmatrix}
    \be_N (-N) & \frac 12 (1 + \al_N (-N)) & 0 & \cdots & \frac 12 (1 + \al_N (N - 1)) \\
    \frac 12 (1 + \al_N (-N)) & \be_N (- N + 1) & \frac 12 (1 + \al_N (-N + 1)) & \cdots & 0 \\
    \vdots & \vdots & \ddots & \vdots & \vdots \\
    0 & \cdots & \frac 12 (1 + \al_N (N - 3)) & \be_N (N - 2) & \frac 12 (1 + \al_N (N - 2)) \\
    \frac 12 (1 + \al_N (N - 1)) & \cdots & 0 & \frac 12 (1 + \al_N (N - 2)) & \be_N (N - 1)
\end{pmatrix}.
\end{align*}

\noi
Then, we know that for any integer $k \geq 1$, the trace $\text{Tr} (L_N (t)^k)$ is preserved by the flow of \eqref{toda_abN}. From direct computations, we have the following expressions for $\text{Tr} (L_N (t)^k)$ with $k = 1, 2, 3$:
\begin{align*}
\text{Tr} (L_N (t)) &= \sum_{\substack{n \in \Z \\ -N \leq n \leq N - 1}} \be_N (t, n) , \\
\text{Tr} (L_N (t)^2) &= \sum_{\substack{n \in \Z \\ -N \leq n \leq N - 1}} \Big( \be_N (t, n)^2 + \frac 12 ( 1 + \al_N (t, n))^2 \Big) , \\
\text{Tr} (L_N (t)^3) &= \sum_{\substack{n \in \Z \\ -N \leq n \leq N - 1}} \Big( \be_N (t, n)^3 + \frac 34 (1 + \al_N (t, n))^2 ( \be_N (t, n) + \be_N (t, n + 1) ) \Big) .
\end{align*}

In order to take $N \to \infty$ and obtain conserved quantities of the Toda lattice on $\R$, we need to make some modifications to the above three traces so that they work for data and solutions in $\ell^2 (\Z)$. We first observe that the quantity
\begin{align*}
    A_N [\al_N (t)] := \sum_{\substack{n \in \Z \\ -N \leq n \leq N - 1}} \ln ( 1 + \al_N (t, n) ) 
\end{align*}

\noi
is also conserved under the flow of the truncated Toda lattice \eqref{toda_abN}. Then, we define the mass 
\begin{align}
\begin{split}
    &M_{N} [\al_N (t), \be_N (t)] \\
    &:= \text{Tr} (L_N (t)^2) - N - A_N [\al_N (t)] \\
    &\hphantom{:}= \sum_{\substack{n \in \Z \\ -N \leq n \leq N - 1}} \! \Big( \al_N (t, n)^2 + \be_N (t, n)^2 + \Big( - \frac 12 \al_N (t, n)^2 + \al_N (t, n) - \ln (1 + \al_N  (t, n)) \Big) \Big) 
\end{split}
\label{H1N}
\end{align}

\noi
and the energy
\begin{align}
\begin{split}
    &E_{N} [\al_N (t), \be_N (t)] \\
    &:= - \frac 13 \text{Tr} (L_N (t)^3) + \frac 12 \text{Tr} (L_N (t)) + \frac 12 H_{1, N} (t) \\
    &\hphantom{:}=  \sum_{\substack{n \in \Z \\ -N \leq n \leq N - 1}} \! \Big( \frac 14 (\al_N (t, n) - \be_N (t, n))^2 + \frac 14 (\be_N (t, n + 1) - \al_N (t, n))^2 \\
    &\qquad - \frac 14 \al_N (t, n)^2 \be_N (t, n) - \frac 14 \al_N (t, n)^2 \be_N (t, n + 1) - \frac 13 \be_N (t, n)^3 - \frac 16 \al_N (t, n)^3 \\
    &\qquad + \frac 12 \Big( \frac 13 \al_N (t, n)^3 - \frac 12 \al_N (t, n)^2 + \al_N (t, n) - \ln (1 + \al_N (t, n)) \Big)\Big) ,
\end{split}
\label{H2N}
\end{align}

\noi
which are conserved quantities for the truncated Toda lattice \eqref{toda_abN}. Let us also define their formal limits
\begin{align}
    M [\al (t), \be (t)] := \sum_{n \in \Z} \Big( \al (t, n)^2 + \be (t, n)^2 + \Big( - \frac 12 \al (t, n)^2 + \al (t, n) - \ln ( 1 + \al (t, n) ) \Big) \Big)
\label{defH1-0}
\end{align}

\noi
and
\begin{align}
\begin{split}
    E [\al (t), \be (t)] &:= \sum_{n \in \Z} \Big( \frac 14 ( \al (t, n) - \be (t, n) )^2 + \frac 14 ( \be (t, n + 1) - \al (t, n) )^2 \\
    &\qquad - \frac 14 \al (t, n)^2 \be (t, n) - \frac 14 \al (t, n)^2 \be (t, n + 1) - \frac 13 \be (t, n)^3 - \frac 16 \al (t, n)^3  \\
    &\qquad + \frac 12 \Big( \frac 13 \al (t, n)^3 - \frac 12 \al (t, n)^2 + \al (t, n) - \ln ( 1 + \al (t, n) ) \Big) \Big) .
\end{split}
\label{defH2-0}
\end{align}


Let us now prove global well-posedness of the Toda lattice \eqref{toda_ab} in $\ell^2 (\Z)$ and also the conservation of mass $M$ and energy $E$ under the flow of the Toda lattice \eqref{toda_ab}.

\begin{proposition}
\label{PROP:cons}
Let $\al_0, \be_0 \in \ell^2 (\Z)$. Suppose that $\al_0 (n) > - 1$ for all $n \in \Z$ and $M [\al_0, \be_0] < \infty$. Then, there exists a unique solution $\al, \be \in C(\R; \ell^2 (\Z))$ to the Toda lattice \eqref{toda_ab} with $\al|_{t = 0} = \al_0$ and $\be |_{t = 0} = \be_0$. Moreover, we have the conservations
\begin{align*}
    M [\al (t), \be (t)] = M [\al_0, \be_0] \quad \text{and} \quad E [\al (t), \be (t)] = E [\al_0, \be_0]
\end{align*}

\noi
for any $t \in \R$.
\end{proposition}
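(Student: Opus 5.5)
We prove local well-posedness by a contraction argument, derive the conservation laws through the periodic truncations \eqref{toda_abN}, and then use the mass to keep $\al$ uniformly away from $-1$ and to globalize.

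\emph{Local theory.} The vector field of \eqref{toda_ab} is a polynomial in $(\al,\be)$ composed with shift operators. Shifts are bounded on $\ell^2(\Z)$, and $\ell^2$ is an algebra because $\|fg\|_{\ell^2}\le\|f\|_{\ell^\infty}\|g\|_{\ell^2}\le\|f\|_{\ell^2}\|g\|_{\ell^2}$. Hence the right-hand side is locally Lipschitz on $\ell^2\times\ell^2$. Picard iteration gives a unique local solution in $C([-T_0,T_0];\ell^2)$, with $T_0$ depending only on $\|(\al_0,\be_0)\|_{\ell^2}$. It also gives the blow-up alternative: the solution extends as long as the $\ell^2$ norm stays bounded.

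\emph{Positivity of $1+\al$.} Solving the first equation as a linear ODE in $1+\al$ gives
\[
1+\al(t,n)=(1+\al_0(n))\exp\Big(-\int_0^t(\be(s,n+1)-\be(s,n))\,ds\Big)>0 .
\]
Thus $\ln(1+\al)$ is defined along the flow.

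\emph{Conservation laws.} First we check that the integrand of $M$ is coercive. Set $g(a)=-\frac12a^2+a-\ln(1+a)$. Then $g(0)=0$ and $g'(a)=-a+1-\frac1{1+a}=-\frac{a^2}{1+a}$. Consequently $a^2+g(a)\ge \frac12 a^2$ for small $|a|$, and $a^2+g(a)\ge a-\ln(1+a)\ge0$ in general. More precisely, $a^2+g(a)=\frac12a^2+(a-\ln(1+a))$, and both terms are nonnegative. So $M\ge \frac12\|\al\|^2_{\ell^2}+\|\be\|^2_{\ell^2}$, and $M<\infty$ is finite on $\ell^2$ data exactly when $\sum(\al-\ln(1+\al))<\infty$. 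This sum is finite automatically since $\al\in\ell^2$ and $\inf_n(1+\al_0(n))>0$; that infimum is positive because $\al_0(n)\to0$.

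To prove conservation, we approximate. Let $(\al_N,\be_N)$ solve \eqref{toda_abN} with data equal to $(\al_0,\be_0)$ restricted to $[-N,N-1]$ and extended periodically. The quantities $\mathrm{Tr}(L_N^k)$ and $A_N$ are conserved; for $A_N$, note $\dt\sum\ln(1+\al_N)=-\sum(\be_N(n+1)-\be_N(n))=0$ by periodicity. Therefore $M_N$ and $E_N$ are conserved, and $M_N$ bounds $\frac12\|\al_N\|^2+\|\be_N\|^2$ uniformly in $N$.

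Next we pass to the limit $N\to\infty$. Fix a window $|n|\le R$. The uniform bounds give local Lipschitz control, and a finite speed/localization estimate controls the tails: the equation couples only nearest neighbours, so the influence of the periodic boundary at distance $N$ on $|n|\le R$ over time $[-T,T]$ is small, of size like $(CT)^{N-R}/(N-R)!$. We then prove $\sup_{|t|\le T}\|(\al_N,\be_N)(t)-(\al,\be)(t)\|_{\ell^2(|n|\le N/2)}\to0$. The cleanest route is an energy estimate for the difference, weighted by a cutoff, together with Gronwall, using the uniform $\ell^\infty$ bounds.

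\emph{Globalization.} Conservation of $M$ on each local interval gives a uniform $\ell^2$ bound, so the solution is global. Conservation of $M$ and $E$ then passes to the limit, because each summand converges and the tails are uniformly small, by the coercivity above and the cubic and log terms being controlled by $\ell^2$ and $\ell^\infty$ norms.

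The main obstacle is making this limiting argument rigorous. Besides the tail estimate, the log term needs a uniform-in-$N$ lower bound on $1+\al_N$. We get it from the mass itself: $a-\ln(1+a)\to\infty$ as $a\to-1$, so a bound on $M_N$ forces $1+\al_N\ge c(M)>0$. This in turn makes $\ln(1+\cdot)$ uniformly Lipschitz on the relevant range.
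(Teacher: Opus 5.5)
Your local theory, the formula $1+\al(t,n)=(1+\al_0(n))\exp(\cdots)>0$, the identity $a^2+g(a)=\frac12a^2+(a-\ln(1+a))$, and the uniform lower bound on $1+\al_N$ coming from $M_N$ are all correct. The last two cover points the paper passes over quickly.

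\textbf{The gap is in the limit $N\to\infty$.} Your localization/Gronwall argument controls $(\al_N,\be_N)(t)-(\al,\be)(t)$ only on $|n|\le N/2$, i.e.\ away from the seam $n=\pm N$ of the periodic problem. But $M_N$ and $E_N$ are sums over all of $[-N,N-1]$. So you also need $\sum_{N/2<|n|\le N}(\al_N(t,n)^2+\be_N(t,n)^2)\to0$. The reason you give (coercivity plus uniform $\ell^2$ and $\ell^\infty$ bounds) only shows that this seam contribution is \emph{bounded}, not that it is small. A priori the periodic solution could carry a non-vanishing amount of mass into the seam region by time $t$, and ruling this out is exactly what the limiting argument has to do. Writing $M(t)=M[\al(t),\be(t)]$, what you actually get is only $M(t)\le M(0)$, by Fatou since the mass density is nonnegative. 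You get nothing for $E$, whose density has no sign.

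\textbf{How the paper avoids this.} It proves convergence on the whole window. Let $\wt\al_N,\wt\be_N$ be the restrictions of $\al_N,\be_N$ to $[-N,N-1]$, extended by zero. The paper shows $\wt\al_N\to\al$ and $\wt\be_N\to\be$ in $C([-T,T];\ell^2(\Z))$.
\begin{itemize}
\item In the difference equation the seam enters only through the boundary values $\al(t,N-1)$, $\al(t,-N-1)$, $\be(t,\pm N)$ of the whole-line solution.
\item These tend to $0$ uniformly in $|t|\le T$, because $\{(\al(t),\be(t)):|t|\le T\}$ is compact in $\ell^2$.
\item A Gronwall (or small-$T$ Duhamel) estimate with zero initial difference on the window then closes.
\item Every term of $M_N$ and $E_N$ converges, using your lower bound on $1+\al_N$ for the logarithm.
\end{itemize}
With this route your finite-speed estimate is unnecessary.

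\textbf{Alternative fix keeping your approach.} Keep your interior convergence. Fatou gives $M(t)\le M(0)$. Running the same argument from base time $t$ back to time $0$ (the system is autonomous) gives $M(0)\le M(t)$. Once $M$ is conserved, the relation $M_N[\al_N(t),\be_N(t)]=M_N(0)\to M(0)=M(t)$, together with interior convergence, forces the seam mass of $(\al_N,\be_N)(t)$ to vanish. That is exactly the tail smallness you need to pass to the limit in $E_N$.
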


\begin{proof}
Using Minkowski's integral inequality, H\"older's inequality, and the continuous embedding $\ell^2 (\Z) \subset \ell^4 (\Z)$, we obtain the existence of a unique local-in-time solution $\al, \be$ from a standard fixed-point argument in the space $C ([-T, T]; \ell^2 (\Z))$ for some $T = T (\| \al_0 \|_{\ell^2 (\Z)}, \| \be_0 \|_{\ell^2 (\Z)}) > 0$ on the following Duhamel formulation of \eqref{toda_ab}:
\begin{align*}
\begin{cases}
    \al (t, n) = \al_0 (n) - \int_0^t (1 + \al (t', n)) (\be (t', n + 1) - \be (t', n)) dt' , \\
    \be (t, n) = \be_0 (n) - \int_0^t \big( 1 + \tfrac 12 \al (t', n) + \tfrac 12 \al (t', n - 1)\big) (\al (t', n) - \al (t', n - 1)) dt'
\end{cases}
\end{align*}

\noi
for any $n \in \Z$ and $t \in [-T, T]$. Moreover, we have
\begin{align}
    \| \al \|_{C_{T} \ell^2_n (\Z)} + \| \be \|_{C_{T} \ell^2_n (\Z)} \leq 2 \| \al_0 \|_{\ell^2 (\Z)} + 2 \| \be_0 \|_{\ell^2 (\Z)} .
\label{ab_bdd}
\end{align}

We now prove the conservation of mass $M$ and energy $E$ for $t \in [-T, T]$. Given $N \in \N$, we define the $2N$-periodized version of $\al_0$ and $\be_0$ as
\begin{align}
\begin{split}
    \al_{0, N} (n + 2 k N) &\deff \al_0 (n) , \\
    \be_{0, N} (n + 2 k N) &\deff \be_0 (n) 
\end{split}
\label{def_ab0}
\end{align}

\noi
for any $-N \leq n \leq N - 1$ and $k \in \Z$. We let $\al_N$ and $\be_N$ be the unknowns for the periodic Toda lattice \eqref{toda_abN} with $\al_N |_{t = 0} = \al_{0, N}$ and $\be_N |_{t = 0} = \be_{0, N}$. Given $N \in \N$ and a $2N$-periodic function $a$ on $\Z$, we define the norm
\begin{align*}
    \| a \|_{\ell_N^2} := \bigg( \sum_{n = -N}^{N - 1} a (n)^2 \bigg)^{\frac 12} .
\end{align*}

\noi
Then, since $\| \al_{0, N} \|_{\ell_N^2 (\Z)} \leq \| \al_0 \|_{\ell^2 (\Z)}$ and $\| \be_{0, N} \|_{\ell_N^2 (\Z)} \leq \| \be_0 \|_{\ell^2 (\Z)}$, we can use the same fixed-point argument as above to obtain a unique solution $\al_N, \be_N \in C ([-T, T]; \ell_N^2 (\Z))$ satisfying the following Duhamel formulation of \eqref{toda_abN}:
\begin{align*}
\begin{cases}
    \al_N (t, n) = \al_{0, N} (n) - \int_0^t (1 + \al_N (t', n)) (\be_N (t', n + 1) - \be_N (t', n)) dt' \\
    \be_N (t, n) = \be_{0, N} (n) - \int_0^t \big( 1 + \tfrac 12 \al_N (t', n) + \tfrac 12 \al_N (t', n - 1)\big) (\al_N (t', n) - \al_N (t', n - 1)) dt'  \\
    \al_N (t, n + 2 k N) = \al_N (t, n) \\
    \be_N (t, n + 2 k N) = \be_N (t, n) 
\end{cases}
\end{align*}

\noi
for any $- N \leq n \leq N - 1$, $k \in \Z$, and $t \in [- T, T]$. Moreover, we have
\begin{align}
\begin{split}
    \| \al_N \|_{C_T \ell^2_N (\Z)} + \| \be_N \|_{C_T \ell^2_N (\Z)} &\leq 2 \| \al_{0, N} \|_{\ell_N^2 (\Z)} + 2 \| \be_{0, N} \|_{\ell_N^2 (\Z)} \\
    &\leq 2 \| \al_0 \|_{\ell^2 (\Z)} + 2 \| \be_0 \|_{\ell^2 (\Z)} .
\end{split}
\label{abN_bdd}
\end{align}

Let us now define 
\begin{align*}
    \wt \al_N (t, n) &:=
    \begin{cases}
        \al_N (t, n) & \text{if } -N \leq n \leq N - 1 \\
        0 & \text{else} 
    \end{cases} , \\
    \wt \be_N (t, n) &:=
    \begin{cases}
        \be_N (t, n) & \text{if } -N \leq n \leq N - 1 \\
        0 & \text{else} 
    \end{cases} ,
\end{align*}

\noi
and we define $\wt \al_{0, N} = \wt \al_N |_{t = 0}$ and $\wt \be_{0, N} = \wt \be_N |_{t = 0}$. We want to show that $\wt \al_N \to \al$ and $\wt \be_N \to \be$ in $C([-T, T]; \ell^2 (\Z))$ as $N \to \infty$ with a possibly smaller value of $T$. By taking the difference of the two Duhamel formulations and using Minkowski's integral inequality, H\"older's inequality, and the continuous embedding $\ell^2 (\Z) \subset \ell^4 (\Z)$, we obtain
\begin{align*}
    &\| \wt \al_N - \al \|_{L_{T}^\infty \ell^2_n (\Z)} + \| \wt \be_N - \be \|_{L_{T}^\infty \ell^2_n (\Z)} \\
    &\les \| \wt \al_{0, N} - \al_0 \|_{\ell^2 (\Z)} + \| \wt \be_{0, N} - \be_0 \|_{\ell^2 (\Z)} \\
    &\quad + T \big( \| \wt \al_N - \al \|_{L_{T}^\infty \ell^2_n (\Z)} + \| \wt \be_N - \be \|_{L_{T}^\infty \ell^2_n (\Z)} + \| \al (-N - 1) \|_{L_{T}^\infty} \\
    &\qquad + \| \al (N - 1) \|_{L_{T}^\infty} + \| \be (-N) \|_{L_{T}^\infty} + \| \be (N) \|_{L_{T}^\infty} \big) \\
    &\qquad \times  \big( 1 + \| \al_N \|_{L_{T}^\infty \ell^2_N (\Z)} + \| \al \|_{L_{T}^\infty \ell^2_n (\Z)} + \| \be_N \|_{L_{T}^\infty \ell^2_N (\Z)} + \| \be \|_{L_{T}^\infty \ell^2_n (\Z)} \big) .
\end{align*}

\noi
Thus, using \eqref{ab_bdd} and \eqref{abN_bdd} and taking $T = T (\| \al_0 \|_{\ell^2 (\Z)}, \| \be_0 \|_{\ell^2 (\Z)}) > 0$ to be smaller if necessary, we get
\begin{align}
\begin{split}
    \| &\wt \al_N - \al \|_{L_{T}^\infty \ell^2_n (\Z)} + \| \wt \be_N - \be \|_{L_{T}^\infty \ell^2_n (\Z)} \\
    &\les \| \wt \al_{0, N} - \al_0 \|_{\ell^2 (\Z)} + \| \wt \be_{0, N} - \be_0 \|_{\ell^2 (\Z)} + \big( \| \al (-N - 1) \|_{L_{T}^\infty} + \| \al (N - 1) \|_{L_{T}^\infty} \\
    &\quad + \| \be (-N) \|_{L_{T}^\infty} + \| \be (N) \|_{L_{T}^\infty} \big) \big( 1 + \| \al_0 \|_{\ell^2 (\Z)} + \| \be_0 \|_{\ell^2 (\Z)} \big) \\
    &\longrightarrow 0
\end{split}
\label{abN_conv}
\end{align}

\noi
as $N \to \infty$.

Recalling the (truncated) mass $M_{N}$ in \eqref{H1N} and the (truncated) energy $E_{N}$ in \eqref{H2N}, we know that 
\begin{align*}
    M_{N} [\al_N (t), \be_N (t)] = M_{N} [\al_{0, N}, \be_{0, N}] \quad \text{and} \quad E_{N} [\al_N (t), \be_N (t)] = E_{N} [\al_{0, N}, \be_{0, N}]
\end{align*}

\noi
for any $t \in [-T, T]$ as long as $|E_{N} [\al_{0, N}, \be_{0, N}]| \les M_{N} [\al_{0, N}, \be_{0, N}] < \infty$ (note that $M_{N}$ is always nonnegative given $x - \ln (1 + x) \geq 0$ for all $x > -1$). This is always the case given the condition $M [\al_0, \be_0] < \infty$ and the definitions of $\al_{0, N}$ and $\be_{0, N}$ in \eqref{def_ab0}.
It remains to show that $M_{N} [\al_N (t), \be_N (t)] \to M [\al (t), \be (t)]$ and $E_{N} [\al_N (t), \be_N (t)] \to E [\al (t), \be (t)]$ as $N \to \infty$ for any $t \in [-T, T]$. To achieve this, we only need to show the following convergences as $N \to \infty$:
\begin{align}
&\bigg| \sum_{n \in \Z} \be (t, n)^2 - \sum_{\substack{n \in \Z \\ -N \leq n \leq N - 1}} \be_N (t, n)^2 \bigg| + \bigg| \sum_{n \in \Z} \al (t, n)^2 - \sum_{\substack{n \in \Z \\ -N \leq n \leq N - 1}} \al_N (t, n)^2 \bigg| \longrightarrow 0, \label{est1} \\
&\bigg| \sum_{n \in \Z} \big( \al (t, n) - \ln (1 + \al (t, n) ) \big) - \sum_{\substack{n \in \Z \\ -N \leq n \leq N - 1}} \big( \al_N (t, n) - \ln (1 + \al_N (t, n) ) \big) \bigg| \longrightarrow 0, \label{est2} \\
&\bigg| \sum_{n \in \Z} \al (t, n) \be (t, n) - \sum_{\substack{n \in \Z \\ -N \leq n \leq N - 1}} \al_N (t, n) \be_N (t, n) \bigg|  \longrightarrow 0, \label{est3} \\
&\bigg| \sum_{n \in \Z} \al (t, n) \be (t, n + 1) - \sum_{\substack{n \in \Z \\ -N \leq n \leq N - 1}} \al_N (t, n) \be_N (t, n + 1) \bigg|  \longrightarrow 0, \label{est4} \\
&\bigg| \sum_{n \in \Z} \al (t, n)^2 \be (t, n) - \sum_{\substack{n \in \Z \\ -N \leq n \leq N - 1}} \al_N (t, n)^2 \be_N (t, n) \bigg|  \longrightarrow 0, \label{est5} \\
&\bigg| \sum_{n \in \Z} \al (t, n)^2 \be (t, n + 1) - \sum_{\substack{n \in \Z \\ -N \leq n \leq N - 1}} \al_N (t, n)^2 \be_N (t, n + 1) \bigg|  \longrightarrow 0, \label{est6} \\
&\bigg| \sum_{n \in \Z} \be (t, n)^3 - \sum_{\substack{n \in \Z \\ -N \leq n \leq N - 1}} \be_N (t, n)^3 \bigg|  \longrightarrow 0, \label{est7} 
\end{align}

\noi
The power type convergences \eqref{est1}, \eqref{est3}, \eqref{est4}, \eqref{est5}, \eqref{est6}, and \eqref{est7} follow immediately from H\"older's inequalities, the continuous embedding $\ell^2 (\Z) \subset \ell^3 (\Z)$, the convergence in \eqref{abN_conv}, and the uniform bounds \eqref{ab_bdd} and \eqref{abN_bdd}. For \eqref{est2}, by using the Taylor expansion of $\ln (1 + x)$, H\"older's inequality, the continuous embedding $\ell^2 (\Z) \subset \ell^{k} (\Z)$ for any $k \geq 2$, 
and the convergence in \eqref{abN_conv}, we have
\begin{align*}
\bigg| &\sum_{n \in \Z} \big( \al (t, n) - \ln (1 + \al (t, n) ) \big) - \sum_{\substack{n \in \Z \\ -N \leq n \leq N - 1}} \big( \al_N (t, n) - \ln (1 + \al_N (t, n) ) \big) \bigg| \\
&\leq \sum_{k = 2}^\infty \frac{1}{k} \sum_{\substack{n \in \Z \\ - N \leq n \leq N - 1}} | \al (t, n) - \al_N (t, n) |^k + \sum_{k = 2}^\infty \frac{1}{k} \sum_{\substack{n \in \Z \\ |n| \geq N}} |\al (t, n)|^k  \\
&\leq \sum_{k = 2}^\infty \frac{1}{k} \| \al (t) - \wt \al_N (t) \|_{\ell^2 (\Z)}^k + \sum_{k = 2}^\infty \frac{1}{k} \bigg( \sum_{\substack{n \in \Z \\ |n| \geq N}} |\al (t, n)|^2 \bigg)^{\frac{k}{2}} \\
&\longrightarrow 0
\end{align*}

\noi
as $N \to \infty$, so that \eqref{est2} follows. This shows the conservation of mass $M$ and energy $E$ for $t \in [-T, T]$


We are now ready to prove global well-posedness of the Toda lattice \eqref{toda_ab} in $\ell^2 (\Z)$. By using the fact that $\ln (1 + x) \leq x$ for any $x > - 1$ and the conservation of $M$, we have
\begin{align}
    \frac 12 \| \al (t) \|_{\ell^2 (\Z)}^2 + \| \be (t) \|_{\ell^2 (\Z)}^2 \leq M [ \al (t), \be (t) ] = M [\al_0, \be_0] < \infty .
\label{ab_bdd1}
\end{align}

\noi
for any $t \in [-T, T]$. This bound allows us to iterate the local well-posedness argument and extend the solution to the whole time line. The conservation of mass $M$ and energy $E$ also follows similarly as above.
\end{proof}

\begin{remark} \rm
\label{RMK:cons}
In Proposition~\ref{PROP:cons}, we are able to prove global well-posedness of the Toda lattice \eqref{toda_ab} by assuming that $\al_0 (n) > -1$ for all $n \in \Z$ and that $M [\al_0, \be_0]$ is finite. In particular, these conditions hold if we assume that $\al_0, \be_0 \in \ell^2 (\Z)$ and $\| \al_0 \|_{\ell^\infty (\Z)} \leq \frac 12$. Indeed, using the Taylor expansion of $\ln (1 + x)$ with $|x| < 1$, we obtain
\begin{align}
\begin{split}
    M [\al_0, \be_0] &\leq \sum_{n \in \Z} \bigg( \be_0 (n)^2 + \al_0 (n)^2 + \sum_{k = 3}^\infty \frac{1}{k} |\al_0 (n)|^k \bigg) \\
    &\leq \| \al_0 \|_{\ell^2 (\Z)}^2 + \| \be_0 \|_{\ell^2 (\Z)}^2 +  \sum_{n \in \Z} |\alpha_0 (n)|^3  \\
    &\leq \| \al_0 \|_{\ell^2 (\Z)}^2 + \| \be_0 \|_{\ell^2 (\Z)}^2 + \sum_{n \in \Z} |\alpha_0 (n)|^2 \\
    &\leq 2 \| \al_0 \|_{\ell^2 (\Z)}^2 + \| \be_0 \|_{\ell^2 (\Z)}^2 ,
\end{split}
\label{ab_bdd2}
\end{align}

\noi
which is finite given $\alpha_0, \beta_0 \in \ell^2 (\Z)$.
\end{remark}

\subsection{Uniform $H^1$ bound}
\label{SUB:H1}

We now use the conserved quantities in the previous subsection to establish some uniform bounds for $\gamma^h$ defined in \eqref{defgh} satisfying the equation \eqref{todah}. 

Recalling the definition of $\al^h$ and $\be^h$ in \eqref{def_abh}, we define the scaled versions of the mass $M$ in \eqref{defH1-0} and the energy $E$ in \eqref{defH2-0} as
\begin{align}
\begin{split}
    M^h [\al^h (t), \be^h (t)] &:= \sum_{\ld \in h \Z} \Big( h \al^h (t, \ld)^2 + h \be^h (t, \ld)^2 \\
    &\qquad + h^{-3} \Big( - \frac 12 h^4 \al^h (t, \ld)^2 + h^2 \al^h (t, \ld) - \ln ( h^2 \al^h (t, \ld) + 1 ) \Big) \Big),
\end{split}
\label{defH1}
\end{align}
\begin{align}
\begin{split}
    &E^h [\al^h (t), \be^h (t)] \\
    &:= \sum_{\ld \in h \Z} \Big( \frac 14 h^{-1} (\al^h (t, \ld) - \be^h (t, \ld))^2 + \frac 14 h^{-1} ( \be^h (t, \ld + h) - \al^h (t, \ld) )^2 \\
    &\qquad - \frac 14 h \al^h (t, \ld)^2 \be^h (t, \ld) - \frac 14 h \al^h (t, \ld)^2 \be^h (t, \ld + h) - \frac 13 h \be^h (t, \ld)^3-\frac16 h \al^h(t,\ld)^3  \\
    &\qquad + \frac 12 h^{-5} \Big( \frac13 h^6 \al^h(t,\ld)^3 - \frac12 h^4 \al^h (t, \ld)^2 +  h^2 \al^h (t, \ld) -  \ln ( h^2 \al^h (t, \ld) + 1 ) \Big) \Big) .
\end{split}
\label{defH2}
\end{align}

\noi
For simplicity, we also write $M^h (t) = M^h [\al^h (t), \be^h (t)]$ and $E^h (t) = E^h [\al^h (t), \be^h (t)]$.

\begin{remark} \rm
Let us disregard the remainder parts of $M^h$ and $E^h$ and consider the main parts of them:
\begin{align*}
    M_{\text{main}}^h (t) &:= \sum_{\ld \in h \Z} \Big( h \al^h (t, \ld)^2 + h \be^h (t, \ld)^2 \Big), \\
    E_{\text{main}}^h (t) &:= \sum_{\ld \in h \Z} \Big( \frac 14 h^{-1} (\al^h (t, \ld) - \be^h (t, \ld))^2 + \frac 14 h^{-1} ( \be^h (t, \ld + h) - \al^h (t, \ld) )^2 \\
    &\qquad - \frac 14 h \al^h (t, \ld)^2 \be^h (t, \ld) - \frac 14 h \al^h (t, \ld)^2 \be^h (t, \ld + h) - \frac 13 h \be^h (t, \ld)^3-\frac16 h \al^h(t,\ld)^3 \Big) .
\end{align*}

\noi
Note that $M_{\text{main}}^h$ and $E_{\text{main}}^h$ corresponds exactly to the discrete versions of 
\begin{align*}
    2 \int_{\R} u^2 dx \quad \text{and} \quad \frac 12 \int_\R (\dx u)^2 dx - \int_\R u^3 dx ,
\end{align*}

\noi
respectively, which are the mass and the energy for the KdV equation \eqref{KdV}. This provides another heuristic on the continuum limit of the Toda lattice to the KdV equation.
\end{remark}


From Proposition~\ref{PROP:cons} and Remark~\ref{RMK:cons}, we know that the solution $\gamma^h$ to \eqref{todah} exists in $C (\R; L^2 (h \Z))$, as long as the scaled initial data $\gamma_0^h$ satisfies $h^2 \| \gamma_0^h \|_{L^\infty (h \Z)} \leq \frac 12$. 


We first show the following $L^2 (h \Z)$-bound for $\gamma^h$.

\begin{proposition}
\label{PROP:L2bdd}
Let $0 < h \leq 1$ and $\gamma_0^h \in L^2 (h \Z)$ be such that 
\begin{align}
    h^2 \| \gamma_0^h \|_{L^\infty (h \Z)} \leq \frac 12 .
\label{h_cond2}
\end{align} 

\noi
Let $\gamma^h \in C(\R; L^2 (h \Z))$ be the solution to the scaled Toda lattice \eqref{todah} with initial data $\gamma^h |_{t = 0} = \gamma_0^h$.
Then, for any $t \in \R$, we have
\begin{align}
    \| \gamma^h (t) \|_{L^2 (h \Z)} \leq 2 \| \gamma_0^h \|_{L^2 (h \Z)} .
\label{L2bdd_goal}
\end{align}
\end{proposition}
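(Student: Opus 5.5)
The plan is to use conservation of the scaled mass $M^h$ from \eqref{defH1} together with the elementary inequality $x-\ln(1+x)\ge 0$ for $x>-1$, exactly as in \eqref{ab_bdd1} and \eqref{ab_bdd2}, but tracking the scaling.

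First, we translate to the unscaled variables. By \eqref{def_abh} and \eqref{defgh}, $\gamma_0^h$ corresponds to data $\al_0(n)=h^2\al^h(0,hn)$, $\be_0(n)=h^2\be^h(0,hn)$. Condition \eqref{h_cond2} gives $\|\al_0\|_{\ell^\infty}\le \frac12$, so $\al_0>-1$. By Remark~\ref{RMK:cons}, $M[\al_0,\be_0]<\infty$. Proposition~\ref{PROP:cons} then gives a global solution with $M$ conserved. Rescaling, $M^h(t)=h^{-3}M[\al(3h^{-3}t),\be(3h^{-3}t)]$; a direct check with \eqref{defH1} confirms the powers of $h$ match. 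Hence $M^h(t)=M^h(0)$ for all $t$.

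Second, we bound $M^h(t)$ from below. Write $x=h^2\al^h(t,\ld)>-1$. The remainder term is $h^{-3}(-\frac12x^2+x-\ln(1+x))$, and since $x-\ln(1+x)\ge0$, it is at least $-\frac12 h\,\al^h(t,\ld)^2$. Therefore
\begin{align*}
M^h(t)\ \ge\ \tfrac12 h\sum_{\ld}\al^h(t,\ld)^2+h\sum_\ld\be^h(t,\ld)^2 .
\end{align*}
Because $\gamma^h$ interleaves $\al^h$ and $\be^h$ on the lattice $h\Z$ (with $\al^h,\be^h$ living on $2h\Z$-type grids after \eqref{defgh}; note $\ld/2$ ranges over $h\Z$), we have $\|\gamma^h(t)\|_{L^2(h\Z)}^2=h\sum(\al^h)^2+h\sum(\be^h)^2$. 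This gives $\|\gamma^h(t)\|_{L^2}^2\le 2M^h(t)$.

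Third, we bound $M^h(0)$ from above. We use the Taylor estimate \eqref{ab_bdd2}: for $|x|\le\frac12$, $|{-\frac12x^2}+x-\ln(1+x)| \le \sum_{k\ge3}|x|^k/k\le \frac13|x|^3/(1-|x|)\le \frac23|x|^3\le\frac13 x^2$. Hence $M^h(0)\le (1+\frac13)h\sum(\al_0^h)^2+h\sum(\be_0^h)^2\le\frac43\|\gamma_0^h\|_{L^2}^2$. Combining the three steps, $\|\gamma^h(t)\|^2_{L^2}\le \frac83\|\gamma_0^h\|^2_{L^2}\le 4\|\gamma_0^h\|^2_{L^2}$, which yields \eqref{L2bdd_goal}.

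The only delicate point is the bookkeeping: the $h$-powers in the rescaling, and the correspondence between the norm of $\gamma^h$ and the $\al^h,\be^h$ sums given the factor $\frac12$ in \eqref{defgh}. One must check that $\ld/2$ and $(\ld+h)/2$ index $\al^h,\be^h$ on the correct lattice, so that each value of $\al^h$ and $\be^h$ appears exactly once. The inequalities themselves are elementary.
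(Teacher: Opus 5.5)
Your proposal is correct and follows the paper's own argument. The paper also combines conservation of the mass $M$ with the lower bound $\frac12\|\al(t)\|_{\ell^2}^2+\|\be(t)\|_{\ell^2}^2\le M$, which comes from $\ln(1+x)\le x$ (its \eqref{ab_bdd1}), and with the Taylor upper bound on $M[\al_0,\be_0]$ under $\|\al_0\|_{\ell^\infty}\le\frac12$ (its \eqref{ab_bdd2}); you simply work directly in the scaled variables and get a slightly sharper constant ($\frac43$ rather than $2$ on the $\al$-part), which changes nothing.
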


\begin{proof}
The $L^2 (h \Z)$-bound for $\gamma^h (t)$ follows immediately from the bound \eqref{ab_bdd1} in the proof of Proposition~\ref{PROP:cons} and the bound \eqref{ab_bdd2} in Remark~\ref{RMK:cons}, as long as we have the condition \eqref{h_cond2}. 
\end{proof}

We now establish the following $H^1 (h \Z)$-bound for the solution $\gamma^h$ to the scaled Toda lattice \eqref{todah}. 
\begin{proposition}
\label{PROP:H1bdd}
Let $0 < h \leq 1$ and $\gamma_0^h \in H^1 (h \Z)$ be such that the condition \eqref{h_cond} holds. Let $\gamma^h \in C(\R; L^2 (h\Z))$ be the solution to the scaled Toda lattice \eqref{todah} with initial data $\gamma^h|_{t = 0} = \gamma_0^h$.  Then, for any $t \in \R$, we have
\begin{align}
    \| \gamma^h (t) \|_{\dot{H}^1 (h \Z)} \les \| \gamma_0^h \|_{\dot{H}^1 (h \Z)} + \| \gamma_0^h \|_{L^2 (h \Z)}^{\frac 53} ,
\label{H1bdd_goal}
\end{align}

\noi
where the underlying constant is independent of $h$.
\end{proposition}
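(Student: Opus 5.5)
We use conservation of the scaled energy $E^h$ from Proposition~\ref{PROP:cons} (rescaled via \eqref{def_abh}), together with the $L^2$ bound of Proposition~\ref{PROP:L2bdd}. First we check that \eqref{h_cond} implies \eqref{h_cond2}: by Lemma~\ref{LEM:Lpbdd} with $p=2$, $q=\infty$, we have $h^2\|\gamma_0^h\|_{L^\infty(h\Z)} \le h^{\frac 32}\|\gamma_0^h\|_{L^2(h\Z)} \le \frac14$, and the same holds at every time $t$ with the bound $\frac12$, by \eqref{L2bdd_goal}. Thus $|h^2\al^h(t,\ld)|\le \frac12$ uniformly in $t,\ld$, so the logarithm in \eqref{defH2} can be Taylor expanded.

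Next we identify the quadratic part of $E^h$ with the $\dot H^1$ norm. By the interlacing \eqref{defgh}, $\al^h(\ld)=\gamma^h(2\ld)$ and $\be^h(\ld)=\gamma^h(2\ld-h)$ (indices in $h\Z$). The differences $\al^h(\ld)-\be^h(\ld)$ and $\be^h(\ld+h)-\al^h(\ld)$ are then exactly all consecutive differences $\gamma^h(\mu+h)-\gamma^h(\mu)$. By \eqref{H1norm}, the first line of \eqref{defH2} equals $\frac14\|\gamma^h(t)\|_{\dot H^1(h\Z)}^2$. The logarithmic remainder is $\frac12 h^{-5}\sum(\frac13 x^3-\frac12x^2+x-\ln(1+x))$ with $x=h^2\al^h$. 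Since $|x|\le\frac12$, each summand is $O(|x|^4)$, so the remainder is bounded by $h^{3}\sum|\al^h|^4 \le h^2\|\al^h\|_{L^4}^4$. This is $\les h^2\|\gamma^h\|_{L^\infty}^2\|\gamma^h\|_{L^2}^2\le h\|\gamma^h\|_{L^2}^4$, and in fact it is better controlled by the cubic terms below. All remaining cubic terms are bounded by $\|\gamma^h\|_{L^3(h\Z)}^3$, using Hölder and the fact that shifts preserve norms.

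The key step is the Gagliardo--Nirenberg inequality, Lemma~\ref{LEM:GN}, with $q=3$, $s=1$, $\ta=\frac16$:
\[
\|\gamma^h\|_{L^3}^3\les \|\gamma^h\|_{L^2}^{\frac52}\|\gamma^h\|_{\dot H^1}^{\frac12}.
\]
The logarithmic remainder uses $L^4$: $h^2\|\gamma^h\|_{L^4}^4\les h^2\|\gamma^h\|_{L^2}^{\frac72}\|\gamma^h\|_{\dot H^1}^{\frac12}$, which is no worse since $h\le1$. We may also absorb it via $h^{\frac32}\|\gamma^h\|_{L^2}\le\frac12$, which gives $h^2\|\gamma^h\|_{L^2}^{\frac72}\le \|\gamma^h\|_{L^2}^{\frac52}$. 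Writing $A(t)=\|\gamma^h(t)\|_{\dot H^1}$ and $m=\|\gamma_0^h\|_{L^2}$, and using $\|\gamma^h(t)\|_{L^2}\le 2m$, we obtain
\[
\tfrac14A(t)^2 - Cm^{\frac52}A(t)^{\frac12} \le E^h(t)=E^h(0)\le \tfrac14 A(0)^2 + Cm^{\frac52}A(0)^{\frac12}.
\]
By Young's inequality, $m^{5/2}A^{1/2}\le \epsilon A^2 + C_\epsilon m^{10/3}$. This gives $A(t)^2\les A(0)^2+m^{10/3}$, i.e.\ $A(t)\les A(0)+m^{5/3}$, which is \eqref{H1bdd_goal}.

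The main obstacle is the careful bookkeeping of the logarithmic term with the prefactor $h^{-5}$. We must verify that the cubic part of its Taylor expansion cancels against the $\frac13 x^3$ in \eqref{defH2}, so that only quartic and higher orders remain. We must also check that the resulting power of $h$ is nonnegative after using \eqref{h_cond}. A lesser point is confirming that $E^h(t)$ is finite, which requires $H^1$ regularity at positive times. We would handle this by noting that the energy identity and the a priori bound hold for the $\ell^2$ solution, and the finiteness of $E^h(t)$ forces $\gamma^h(t)\in H^1$.
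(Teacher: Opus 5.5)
Your proposal is correct and follows the paper's proof essentially step for step. The ingredients are the same: conservation of $E^h$ justified through \eqref{h_cond} and Lemma~\ref{LEM:Lpbdd}, identifying the quadratic part of $E^h$ with $\frac14\|\gamma^h(t)\|_{\dot H^1(h\mathbb{Z})}^2$ via the interlacing, Gagliardo--Nirenberg with $\theta=\frac16$ on the cubic terms, Proposition~\ref{PROP:L2bdd}, and Young's inequality to close.

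There is one small slip in your intermediate $L^4$ bound. The inequality $\|\gamma^h\|_{L^4}^4\lesssim\|\gamma^h\|_{L^2}^{7/2}\|\gamma^h\|_{\dot H^1}^{1/2}$ is not uniform in $h$: a single-site function violates it by a factor $h^{-1/2}$. The correct version, from $\|\gamma^h\|_{L^\infty}\le h^{-1/2}\|\gamma^h\|_{L^2}$ and the $L^3$ Gagliardo--Nirenberg bound, gives $h^2\|\gamma^h\|_{L^4}^4\lesssim h^{3/2}\|\gamma^h\|_{L^2}^{7/2}\|\gamma^h\|_{\dot H^1}^{1/2}$. This is still at most $\|\gamma^h\|_{L^2}^{5/2}\|\gamma^h\|_{\dot H^1}^{1/2}$ because $h^{3/2}\|\gamma^h(t)\|_{L^2}\le\frac12$, so your conclusion stands. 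The paper avoids this step altogether by using your other bound, $h\|\gamma^h\|_{L^2}^4\lesssim\|\gamma_0^h\|_{L^2}^{10/3}$, which follows from \eqref{h_cond}.
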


\begin{proof}
From Lemma~\ref{LEM:Lpbdd} and \eqref{h_cond}, we know that
\begin{align}
    h^2 \| \gamma_0^h \|_{L^\infty (h \Z)} \leq h^{\frac 32} \| \gamma_0^h \|_{L^2 (h \Z)} \leq \frac 14  ,
\label{sob}
\end{align} 
Thus, from Proposition~\ref{PROP:cons} and Remark~\ref{RMK:cons}, we have the conservation of energy $E^h (t) = E^h (0)$ for any $t \in \R$.

From \eqref{H1norm}, \eqref{defH2}, \eqref{defgh}, the Taylor expansion of $\ln (1 + x)$, Young's inequalities, and Lemma~\ref{LEM:Lpbdd}, we see that for any $t \in \R$, we have
\begin{align}
\begin{split}
    \Big| &E^h (t) - \frac 14 \| \gamma^h (t) \|_{\dot{H}^1 (h \Z)}^2 \Big| \\
    &\les \sum_{\ld \in h \Z} \Big( h | \gamma^h (t, 2 \ld)^2 \gamma^h (t, 2 \ld - h) | + h |  \gamma^h (t, 2 \ld)^2 \gamma^h (t, 2 \ld + h) | \\
    &\qquad + h |\gamma^h (t, 2 \ld - h)|^3 + h |\gamma^h (t, 2 \ld)|^3 + \sum_{k = 4}^\infty h^{2k - 5} |\al^h (t, \ld)|^k \Big) \\
    &\les \sum_{\ld \in h \Z} h | \gamma^h (t, \ld) |^3 + \sum_{k = 4}^\infty h^{2k - 6} \| \gamma^h (t) \|_{L^k (h \Z)}^k \\
    &\les \| \gamma^h (t) \|_{L^3 (h \Z)}^3 + \sum_{k = 4}^\infty h^{\frac 32 k - 5} \| \gamma^h (t) \|_{L^2 (h \Z)}^k .
\end{split}
\label{H2diff1}
\end{align}

\noi
By the Gagliardo-Nirenburg inequality (Lemma~\ref{LEM:GN}), Young's inequality, and Proposition~\ref{PROP:L2bdd}, we have
\begin{align}
\begin{split}
    \| \gamma^h (t) \|_{L^3 (h \Z)}^3 &\les \| \gamma^h (t) \|_{L^2 (h \Z)}^{\frac 52} \| \gamma^h (t) \|_{\dot{H}^1 (h \Z)}^{\frac 12} \\
    &\les \dl \| \gamma^h (t) \|_{\dot{H}^1 (h \Z)}^2 + \dl^{-1} \| \gamma^h (t) \|_{L^2 (h \Z)}^{\frac{10}{3}} \\
    &\les \dl \| \gamma^h (t) \|_{\dot{H}^1 (h \Z)}^2 + \dl^{-1} \| \gamma_0^h \|_{L^2 (h \Z)}^{\frac{10}{3}} 
\end{split}
\label{H2diff2}
\end{align}

\noi
for any $\dl > 0$. Also, by Proposition~\ref{PROP:L2bdd} and the condition \eqref{h_cond}, we have 
\begin{align}
\begin{split}
    \sum_{k = 4}^\infty h^{\frac 32 k - 5} \| \gamma^h (t) \|_{L^2 (h \Z)}^k &\leq \sum_{k = 4}^\infty 2^k h^{\frac 32 k - 5} \| \gamma_0^h \|_{L^2 (h \Z)}^k \\
    &\les h \| \gamma_0^h \|_{L^2 (h \Z)}^4 \\
    &\les \| \gamma_0^h \|_{L^2 (h \Z)}^{\frac{10}{3}} .
\end{split}
\label{H2diff3}
\end{align}

\noi
Thus, from \eqref{H2diff1}, \eqref{H2diff2}, and \eqref{H2diff3}, we obtain that for any $t \in \R$,
\begin{align}
    \Big| E^h (t) - \frac 14 \| \gamma^h (t) \|_{\dot{H}^1 (h \Z)}^2 \Big| \leq \frac{1}{8} \| \gamma^h (t) \|_{\dot{H}^1 (h \Z)}^2 + C \| \gamma_0^h \|_{L^2 (h \Z)}^{\frac{10}{3}} 
\label{H2diff}
\end{align}

\noi
for some constant $C > 0$. Therefore, from \eqref{H2diff} and the conservation of $E^h$, for any $t \in \R$, we have
\begin{align*}
    \frac 14 \| \gamma^h (t) \|_{\dot{H}^1 (h \Z)}^2 &\leq  \Big| E^h (t) - \frac 14 \| \gamma^h (t) \|_{\dot{H}^1 (h \Z)}^2 \Big| + |E^h (t)| \\
    &\leq \frac{1}{8} \| \gamma^h (t) \|_{\dot{H}^1 (h \Z)}^2 + C \| \gamma_0^h \|_{L^2 (h \Z)}^{\frac{10}{3}} + |E^h (0)| \\
    &\leq \frac{1}{8} \| \gamma^h (t) \|_{\dot{H}^1 (h \Z)}^2 + \frac 38 \| \gamma_0^h \|_{\dot{H}^1 (h \Z)}^2 + 2 C \| \gamma_0^h \|_{L^2 (h \Z)}^{\frac{10}{3}}.
\end{align*}

\noi
This gives the desired estimate.
\end{proof}

Note that from \eqref{sob}, the condition \eqref{h_cond} on $h$ is stronger than the condition \eqref{h_cond2} in Proposition~\ref{PROP:L2bdd}. Thus, in the rest of this paper, we work on the condition \eqref{h_cond}.

Gathering the a priori bounds in this section, we are able to prove global well-posedness of the scaled Toda lattice \eqref{todah} in $H^1 (h \Z)$.

\begin{proof}[Proof of Theorem~\ref{THM:conv}~\textup{(i)}]
We recall the assumption $\gamma_0^h \in H^1 (h \Z)$ and the condition \eqref{h_cond}. Thus, from Proposition~\ref{PROP:L2bdd} and Proposition~\ref{PROP:H1bdd}, we have the a priori bound
\begin{align*}
    \| \gamma^h (t) \|_{H^1 (h \Z)} \les \| \gamma^h (t) \|_{L^2 (h \Z)} + \| \gamma^h (t) \|_{\dot{H}^1 (h \Z)} \les \| \gamma_0^h \|_{H^1 (h \Z)} + \| \gamma_0^h \|_{L^2 (h \Z)}^{\frac 53}
\end{align*}

\noi
for any $t \in \R$ such that $\gamma^h (t)$ is already constructed. This allows us to iterate the Picard iteration scheme and construct the global-in-time solution to the scaled Toda lattice \eqref{todah}.
\end{proof}

\section{Continuum limit and long-wave limits}
\label{SEC:conv}

Our goal in this section is to prove the continuum limit and long-wave limits of the scaled Toda lattice to the KdV equation, as state in Theorem~\ref{THM:conv}~(ii), Theorem~\ref{THM:lw}, and Theorem~\ref{THM:long-wave}.

\subsection{A priori bounds for solutions}
\label{SUB:bdd}

Let us consider the scaled Toda lattice in Flaschka's form \eqref{todah}. We first write out the Duhamel formulation of the scaled Toda lattice \eqref{todah}:
\begin{align}
    \gamma^h (t) = e^{- 6 h^{-2} t \dh} \gamma^h_0  - \int_0^t e^{- 6 h^{-2} (t - t') \dh} \mathcal{N}^h [\gamma^h] (t') dt' ,
\label{duh_g}
\end{align}

\noi
where the nonlinearity $\mathcal{N}^h [\gamma^h]$ is given by \eqref{defNh} and can be written as
\begin{align}
\mathcal{N}^h [\gamma^h] (t) = 3 \wt{\gamma}^h (t) \cdot h^{-1} ( \gamma^h (t, \cdot + h) - \gamma^h (t, \cdot - h) ) =  6 \wt{\gamma}^h (t) \dh \gamma^h (t)
\label{Nh}
\end{align}

\noi
with $\wt{\gamma}^h$ being defined as
\begin{align}
\wt{\gamma}^h (t, \ld) =
\begin{cases}
    \gamma^h (t, \ld)  &  \text{if $\dfrac{\ld}{h}$ is even} \vspace{5pt} \\
    \dfrac 12 ( \gamma^h (t, \ld + h) + \gamma^h (t, \ld - h) )  &  \text{if $\dfrac{\ld}{h}$ is odd} .
\end{cases}
\label{defgh2}
\end{align} 

\noi
From Theorem~\ref{THM:conv}~(i), we know that \eqref{duh_g} has a unique solution in $C(\R; H^1 (h \Z))$. 
From \eqref{defuh} and \eqref{duh_g}, we have
\begin{align}
    u^h (t) = e^{- 6 h^{-2} t (\dh - \partial_x)} \mathcal{E} \gamma_0^h - 6 \int_0^t e^{- 6 h^{-2} (t - t') (\dh - \dx)} e^{6 h^{-2} t' \dx} \mathcal{E} (\wt \gamma^h \dh \gamma^h) (t') dt' ,
\label{eq_uh}
\end{align}

\noi
where we recall that $\mathcal{E}$ is the extension operator in \eqref{extend}. We also write out the Duhamel formulation of the KdV equation \eqref{KdV}:
\begin{align}
    u (t) = e^{- t \partial_x^3} u_0 - 6 \int_0^t e^{- (t - t') \partial_x^3} (u \partial_x u) (t') dt'.
\label{duh_kdv}
\end{align}

In order to prove global-in-time convergence of the solutions, we need to establish some a priori bounds of $u^h$ and $u$ in certain function spaces.

Let us first show the following lemma on the estimate of the quadratic term on the right-hand-side of \eqref{eq_uh}.
\begin{lemma}
\label{LEM:quad}
Let $0 < h \leq 1$ and $\gamma_0^h \in H^1 (h \Z)$ be such that the condition \eqref{h_cond} holds. Let $\gamma^h$ be the solution to \eqref{duh_g} with initial data $\gamma_0^h$, $\wt \gamma^h$ be given by \eqref{defgh2}, and $u^h$ be given by \eqref{defuh}. Let $I \subset \R$ be a closed interval. Then, we have
\begin{align}
    \| \mathcal{E} (\wt \gamma^h \dh \gamma^h) (t) \|_{H^1 (\R)} \les \| \mathcal{E} \wt \gamma^h (t) \mathcal{E} (\dh \gamma^h) (t) \|_{H^1 (\R)}
\label{quad01}
\end{align}

\noi
for any $t \in I$ and
\begin{align}
\begin{split}
    \| \mathcal{E} \wt \gamma^h \mathcal{E} (\dh \gamma^h) \|_{L_t^2 (I ; H_x^1 (\R))} 
    &\les |I|^{\frac 12} \| u^h \|_{L_t^\infty (I ; H_x^1 (\R))}^2 + |I|^{\frac 14} \| u^h \|_{L_t^\infty (I ; H_x^1 (\R))}  \| \dh u^h \|_{L_t^4 (I; L_x^\infty (\R))} \\
    &\quad + \| P_{\frac{\pi}{4h}} u^h \|_{L_x^2 (\R ; L_t^\infty (I))} \| \dh \dx u^h \|_{L_x^\infty (\R; L_t^2 (I))} \\
    &\quad + |I|^{\frac 14} \Big( \| \gamma_0^h \|_{H^1 (h \Z)} + \| \gamma_0^h \|_{L^2 (h \Z)}^{\frac 53} \Big) \| \dh u^h \|_{L_t^4 (I; L_x^\infty (\R))} \\
    &\quad + |I|^{\frac 12} \Big( \| \gamma_0^h \|_{H^1 (h \Z)} + \| \gamma_0^h \|_{L^2 (h \Z)}^{\frac 53} \Big) \| u^h \|_{L_t^\infty (I; H_x^1 (\R))} .
\end{split}
\label{quad02}
\end{align}
\end{lemma}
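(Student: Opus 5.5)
For \eqref{quad01} I will apply Lemma~\ref{LEM:Eprod} with $f^h = \wt\gamma^h$ and $g^h = \dh\gamma^h$, which gives
\[
\mathcal{E}(\wt\gamma^h \dh\gamma^h) = P_{\frac{\pi}{h}}\big(F\big) + P_{\frac{\pi}{h}}\big(e^{i\frac{2\pi}{h}\cdot}F\big) + P_{\frac{\pi}{h}}\big(e^{-i\frac{2\pi}{h}\cdot}F\big), \qquad F := \mathcal{E}\wt\gamma^h\, \mathcal{E}(\dh\gamma^h).
\]
The first term has $H^1$ norm at most $\|F\|_{H^1}$. For the modulated terms, the output frequency lies in $[-\frac{\pi}{h},\frac{\pi}{h}]$, so its $H^1$ norm is at most $(1+\frac{\pi}{h})$ times its $L^2$ norm. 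Lemma~\ref{LEM:Ecos} bounds that $L^2$ norm by $h\|F\|_{H^1}$, and the two factors combine to give $\|F\|_{H^1}$.

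For \eqref{quad02}, note first that $\mathcal{E}(\dh\gamma^h) = \dh\mathcal{E}\gamma^h$, and that modulus is translation invariant. So I can conjugate by $e^{6h^{-2}t\dx}$ and work with $u^h$ and $e^{6h^{-2}t\dx}\mathcal{E}\wt\gamma^h$. The latter is a combination of $u^h$ with the averaging multiplier $\frac12(\tau_h + \tau_{-h})$, up to an even/odd correction. Writing $\wt\gamma^h = \gamma^h + \frac h2 (\dh^+ - \dh^-)\gamma^h\cdot\ind_{\text{odd}}$, the correction is $O(h)$ times a derivative in $L^2$. The parity indicator $\ind_{\text{odd}} = \frac12(1 - e^{i\pi\ld/h})$ introduces a modulation by $\frac{\pi}{h}$ after extension. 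I would handle this again via Lemma~\ref{LEM:Eprod}, or simply bound it crudely using the uniform $H^1$ bound from Propositions~\ref{PROP:L2bdd} and \ref{PROP:H1bdd} together with Lemma~\ref{LEM:Hsbdd}. This is what produces the terms carrying the factor $\|\gamma_0^h\|_{H^1}+\|\gamma_0^h\|_{L^2}^{5/3}$.

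Next I expand $\dx(\mathcal{E}\wt\gamma^h\, \dh u^h) = \dx\mathcal{E}\wt\gamma^h\, \dh u^h + \mathcal{E}\wt\gamma^h\, \dh\dx u^h$, with $\mathcal{E}\wt\gamma^h$ replaced by $u^h$ up to the corrections above. The pieces are handled as follows.
\begin{itemize}
\item The $L^2_x$ part, $\|u^h\,\dh u^h\|_{L^2_tL^2_x}$, is bounded by $|I|^{1/2}\|u^h\|_{L^\infty H^1}^2$ using $\|u^h\|_{L^\infty_x}\les\|u^h\|_{H^1}$ and $|\dh|\les|\dx|$.
\item The term $\dx u^h\,\dh u^h$ is placed in $L^\infty_tL^2_x\times L^4_tL^\infty_x$. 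H\"older in time gives $|I|^{1/4}\|u^h\|_{L^\infty H^1}\|\dh u^h\|_{L^4L^\infty}$.
\item The term $u^h\,\dh\dx u^h$ carries the full loss of a derivative and is the main obstacle. Here the smoothing estimates will be used.
\end{itemize}

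To handle $u^h\,\dh\dx u^h$, I split $u^h = P_{\frac{\pi}{4h}}u^h + P_{\frac{\pi}{4h}}^\perp u^h$. For the low part, H\"older in the form $L^2_xL^\infty_t\times L^\infty_xL^2_t$ gives exactly the third term of \eqref{quad02}. For the high part, $|\xi|\gtrsim h^{-1}$, so $\|P^\perp_{\frac{\pi}{4h}} u^h\|_{L^\infty_x}\les h^{1/2}\|u^h\|_{H^1}$. Meanwhile $\|\dh\dx u^h\|_{L^2}\les h^{-1}\|\dx u^h\|_{L^2}$, since $|\dh|\le h^{-1}$. This costs $h^{-1/2}$ and does not close directly. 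Instead I would distribute the derivatives: $\dh\dx u^h$ is at most $h^{-1}\dx u^h$, and the factor $h$ is recovered from the high frequency of $P^\perp u^h$ via $\|P^\perp u^h\|_{L^2}\les h\|u^h\|_{H^1}$. Combined with Sobolev on $\dx u^h$ at frequencies up to $h^{-1}$, this gives $|I|^{1/2}\|u^h\|_{L^\infty H^1}^2$. If that bookkeeping fails, I would put $P^\perp u^h$ in $L^4_tL^\infty_x$ through $\dh u^h$, using that $|\dh|\sim h^{-1}$ on part of the high-frequency region. The difficulty is that $\sin(h\xi)$ vanishes near $\xi = \pm\frac{\pi}{h}$, so the inflection region has to be treated separately using only the crude $h$-gain from the $L^2$ bound. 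The parity-correction terms go through the same H\"older pairings and give the last two lines of \eqref{quad02}.
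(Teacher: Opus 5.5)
Your argument for \eqref{quad01} is fine: Bernstein on the band $[-\frac{\pi}{h},\frac{\pi}{h}]$ plus Lemma~\ref{LEM:Ecos} is an equivalent shortcut for the paper's product-rule computation. Your treatment of $u^h\,\dh u^h$, of $\dx u^h\,\dh u^h$, and of the low-frequency piece $P_{\frac{\pi}{4h}}u^h\,\dh\dx u^h$ also matches the paper.

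The gap is the high-frequency piece $P^\perp_{\frac{\pi}{4h}}u^h\,\dh\dx u^h$. Your ``distribute the derivatives'' bookkeeping does not give $|I|^{\frac12}\|u^h\|_{L^\infty_tH^1_x}^2$. It gives
\[
\|P^\perp_{\frac{\pi}{4h}}u^h\|_{L^2_x}\|\dh\dx u^h\|_{L^\infty_x}\les h\|u^h\|_{H^1}\cdot h^{-1}\|\dx u^h\|_{L^\infty_x}\les h^{-\frac12}\|u^h\|_{H^1}^2 ,
\]
which is the same loss as your first attempt. This is not an artifact of the bookkeeping. No fixed-time bound of this term by $\|u^h\|_{H^1}^2$ can be uniform in $h$: a wave packet at frequency $\frac{\pi}{2h}$ whose width $L$ is a large multiple of $h$ makes the ratio $\sim L^{-\frac12}\sim h^{-\frac12}$. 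So a dispersive space-time norm has to enter.

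Your fallback tries to put $P^\perp_{\frac{\pi}{4h}}u^h$ itself into $L^4_tL^\infty_x$ by writing it as $h$ times a multiplier applied to $\dh u^h$. This breaks down where $\sin(h\xi)$ vanishes at $\xi=\pm\frac{\pi}{h}$, and it would also need $L^\infty$-boundedness of a sharp frequency cutoff. The ``crude $h$-gain from the $L^2$ bound'' you propose for the inflection region still leaves you needing $\dh\dx u^h\in L^\infty_x$ without loss, which is the original problem. The correction term $(\mathcal{E}\wt\gamma^h-\mathcal{E}\gamma^h)(\cdot+6h^{-2}t)\,\dh\dx u^h$ has the same difficulty, so ``the same H\"older pairings'' do not cover it either.

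The missing idea is that you traded the wrong derivative. Replacing $\dh$ by $h^{-1}$ leaves $\dx u^h$ in $L^\infty_x$, which nothing controls without loss. Replacing $\dx$ by $h^{-1}$ instead leaves $\dh u^h$ in $L^\infty_x$, which is exactly the Strichartz quantity $\|\dh u^h\|_{L^4_tL^\infty_x}$ in \eqref{quad02}. This is what the paper does:
\begin{itemize}
\item Keep $P^\perp_{\frac{\pi}{4h}}u^h$ in $L^\infty_tL^2_x$, where $\|P^\perp_{\frac{\pi}{4h}}u^h\|_{L^2_x}\les h\|u^h\|_{H^1}$.
\item Put $\dx\dh u^h$ in $L^4_tL^\infty_x$. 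Since $u^h$ has Fourier support in $[-\frac{\pi}{h},\frac{\pi}{h}]$, Bernstein's inequality in $L^\infty$ gives $\|\dx\dh u^h\|_{L^\infty_x}\les h^{-1}\|\dh u^h\|_{L^\infty_x}$.
\item The factors $h$ and $h^{-1}$ cancel, yielding $|I|^{\frac14}\|u^h\|_{L^\infty_tH^1_x}\|\dh u^h\|_{L^4_tL^\infty_x}$.
\end{itemize}
Here $\dh$ is never inverted, so the inflection points play no role. The same pairing handles the correction term, using $\|\mathcal{E}\wt\gamma^h-\mathcal{E}\gamma^h\|_{L^2_x}\les h\|\gamma^h\|_{\dot H^1(h\Z)}\les h(\|\gamma_0^h\|_{H^1(h\Z)}+\|\gamma_0^h\|_{L^2(h\Z)}^{\frac53})$ in place of the high-frequency $L^2$ gain. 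It gives the corresponding contribution to the fourth line of \eqref{quad02}.
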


\begin{proof}
We first consider \eqref{quad01}. Let $t \in I$. From Lemma~\ref{LEM:Eprod}, we have
\begin{align}
\begin{split}
    \| &\mathcal{E} (\wt \gamma^h \dh \gamma^h) (t) \|_{H^1 (\R)} \\
    &\leq \| \mathcal{E} \wt \gamma^h (t) \mathcal{E} (\dh \gamma^h) (t) \|_{H^1 (\R)} + \big\| P_{\frac{\pi}{h}} \big( \cos (\tfrac{2\pi}{h} \cdot) \mathcal{E} \wt \gamma^h (t) \mathcal{E} (\dh \gamma^h) (t) \big) \big\|_{H^1 (\R)} .
\end{split}
\label{quad1-1}
\end{align}

\noi
For the second term on the right-hand-side of \eqref{quad1-1}, we use Lemma~\ref{LEM:Ecos} to compute that
\begin{align}
\begin{split}
    \big\| &P_{\frac{\pi}{h}} \big( \cos (\tfrac{2\pi}{h} \cdot) \mathcal{E} \wt \gamma^h (t) \mathcal{E} (\dh \gamma^h) (t) \big) \big\|_{H^1 (\R)} \\
    &\les  \big\| P_{\frac{\pi}{h}} \big( \cos (\tfrac{2 \pi}{h} \cdot)\mathcal{E} \wt \gamma^h (t) \mathcal{E} (\dh \gamma^h) (t) \big) \big\|_{L^2 (\R)} \\
    &\quad + \big\| P_{\frac{\pi}{h}} \big( \cos (\tfrac{2 \pi}{h} \cdot) \dx ( \mathcal{E} \wt \gamma^h (t) \mathcal{E} (\dh \gamma^h) (t) ) \big) \big\|_{L^2 (\R)} \\
    &\quad + \big\| P_{\frac{\pi}{h}} \big( \dx \cos (\tfrac{2 \pi}{h} \cdot) \mathcal{E} \wt \gamma^h (t) \mathcal{E} (\dh \gamma^h) (t) \big) \big\|_{L^2 (\R)} \\
    &\les \| \mathcal{E} \wt \gamma^h (t) \mathcal{E} (\dh \gamma^h) (t) \|_{H^1 (\R)} + h^{-1} \big\| P_{\frac{\pi}{h}} \big( \sin (\tfrac{2 \pi}{h} \cdot) \mathcal{E} \wt \gamma^h (t) \mathcal{E} (\dh \gamma^h) (t) \big) \big\|_{L^2 (\R)} \\
    &\les \| \mathcal{E} \wt \gamma^h (t) \mathcal{E} (\dh \gamma^h) (t) \|_{H^1 (\R)} .
\end{split}
\label{quad1-2}
\end{align}

\noi
The desired estimate \eqref{quad01} then follows from \eqref{quad1-1} and \eqref{quad1-2}.

We now consider \eqref{quad02}. Using a spatial translation and \eqref{defuh}, we get
\begin{align}
\begin{split}
    \| & \mathcal{E} \wt \gamma^h \mathcal{E} (\dh \gamma^h) \|_{L_t^2 (I ; H_x^1 (\R))} \\
    &= \big\| \mathcal{E} \wt \gamma^h (\cdot + 6 h^{-2} t) \dh \mathcal{E} \gamma^h (\cdot + 6 h^{-2} t) \big\|_{L_t^2 (I ; H_x^1 (\R))} \\
    &\les \| \dx \mathcal{E} \wt \gamma^h (\cdot + 6 h^{-2} t) \dh u^h \|_{L_t^2 (I; L_x^2 (\R))} + \| \mathcal{E} \wt \gamma^h (\cdot + 6 h^{-2} t) \dx \dh u^h \|_{L_t^2 (I ; L_x^2 (\R))} \\
    &\quad + \| \mathcal{E} \wt \gamma^h (\cdot + 6 h^{-2} t) \dh u^h \|_{L_t^2 (I ; L_x^2 (\R))} \\
    &=: \textup{I}_1 + \textup{I}_2 + \textup{I}_3 .
\end{split}
\label{quad1}
\end{align}

\noi
For $\textup{I}_1$, by H\"older's inequalities and \eqref{defuh}, we have
\begin{align}
\begin{split}
    \textup{I}_1 &\leq |I|^{\frac 14} \| \dx \mathcal{E} \wt \gamma^h (\cdot + 6 h^{-2} t) \|_{L_t^\infty (I ; L_x^2 (\R))} \| \dh u^h \|_{L_t^4 (I ; L_x^\infty (\R))} \\
    &\leq |I|^{\frac 14} \big( \| u^h \|_{L_t^\infty (I ; H_x^1 (\R))} + \| \mathcal{E} \wt \gamma^h - \mathcal{E} \gamma^h \|_{L_t^\infty (I ; H_x^1 (\R))} \big) \| \dh u^h \|_{L_t^4 (I ; L_x^\infty (\R))} .
\end{split}
\label{quad2-1}
\end{align}

\noi
For any $t \in I$, by the fact that $\mathcal{E} \wt \gamma^h$ and $\mathcal{E} \gamma^h$ have Fourier support on $[- \frac{\pi}{h}, \frac{\pi}{h}]$, Lemma~\ref{LEM:Hsbdd}, \eqref{H1norm}, and Proposition~\ref{PROP:H1bdd}, we have
\begin{align}
\begin{split}
    \| \mathcal{E} \wt \gamma^h (t) - \mathcal{E} \gamma^h (t) \|_{H_x^1 (\R)} &\les h^{-1} \| \mathcal{E} \wt \gamma^h (t) - \mathcal{E} \gamma^h (t) \|_{L_x^2 (\R)} \\
    &\leq h^{-1} \| \wt \gamma^h (t) - \gamma^h (t) \|_{L^2 (h \Z)} \\
    &\les \bigg( h^{-1} \sum_{\ld \in h \Z} |\gamma^h (t, \ld + h) - \gamma^h (t, \ld)|^2 \bigg)^{\frac 12} \\
    &= \| \gamma^h (t) \|_{\dot{H}^1 (h \Z)} \\
    &\les \| \gamma_0^h \|_{\dot{H}^1 (h \Z)} + \| \gamma_0^h \|_{L^2 (h \Z)}^{\frac 53}.
\end{split}
\label{quad2-2}
\end{align}

\noi
Combining \eqref{quad2-1} and \eqref{quad2-2}, we get
\begin{align}
\begin{split}
    \textup{I}_1 &\les |I|^{\frac 14} \| u^h \|_{L_t^\infty (I ; H_x^1 (\R))}  \| \dh u^h \|_{L_t^4 (I; L_x^\infty (\R))} \\
    &\quad + |I|^{\frac 14} \Big( \| \gamma_0^h \|_{H^1 (h \Z)} + \| \gamma_0^h \|_{L^2 (h \Z)}^{\frac 53} \Big) \| \dh u^h \|_{L_t^4 (I; L_x^\infty (\R))} .
\end{split}
\label{quad2}
\end{align}

\noi
For $\textup{I}_2$, by \eqref{defuh}, H\"older's inequalities, and Bernstein's inequality (see \cite[Page~333]{Tao}) along with the fact that $u^h$ has Fourier support $[-\frac{\pi}{h}, \frac{\pi}{h}]$, we have
\begin{align}
\begin{split}
    \textup{I}_2 &\leq \| P_{\frac{\pi}{4h}} u^h \dx \dh u^h \|_{L_x^2 (\R ; L_t^2 (I))} + \| P_{\frac{\pi}{4h}}^\perp u^h \dx \dh u^h \|_{L_t^2 (I ; L_x^2 (\R))} \\
    &\quad + \| (\mathcal{E} \wt \gamma^h (\cdot + 6 h^{-2} t) - \mathcal{E} \gamma^h (\cdot + 6 h^{-2} t)) \dx \dh u^h \|_{L_t^2 (I ; L_x^2 (\R))} \\
    &\leq \| P_{\frac{\pi}{4h}} u^h \|_{L_x^2 (\R ; L_t^\infty (I))} \| \dh \dx u^h \|_{L_x^\infty (\R ; L_t^2 (I))} \\
    &\quad + |I|^{\frac 14} \| P_{\frac{\pi}{4h}}^\perp u^h \|_{L_t^\infty (I; L_x^2 (\R))} \| \dx \dh u^h \|_{L_t^4 (I ; L_x^\infty (\R))} \\
    &\quad + |I|^{\frac 14} \| \mathcal{E} \wt \gamma^h - \mathcal{E} \gamma^h \|_{L_t^\infty (I ; L_x^2 (\R))} \| \dx \dh u^h \|_{L_t^4 (I ; L_x^\infty (\R))} \\
    &\les \| P_{\frac{\pi}{4h}} u^h \|_{L_x^2 (\R ; L_t^\infty (I))} \| \dh \dx u^h \|_{L_x^\infty (\R ; L_t^2 (I))} \\
    &\quad + |I|^{\frac 14} \| u^h \|_{L_t^\infty (I; H_x^1 (\R))} \| \dh u^h \|_{L_t^4 (I; L_x^\infty (\R))} \\
    &\quad + |I|^{\frac 14} h^{-1} \| \mathcal{E} \wt \gamma^h - \mathcal{E} \gamma^h \|_{L_t^\infty (I; L_x^2 (\R))} \| \dh u^h \|_{L_t^4 (I; L_x^\infty (\R))} .
\end{split}
\label{quad3-1}
\end{align}

\noi
We now use similar steps in \eqref{quad2-2} to obtain that for any $t \in I$,
\begin{align}
\begin{split}
    \| &\mathcal{E} \wt \gamma^h (t) - \mathcal{E} \gamma^h (t) \|_{L_x^2 (\R)} \les h \| \gamma^h (t) \|_{\dot{H}^1 (h \Z)} \les h \big( \| \gamma_0^h \|_{H^1 (h \Z)} + \| \gamma_0^h \|_{L^2 (h \Z)}^{\frac 53} \big).
\end{split}
\label{quad3-2}
\end{align}

\noi
Combining \eqref{quad3-1} and \eqref{quad3-2}, we get
\begin{align}
\begin{split}
    \textup{I}_2 &\les \| P_{\frac{\pi}{4h}} u^h \|_{L_x^2 (\R ; L_t^\infty (I))} \| \dh \dx u^h \|_{L_x^\infty (\R ; L_t^2 (I))} \\
    &\quad + |I|^{\frac 14} \| u^h \|_{L_t^\infty (I; H_x^1 (\R))} \| \dh u^h \|_{L_t^4 (I; L_x^\infty (\R))} \\
    &\quad + |I|^{\frac 14} \Big( \| \gamma_0^h \|_{H^1 (h \Z)} + \| \gamma_0^h \|_{L^2 (h \Z)}^{\frac 53} \Big) \| \dh u^h \|_{L_t^4 (I; L_x^\infty (\R))} .
\end{split}
\label{quad3}
\end{align}

\noi
For $\textup{I}_3$, by H\"older's inequalities, Sobolev's embedding, the fact that $|\frac{1}{h} \sin (h \xi)| \leq |\xi|$, and \eqref{quad2-2}, we obtain
\begin{align}
\begin{split}
    \textup{I}_3 &\leq |I|^{\frac 12} \| \mathcal{E} \wt \gamma^h (\cdot + 6 h^{-2} t) \|_{L_t^\infty (I ; L^\infty_x (\R))} \| \dh u^h \|_{L_t^\infty (I ; L_x^2 (\R))} \\
    &\les |I|^{\frac 12} \big( \| u^h \|_{L_t^\infty (I ; H_x^1 (\R))} + \| \mathcal{E} \wt \gamma^h - \mathcal{E} \gamma^h \|_{L_t^\infty (I ; H_x^1 (\R))} \big) \| u^h \|_{L_t^\infty (I ; H_x^1 (\R))} \\
    &\les |I|^{\frac 12} \| u^h \|_{L_t^\infty (I ; H_x^1 (\R))}^2 + |I|^{\frac 12} \Big( \| \gamma_0^h \|_{H^1 (h \Z)} + \| \gamma_0^h \|_{L^2 (h \Z)}^{\frac 53} \Big) \| u^h \|_{L_t^\infty (I ; H_x^1 (\R))} .
\end{split}
\label{quad4}
\end{align}

\noi
Combining \eqref{quad1}, \eqref{quad2}, \eqref{quad3}, and \eqref{quad4}, we obtain the desired estimate \eqref{quad02}.
\end{proof}

We now prove the following uniform-in-$h$ a priori bound for $u^h$ in several relevant norms.

\begin{proposition}
\label{PROP:uhbdd}
Let $0 < h \leq 1$ and $\gamma_0^h \in H^1 (h \Z)$ be such that the condition \eqref{h_cond} holds. Let $u^h$ be defined by \eqref{defuh} with $\gamma^h$ being the solution to \eqref{duh_g} with initial data $\gamma_0^h$. Let $I \subset \R$ be a closed interval with $|I| \leq 1$. Then, for any $\frac 12 < b < 1$, we have
\begin{align*}
    \| &u^h \|_{C_t (I ; H_x^1 (\R))} + \| \dh u^h \|_{L_t^4 (I ; L_x^\infty (\R))} + \| P_{\frac{\pi}{4h}} u^h \|_{L_x^2 (\R ; L_t^\infty (I))} \\
    &+ \| \dh \dx u^h \|_{L_x^\infty (\R ; L_t^2 (I))} + \| \dx^2 u^h \|_{L_x^\infty (\R ; L_t^2 (I))} \\
    &\quad \les \| \gamma_0^h \|_{H^1 (h \Z)} + \| \gamma_0^h \|_{L^2 (h \Z)}^{\frac 53} + \Big( \| \gamma_0^h \|_{H^1 (h \Z)} + \| \gamma_0^h \|_{L^2 (h \Z)}^{\frac 53} \Big)^{\frac{2 - b}{1 - b}} ,
\end{align*}

\noi
where the underlying constant is independent of $h$.
\end{proposition}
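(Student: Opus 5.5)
We bound everything by an $X^{1,b,h}_I$ norm of $u^h$ and then control that norm by a continuity argument in the length of the interval. Set $A := \| \gamma_0^h \|_{H^1 (h \Z)} + \| \gamma_0^h \|_{L^2 (h \Z)}^{\frac 53}$. By Proposition~\ref{PROP:L2bdd}, Proposition~\ref{PROP:H1bdd} and Lemma~\ref{LEM:Hsbdd}, we already have $\| u^h \|_{C_t(\R; H^1_x(\R))} \les A$ uniformly in $t$ and $h$, since translation preserves $H^1(\R)$. So the first term is immediate.

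For the remaining four norms we use the transference estimates of Lemma~\ref{LEM:Xsb_d}. Since $u^h$ has Fourier support in $[-\frac{\pi}{h},\frac{\pi}{h}]$, we have $u^h = P_{\frac{\pi}{h}} u^h$, and each norm is bounded by $\| u^h \|_{X^{1,b,h}_I}$:
\begin{itemize}
\item Lemma~\ref{LEM:Xsb_d}(ii) with $q=4$, $r=\infty$, applied to $\jb{\dx}^{\frac34} u^h$, controls $\| \dh u^h \|_{L^4_t L^\infty_x}$, using $|h^{-1}\sin(h\xi)|\le|\xi|$ to trade $|\dh|^{\frac34}$ for derivatives.
\item Lemma~\ref{LEM:Xsb_d}(iii) with $s=1>\frac34$ and $|I|\le1$ controls $\| P_{\frac{\pi}{4h}} u^h \|_{L^2_x L^\infty_t}$.
\item Lemma~\ref{LEM:Xsb_d}(iv), applied to $\dx u^h$ and $\dh u^h$ (which are bounded in $X^{0,b,h}$ by $X^{1,b,h}$), controls $\| \dh\dx u^h \|_{L^\infty_x L^2_t}$ and $\| \dx^2 u^h \|_{L^\infty_x L^2_t}$.
\end{itemize}
It therefore suffices to prove $\| u^h \|_{X^{1,b,h}_I} \les A + A^{\frac{2-b}{1-b}}$ for $|I|\le1$. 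By time translation, using the uniform $H^1$ bound at the left endpoint, we may assume $I=[t_0,t_0+\delta]$.

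To bound the $X^{1,b,h}$ norm, apply Lemma~\ref{LEM:Xsb_lin} to the Duhamel formula \eqref{eq_uh} restarted at $t_0$. Note that $e^{6h^{-2}t'\dx}$ is just a translation, and the multiplier structure allows the extension to be handled, since $e^{-6h^{-2}(t-t')(\dh-\dx)}$ commutes with translations. This gives
\begin{align*}
\| u^h \|_{X^{1,b,h}_I} \les \| u^h(t_0)\|_{H^1} + \delta^{1-b} \| \mathcal{E}(\wt\gamma^h \dh \gamma^h)\|_{L^2_t(I;H^1_x)} .
\end{align*}
Then apply \eqref{quad01}, followed by \eqref{quad02} of Lemma~\ref{LEM:quad}. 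Each factor appearing in \eqref{quad02} is either $\les A$ (the $L^\infty_t H^1_x$ terms and the $\gamma_0^h$ terms) or one of the auxiliary norms above, hence $\les Y(\delta) := \| u^h\|_{X^{1,b,h}_{[t_0,t_0+\delta]}}$. The bad product $\| P_{\frac{\pi}{4h}}u^h\|_{L^2_xL^\infty_t}\|\dh\dx u^h\|_{L^\infty_xL^2_t}$ is quadratic in $Y$. We obtain
\begin{align*}
Y(\delta) \le C A + C\delta^{1-b}\big( A^2 + A\,Y(\delta) + Y(\delta)^2\big).
\end{align*}

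The main obstacle is closing this quadratic inequality uniformly in $h$ on a time step independent of $h$. Choose $\delta \sim \min(1, A^{-\frac{1}{1-b}})$, so that $C\delta^{1-b}A\le \frac14$ and $\delta^{1-b}\cdot 4CA\cdot C\le\frac14$. By Lemma~\ref{LEM:Xsb_cty}, $Y$ is continuous in $\delta$, and $Y(0^+)\les A$ by Lemma~\ref{LEM:Xsb_lin}(i) together with the $H^1$ bound. A standard bootstrap then yields $Y(\delta)\le 4CA$.

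For a general $I$ with $|I|\le1$, cover it by $\sim \delta^{-1}\les 1+A^{\frac{1}{1-b}}$ subintervals of length $\delta$. Since every local estimate restarts from $\|u^h(t_0)\|_{H^1}\les A$, the norms add: the $L^2_t$- and $L^4_t$-type norms sum over pieces with at most a factor $\delta^{-1}$, and the $L^\infty_t$ maximal norm is bounded by the sum over pieces. This gives a total bound $\les A\cdot(1+A^{\frac{1}{1-b}}) = A + A^{\frac{2-b}{1-b}}$, which is exactly the claimed exponent.

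Two points need checking. First, the restriction-norm version of (iii) and (iv) must be applied on each subinterval. Second, the constant in Lemma~\ref{LEM:Xsb_d} must be independent of $h$; it is, as stated there.
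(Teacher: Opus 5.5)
Your proposal is correct and follows the paper's proof. Both run an $X^{1,b,h}$ bootstrap on subintervals of length $T_0\sim(1+A)^{-\frac{1}{1-b}}$, using Lemma~\ref{LEM:Xsb_lin}, Lemma~\ref{LEM:quad}, Lemma~\ref{LEM:Xsb_d} and the continuity from Lemma~\ref{LEM:Xsb_cty}, and then sum over the $\sim 1+A^{\frac{1}{1-b}}$ pieces to get the exponent $\frac{2-b}{1-b}$.

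The differences are cosmetic: you read off the $C_tH^1_x$ bound directly from Propositions~\ref{PROP:L2bdd} and~\ref{PROP:H1bdd}, and you bound the $L^\infty_tH^1_x$ factors in \eqref{quad02} by $A$ rather than by the $X^{1,b,h}$ norm. One small fix: for the $L^4_tL^\infty_x$ term, the clean way to ``trade'' derivatives is to apply Lemma~\ref{LEM:Xsb_d}~(ii) to $|\dh|^{-\frac14}\dh u^h$. Its $X^{0,b,h}$ norm is at most $\|u^h\|_{X^{1,b,h}}$ because $|h^{-1}\sin(h\xi)|\le|\xi|$.
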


\begin{proof}
For simplicity, we assume that $I$ is a unit interval in the positive time line. Let $I = [t_0, t_0 + 1]$ for some $t_0 \geq 0$.
We first show the bound for a short time interval $[t_0, t_0 + T_0]$ with $0 < T_0 \ll 1$ to be determined later. Let $I_0 = [t_0, t_0 + T'] \subset [t_0, t_0 + T_0]$ be a closed interval. On $I_0$, we have the following formulation for $u^h$: 
\begin{align*}
    u^h (t) = e^{- 6 h^{-2} t (\dh - \partial_x)} u^h (t_0) - 6 \int_{t_0}^t e^{- 6 h^{-2} (t - t') (\dh - \partial_x)} e^{6 h^{-2} t' \dx} \mathcal{E} (\wt \gamma^h \dh \gamma^h) (t') dt' .
\end{align*}

\noi
From Lemma~\ref{LEM:Xsb_lin}, we have
\begin{align}
    \| u^h \|_{X_{I_0}^{1, b, h}} \les \| u^h (t_0) \|_{H^1 (\R)} + T_0^{1 - b} \| \mathcal{E} (\wt \gamma^h \dh \gamma^h) \|_{L_t^2 (I_0 ; H_x^1 (\R))} .
\label{lwp1}
\end{align}

\noi
From Lemma~\ref{LEM:quad} and Lemma~\ref{LEM:Xsb_d} along with the fact that $|\frac 1h \sin (h \xi)| \leq |\xi|$, we have
\begin{align}
\begin{split}
    \| \mathcal{E} (\wt \gamma^h \dh \gamma^h) \|_{L_t^2 (I_0 ; H_x^1 (\R))} &\les T_0^{\frac 12} \| u^h \|_{L_t^\infty (I_0 ; H_x^1 (\R))}^2 + T_0^{\frac 14} \| u^h \|_{L_t^\infty (I_0 ; H_x^1 (\R))}  \| \dh u^h \|_{L_t^4 (I_0; L_x^\infty (\R))} \\
    &\quad + \| P_{\frac{\pi}{4h}} u^h \|_{L_x^2 (\R ; L_t^\infty (I_0))} \| \dh \dx u^h \|_{L_x^\infty (\R; L_t^2 (I_0))} \\
    &\quad + T_0^{\frac 14} \Big( \| \gamma_0^h \|_{H^1 (h \Z)} + \| \gamma_0^h \|_{L^2 (h \Z)}^{\frac 53} \Big) \| \dh u^h \|_{L_t^4 (I_0; L_x^\infty (\R))} \\
    &\quad + T_0^{\frac 12} \Big( \| \gamma_0^h \|_{H^1 (h \Z)} + \| \gamma_0^h \|_{L^2 (h \Z)}^{\frac 53} \Big) \| u^h \|_{L_t^\infty (I_0; H_x^1 (\R))} \\
    &\les \Big( \| \gamma_0^h \|_{H^1 (h \Z)} + \| \gamma_0^h \|_{L^2 (h \Z)}^{\frac 53} + \| u^h \|_{X_{I_0}^{1, b, h}} \Big) \| u^h \|_{X_{I_0}^{1, b, h}} .
\end{split}
\label{quadI}
\end{align}

\noi
Thus, from \eqref{lwp1} and \eqref{quadI}, we get
\begin{align}
    \| u^h \|_{X_{I_0}^{1, b, h}} \les \| u^h (t_0) \|_{H^1 (\R)} + T_0^{1 - b} \Big( \| \gamma_0^h \|_{H^1 (h \Z)} + \| \gamma_0^h \|_{L^2 (h \Z)}^{\frac 53} + \| u^h \|_{X_{I_0}^{1, b, h}} \Big) \| u^h \|_{X_{I_0}^{1, b, h}} .
\label{boot}
\end{align}

From \eqref{defuh}, Lemma~\ref{LEM:Hsbdd}, Proposition~\ref{PROP:L2bdd}, and Proposition~\ref{PROP:H1bdd}, we have
\begin{align}
    \| u^h (t_0) \|_{H^1 (\R)} = \| \gamma^h (t_0) \|_{H^1 (\R)} \sim \| \gamma^h (t_0) \|_{H^1 (h \Z)} \les \| \gamma_0^h \|_{H^1 (h \Z)} + \| \gamma_0^h \|_{L^2 (h \Z)}^{\frac 53} ,
\label{boot2}
\end{align}

\noi
where the underlying constant is independent of $t_0$.
Thus, from \eqref{boot} and \eqref{boot2},
by letting
\begin{align}
    T_0 \sim \Big( 1 + \| \gamma_0^h \|_{H^1 (h \Z)} + \| \gamma_0^h \|_{L^2 (h \Z)}^{\frac 53} \Big)^{- \frac{1}{1 - b}}
\label{T0size}
\end{align}
and using a standard bootstrap argument along with Lemma~\ref{LEM:Xsb_cty}, we obtain
\begin{align}
    \| u^h \|_{X_{[t_0, t_0 + T_0]}^{1, b, h}} \les  \| \gamma_0^h \|_{H^1 (h \Z)} + \| \gamma_0^h \|_{L^2 (h \Z)}^{\frac 53} ,
\label{boot3}
\end{align}

\noi
where the underlying constant is independent of the time interval $[t_0, t_0 + T_0]$. The same bound holds true for any interval of size $T_0$. We may further shrink the value of $T_0$ so that $\frac{1}{T_0}$ is a positive integer. 
By using Lemma~\ref{LEM:Xsb_d}~(i) along with \eqref{T0size} and \eqref{boot3}, we obtain
\begin{align*}
    \| u^h \|_{C_t (I; H_x^1 (\R))}
    &\leq \sum_{j = 0}^{\frac{1}{T_0} - 1} \| u^h \|_{C_t  ([t_0 + j T_0, t_0 + (j + 1) T_0] ; H_x^1 ( \R ))}  \\
    &\les \sum_{j = 0}^{\frac{1}{T_0} - 1} \| u^h \|_{X_{[t_0 + j T_0, t_0 + (j + 1)T_0]}^{1, b, h}} \\
    &\les \| \gamma_0^h \|_{H^1 (h \Z)} + \| \gamma_0^h \|_{L^2 (h \Z)}^{\frac 53} + \Big( \| \gamma_0^h \|_{H^1 (h \Z)} + \| \gamma_0^h \|_{L^2 (h \Z)}^{\frac 53} \Big)^{\frac{2 - b}{1 - b}} ,
\end{align*}

\noi
which is a valid bound. The other norm bounds follow similarly from the other parts of Lemma~\ref{LEM:Xsb_d}.
\end{proof}

\begin{remark} \rm
Note that this is the only place where we use the Fourier restriction norm $X^{s, b, h}$. Compared to the $L^2_x L_t^\infty$-norm in the statement, the $X^{s, b, h}$-norm is more suitable for performing a bootstrap argument thanks to Lemma~\ref{LEM:Xsb_cty}. It may be possible to prove the a priori bound by iterating the contraction argument, but one may need some efforts to deal with the $\wt \gamma^h$ term appearing in the nonlinearity of the equation \eqref{eq_uh}. 
\end{remark}

We also have the following a priori bound for the solution $u$ to the KdV equation.

\begin{proposition}
\label{PROP:ubdd}
Let $u_0 \in H^1 (\R)$ and let $u \in C (\R; H^1 (\R))$ be the unique global-in-time solution to the KdV equation \eqref{duh_kdv}. Let $I \subset \R$ be a closed interval with $|I| \leq 1$. Then, for any $\frac 12 < b < 1$, we have
\begin{align*}
    \| &u \|_{C_t (I; H_x^1 (\R))} + \| \dx u \|_{L_t^4 (I ; L_x^\infty (\R))} + \| u \|_{L_x^2 (\R ; L_t^\infty (I))} \\
    &+ \| \dx^2 u \|_{L_x^\infty (\R ; L_t^2 (I))} \les  \| u_0 \|_{H^1 (\R)} + \| u_0 \|_{H^1 (\R)}^{\frac{2 - b}{1 - b}} .
\end{align*}
\end{proposition}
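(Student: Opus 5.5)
The plan mirrors the proof of Proposition~\ref{PROP:uhbdd}, with the discrete objects replaced by their continuum counterparts and the Toda conservation laws replaced by the KdV ones. First I record the uniform-in-time bound
\begin{align*}
\sup_{t \in \R} \| u(t) \|_{H^1 (\R)} \les \| u_0 \|_{H^1 (\R)} + \| u_0 \|_{L^2 (\R)}^{\frac 53},
\end{align*}
which follows from the conservation of mass $\int u^2$ and energy $\frac 12 \int (\dx u)^2 - \int u^3$. These conservation laws hold for $H^1$ solutions by the standard well-posedness theory quoted in the introduction. As in \eqref{H2diff2}, the cubic term is controlled by Gagliardo--Nirenberg, $\| u \|_{L^3}^3 \les \| u \|_{L^2}^{5/2} \| \dx u \|_{L^2}^{1/2}$, together with Young's inequality. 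The power $\frac 53$ can be absorbed into the stated right-hand side, since $\| u_0 \|^{5/3} \les \| u_0 \| + \| u_0 \|^{\frac{2-b}{1-b}}$ because $\frac{2-b}{1-b} > 3$.

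Next I fix $I = [t_0, t_0 + 1]$ and a subinterval $I_0 = [t_0, t_0 + T'] \subset [t_0, t_0 + T_0]$. I write the Duhamel formula \eqref{duh_kdv} restarted at $t_0$, and Lemma~\ref{LEM:Xsb_lin} with $h = 0$ gives
\begin{align*}
\| u \|_{X^{1,b}_{I_0}} \les \| u(t_0) \|_{H^1} + T_0^{1-b} \| u \dx u \|_{L^2_t (I_0; H^1_x)}.
\end{align*}
The nonlinear term is estimated as in Lemma~\ref{LEM:quad}, but more simply, since there is no $\wt \gamma^h$ correction and no cosine factor. Writing $\dx (u \dx u) = (\dx u)^2 + u \dx^2 u$, I bound the pieces as follows:
\begin{itemize}
\item $(\dx u)^2$ by $|I_0|^{1/4} \| u \|_{L^\infty_t H^1_x} \| \dx u \|_{L^4_t L^\infty_x}$;
\item $u \dx^2 u$ by $\| u \|_{L^2_x L^\infty_t} \| \dx^2 u \|_{L^\infty_x L^2_t}$;
\item $u \dx u$ by $|I_0|^{1/2} \| u \|_{L^\infty_t H^1_x}^2$ via Sobolev embedding.
\end{itemize}
Each of these norms is controlled by $\| u \|_{X^{1,b}_{I_0}}$ through Lemma~\ref{LEM:Xsb_d} with $h=0$: part (i) for the $C_t H^1$ norm, part (ii) with $q = 4$, $r = \infty$ applied to $\dx u$ (using $|\dx|^{1/4} \dx u$ plus low frequencies), part (iii) for $L^2_x L^\infty_t$ (using $s = 1 > \frac 34$ and $|I_0| \leq 1$), and part (iv) applied to $\dx u$ for $\dx^2 u$. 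This yields the analogue of \eqref{boot}:
\begin{align*}
\| u \|_{X^{1,b}_{I_0}} \les \| u(t_0) \|_{H^1} + T_0^{1-b} \| u \|_{X^{1,b}_{I_0}}^2.
\end{align*}

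Then I choose $T_0 \sim (1 + \| u_0 \|_{H^1} + \| u_0 \|_{L^2}^{5/3})^{-\frac{1}{1-b}}$. A continuity argument using Lemma~\ref{LEM:Xsb_cty}, starting from small $T'$, gives $\| u \|_{X^{1,b}_{[t_0, t_0 + T_0]}} \les \| u_0 \|_{H^1} + \| u_0 \|_{L^2}^{5/3}$, with the implicit constant uniform in $t_0$ because of the first step. To cover $I$, I sum over the $T_0^{-1}$ subintervals and apply Lemma~\ref{LEM:Xsb_d} on each one; all the norms in the statement are subadditive in time pieces, for example $\| \cdot \|_{L^2_x L^\infty_t(I)} \leq \sum_j \| \cdot \|_{L^2_x L^\infty_t(I_j)}$. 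This gives the bound $T_0^{-1} \cdot (\cdots) \les \| u_0 \| + \| u_0 \|^{\frac{2-b}{1-b}}$.

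The main obstacle is justifying the bootstrap: the solution $u$ is only given to be in $C(\R; H^1)$, and one needs to know a priori that its restriction lies in $X^{1,b}_{I_0}$ and that the norm is continuous in $T'$. I would handle this by using the $H^1$ well-posedness theory in $X^{1,b}$ (Kenig--Ponce--Vega, Bourgain), which constructs a solution in $X^{1,b}_{I_0}$, and then appeal to unconditional uniqueness \cite{Zhou} to identify it with $u$. Once that identification is made, Lemma~\ref{LEM:Xsb_cty} applies.
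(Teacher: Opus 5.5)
Your plan is the paper's own. The paper only says that the argument of Proposition~\ref{PROP:uhbdd} goes through with $h = 0$, using the KdV mass and energy plus Gagliardo--Nirenberg for the uniform $H^1$ bound, and your write-up fills that in faithfully. Your extra step of knowing a priori that $u \in X^{1,b}_{I_0}$, via the $X^{1,b}$ theory and unconditional uniqueness, is a point the paper passes over.

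The one step that does not hold as written is the final exponent count. Set $A := \| u_0 \|_{H^1} + \| u_0 \|_{L^2}^{5/3}$ and $T_0 \sim (1 + A)^{-\frac{1}{1-b}}$. Summing $T_0^{-1}$ pieces, each bounded by $A$, gives
\begin{align*}
T_0^{-1} A \les A + A^{\frac{2-b}{1-b}} .
\end{align*}
For $\| u_0 \|_{L^2} \sim \| u_0 \|_{H^1} = R \gg 1$, the bound this produces is of size $R^{\frac 53 \cdot \frac{2-b}{1-b}}$, which is not $\les R + R^{\frac{2-b}{1-b}}$.

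Absorbing $\| u_0 \|^{5/3}$ into $\| u_0 \| + \| u_0 \|^{\frac{2-b}{1-b}}$ is legitimate for the $C_t H^1$ norm alone. It is not legitimate once $A$ enters $T_0$, because $T_0^{-1}$ raises $A$ to the power $\frac{1}{1-b}$. So your argument proves the exact analogue of Proposition~\ref{PROP:uhbdd}, namely the bound $A + A^{\frac{2-b}{1-b}}$. The paper's sketch gives the same thing if carried out along the lines of Proposition~\ref{PROP:uhbdd}, not the cleaner right-hand side in the statement.

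This form is all that is used later: Proposition~\ref{PROP:conv4} and the proof of Theorem~\ref{THM:conv}~(ii) only need a constant depending on $\| u_0 \|_{H^1}$. You should therefore state your conclusion in that form. If you want the literal right-hand side, note that the left-hand side does not involve $b$. For $b > \frac 34$ you can run the bootstrap with an auxiliary $b' \in (\frac 12, 1)$ chosen so that $\frac 53 \cdot \frac{2 - b'}{1 - b'} \leq \frac{2-b}{1-b}$. For $b$ close to $\frac 12$, this argument does not reach the stated exponent.
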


\begin{proof}
The proof is similar and easier to that of Proposition~\ref{PROP:uhbdd}. Indeed, the linear estimates in Lemma~\ref{LEM:Xsb_d} hold also in the case $h = 0$. Moreover, we have the conservation of mass and energy
\begin{align*}
    M (t) = \int_{\R} u (t, x)^2 dx \quad \text{and} \quad E(t) = \int_\R \Big( \frac 12 \dx u (t, x)^2 - u (t, x)^3 \Big) dx 
\end{align*}

\noi
under the flow of the KdV equation. The mass $M(t)$ provides an a priori bound for the $L^2(\R)$-norm of the solution and the energy $E (t)$, along with the Gagliardo-Nirenberg inequality, provides an a priori bound for the $\dot{H}^1 (\R)$-norm of the solution. We omit the rest of the steps.
\end{proof}

\subsection{Proof of the dynamical convergence}
\label{SUB:dyn}

In this subsection, we prove the convergence of the dynamics as stated in Theorem~\ref{THM:conv}~(ii).

Given an interval $I \subset \R$, we define the norm
\begin{align*}
    \| u \|_{S_I} := \| u \|_{C_t (I; L_x^2 (\R))} + \| \dx u \|_{L_x^\infty (\R; L_t^2 (I))}.
\end{align*}


\noi
Let us first prove a useful lemma exploiting the convergence rate of the discrete linear propagator to the continuous linear propagator. For the proof, we follow \cite[Proposition~5.10]{HKY}. 
\begin{lemma}
\label{LEM:conv0}
Let $0 < h \leq 1$. Then, for any interval $I = [t_1, t_2] \subset \R$ and function $f$ defined on $\R$, we have
\begin{align*}
    \big\| (e^{- 6 h^{-2} t (\dh - \partial_x)} - e^{- t \partial_x^3}) P_{\frac{\pi}{h}} f \big\|_{S_I} \les  \max (|t_1|, |t_2|)^{\frac 15}  h^{\frac 25} \| f \|_{H^1 (\R)}.
\end{align*}
\end{lemma}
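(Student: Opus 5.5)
We split $f$ into low and high frequencies at a threshold $R = R(h,T)$ with $1 \le R \le \frac{\pi}{h}$, where $T = \max(|t_1|, |t_2|)$. We treat the two pieces of the $S_I$-norm separately. On low frequencies we compare the two multipliers directly via the symbol difference. On high frequencies we bound each propagator separately, using the $H^1$-regularity to gain a negative power of $R$.

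\emph{High frequencies.} For $P_{\frac{\pi}{h}} P_R^\perp f$, the $C_t L^2_x$ part of each propagator is bounded by Lemma~\ref{LEM:lin_d}~(i) as $\| P_R^\perp f \|_{L^2} \les R^{-1} \| f \|_{H^1}$. For the $\dx$ part we use the local smoothing estimates in Lemma~\ref{LEM:lin_d}~(iv), taking $\ta = 1$ for $e^{-6h^{-2}t(\dh-\dx)}$ with $\dx$ in place of $\dh^\pm$, and the Airy estimate for $e^{-t\dx^3}$. These give
\[
\| \dx e^{\cdots} P_R^\perp f \|_{L^\infty_x L^2_t} \les \| P_R^\perp f \|_{L^2} \les R^{-1} \| f \|_{H^1}.
\]
So the high-frequency contribution is $\les R^{-1} \| f \|_{H^1}$.

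\emph{Low frequencies.} Write the difference as a Duhamel-type interpolation:
\[
(e^{-6h^{-2}t(\dh-\dx)} - e^{-t\dx^3}) g = \int_0^1 \tfrac{d}{d\ta} e^{-6\ta h^{-2} t(\dh-\dx) - (1-\ta) t \dx^3} g \, d\ta .
\]
The integrand is $-t\, m(D)\, e^{\phi_\ta} g$, with multiplier symbol
\[
m(\xi) = 6 i h^{-3}\big(\sin(h\xi) - h\xi\big) + i \xi^3 = O(h^2 \xi^5)
\]
by the Taylor expansion of $\sin$ quoted in the introduction. Applying Lemma~\ref{LEM:lin_d}~(i) and (iv) (uniform in $\ta$) to $t\, m(D) g$, with $g = P_R P_{\frac{\pi}{h}} f$ and the $\dx$ absorbed into the smoothing estimate, gives
\[
|t|\, h^2 R^4 \, \| \xi g \|_{L^2} \les T h^2 R^4 \| f \|_{H^1}
\]
for both parts of the norm. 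The factor $|t|$ is harmless here: for the $L^\infty_x L^2_t$ norm over $I$ we only need $|t| \le T$ on $I$, which can be pulled out as $\sup_I |t|$ before applying the smoothing estimate in $t$. The $C_tL^2$ part is even better.

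\emph{Optimization.} Balancing $R^{-1} = T h^2 R^4$ gives $R = (T h^2)^{-1/5}$, and the total bound is $\les T^{\frac 15} h^{\frac 25} \| f \|_{H^1}$. If $R$ exceeds $\frac{\pi}{h}$, which happens when $T \les h^3$, there is no high-frequency part; the low-frequency bound with $R = \frac{\pi}{h}$ is then $T h^{-2} \les T^{1/5} h^{2/5}$, since $T^{4/5} \les h^{12/5}$. If $R < 1$, i.e.\ $T$ large, use the trivial bound $\les \| f \|_{H^1}$, which is again dominated by $T^{1/5} h^{2/5} \| f \|_{H^1}$ once $T h^2 \ges 1$.

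\emph{Main obstacle.} The delicate step is justifying the pulled-out factor $t$ inside the local smoothing norm on $L^\infty_x L^2_t(I)$, since Lemma~\ref{LEM:lin_d}~(iv) is stated globally in time. I would first bound $\| t F \|_{L^2_t(I)} \le T \| F \|_{L^2_t(\R)}$ pointwise in $x$ and then apply (iv) with the $\ta$-uniform constant. One also has to check that the symbol bound $|m(\xi)| \les h^2 |\xi|^5$ holds uniformly on $|\xi| \le R \le \frac{\pi}{h}$, which follows from $|\sin y - y + y^3/6| \les y^5$ for bounded $y$.
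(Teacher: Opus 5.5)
Your proposal is correct and is essentially the paper's proof. It uses the same split at $|\xi|\sim T^{-1/5}h^{-2/5}$ with $T=\max(|t_1|,|t_2|)$, local smoothing from Lemma~\ref{LEM:lin_d}~(iv) on the high frequencies, and on the low frequencies the fundamental theorem of calculus in $\ta$ combined with the $\ta$-uniform smoothing estimate and the symbol bound $h^{-3}|\sin(h\xi)-h\xi+\frac{(h\xi)^3}{6}|\les h^2|\xi|^5$. The only differences are cosmetic: the paper handles the $C_tL^2_x$ component without any splitting, via the pointwise bound $\min(|t|h^2|\xi|^5,1)\les |t|^{1/5}h^{2/5}|\xi|$, and your edge cases $R>\frac{\pi}{h}$ and $R<1$ are unnecessary because that symbol bound holds for all $\xi\in\R$.
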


\begin{proof}
We first consider the $C_t L_x^2$-norm. By Plancherel's identity, we have
\begin{align}
\begin{split}
    \big\| &(e^{- 6 h^{-2} t (\dh - \partial_x)} - e^{- t \partial_x^3}) P_{\frac{\pi}{h}} f \big\|_{C_t (I; L_x^2 (\R))} \\
    &= \big\| (e^{- 6 i h^{-3} t (\sin (h \xi) - h \xi)} - e^{i t \xi^3}) \ft{P_{\frac{\pi}{h}} f } (\xi) \big\|_{C_t (I; L_\xi^2 (\R))} \\
    &= \Big\| \big( e^{- 6 i h^{-3} t (\sin (h \xi) - h \xi + \frac{(h \xi)^3}{6})} - 1 \big) \ft{P_{\frac{\pi}{h}} f } (\xi) \Big\|_{C_t (I; L_\xi^2 (\R))} .
\end{split}
\label{conv0-1}
\end{align}

\noi
Note that for any $\xi \in \R$, we have
\begin{align}
    h^{-3} \Big| \sin (h \xi) - h \xi + \frac{(h \xi)^3}{6} \Big| \les h^2 |\xi|^5.
\label{taylor}
\end{align}

\noi
Thus, by the mean value theorem with \eqref{taylor}, we get
\begin{align}
    \big| e^{- 6 i h^{-3} t (\sin (h \xi) - h \xi + \frac{(h \xi)^3}{6})} - 1 \big| \les \min ( |t| h^2 |\xi|^5, 1 ) \les |t|^{\frac 15} h^{\frac 25} |\xi|.
\label{mvt}
\end{align}

\noi
Thus, by \eqref{conv0-1} and \eqref{mvt}, we obtain
\begin{align*}
    \big\| (e^{- 6 h^{-2} t (\dh - \partial_x)} - e^{- t \partial_x^3}) P_{\frac{\pi}{h}} f  \big\|_{C_t (I ; L_x^2 ( \R ))} \les \max (|t_1|, |t_2|)^{\frac 15} h^{\frac 25} \| f \|_{H^1 (\R)} ,
\end{align*}

\noi
which gives the desired estimate for the $C_t L_x^2$-norm.

We now consider the $L_x^\infty L_t^2$-norm. We denote by $P_{\textup{hi}}$ the frequency projection onto high frequency $\{ |\xi| > \max (|t_1|, |t_2|)^{- \frac 15} h^{- \frac 25} \}$ and $P_{\textup{lo}}$ the frequency projection onto low frequency $\{|\xi| \leq \max (|t_1|, |t_2|)^{- \frac 15} h^{- \frac 25}\}$. For the high frequency case, we use Lemma~\ref{LEM:lin_d}~(iv) to obtain
\begin{align}
\begin{split}
    \big\| &\dx (e^{- 6 h^{-2} t (\dh - \partial_x)} - e^{- t \partial_x^3}) P_{\frac{\pi}{h}} P_{\textup{hi}} f  \big\|_{L_x^\infty (\R ; L_t^2 (I))} \\
    &\leq \big\| \dx e^{- 6 h^{-2} t (\dh - \partial_x)}  P_{\frac{\pi}{h}} P_{\textup{hi}} f \big\|_{L_x^\infty (\R ; L_t^2 (I))} + \big\| \dx e^{- t \partial_x^3} P_{\frac{\pi}{h}} P_{\textup{hi}} f \big\|_{L_x^\infty (\R ; L_t^2 (I))} \\
    &\les \| P_{\textup{hi}} f \|_{L^2 (\R)} \\
    &\leq \max (|t_1|, |t_2|)^{\frac 15} h^{\frac 25} \| f \|_{H^1 (\R)}.
\end{split}
\label{conv0-3}
\end{align}

\noi
For the low frequency case, we use the fundamental theorem of calculus to obtain
\begin{align}
\begin{split}
    &e^{- 6 i h^{-3} t (\sin (h \xi) - h \xi)} - e^{i t \xi^3} \\
    &= - 6 i h^{-3} t \Big( \sin (h \xi) - h \xi + \frac{(h \xi)^3}{6} \Big) \int_0^1 e^{- 6 \ta i h^{-3} t (\sin (h \xi) - h \xi) + (1 - \ta) i t \xi^3} d \ta.
\end{split}
\label{ftc}
\end{align}

\noi
for each $t \in I$. Thus, by \eqref{ftc}, Minkowski's integral inequality, Lemma~\ref{LEM:lin_d}~(iv), \eqref{taylor}, and the condition that $|\xi| \leq \max (|t_1|, |t_2|)^{- \frac 15} h^{- \frac 25}$, we have
\begin{align}
\begin{split}
    \big\| &\dx (e^{- 6 h^{-2} t (\dh - \partial_x)} - e^{- t \partial_x^3}) P_{\frac{\pi}{h}} P_{\textup{lo}} f \big\|_{L_x^\infty (\R ; L_t^2 (I))} \\
    &\leq \max (|t_1|, |t_2|) \\
    &\quad \times \int_0^1 \big\| \dx e^{- 6 \ta h^{-2} t (\dh - \dx) - (1 - \ta) t \dx^3} ( 6 h^{-2} (\dh - \dx) - \dx^3 ) P_{\frac{\pi}{h}} P_{\textup{lo}} f \big\|_{L_x^\infty (\R ; L_t^2 (I))} d \ta \\
    &\les \max (|t_1|, |t_2|) \Big\| h^{-3} \big| \sin (h \xi) - h \xi + \tfrac{(h \xi)^3}{6} \big| \ft{P_{\textup{lo}} f} (\xi) \Big\|_{L_\xi^2 (\R)} \\
    &\les \max (|t_1|, |t_2|) \big\| h^2 |\xi|^5 \ft{P_{\textup{lo}} f} (\xi) \big\|_{L_\xi^2 (\R)} \\
    &\leq \max (|t_1|, |t_2|)^{\frac 15} h^{\frac 25} \| f \|_{H^1 (\R)}.
\end{split}
\label{conv0-4}
\end{align}

\noi
Combining \eqref{conv0-3} and \eqref{conv0-4}, we obtain the desired estimate for the $L_x^\infty L_t^2$-norm.
\end{proof}

We now use Lemma~\ref{LEM:conv0} to prove the following convergence results.

\begin{proposition}
\label{PROP:conv1}
Let $0 < h \leq 1$ and $\gamma_0^h \in H^1 (h \Z)$ be such that the condition \eqref{h_cond} holds. Let $\gamma^h$ be the solution to \eqref{duh_g} with initial data $\gamma_0^h$ and let $\wt \gamma^h$ be given by \eqref{defgh2}. Then, for any interval $I = [t_1, t_2] \subset \R$ with $|I| \leq 1$, we have
\begin{align}
    \bigg\| \int_{t_1}^{t} \big( e^{- 6 h^{-2} (t - t') (\dh - \partial_x)} - e^{- (t - t') \partial_x^3} \big) e^{6 h^{-2} t' \dx} \mathcal{E} (\wt \gamma^h \dh \gamma^h) (t') dt' \bigg\|_{S_I} \les  h^{\frac 25} C (\| \gamma_0^h \|_{H^1 (h \Z)})
\label{conv1-2}
\end{align}

\noi
for some constant $C (\| \gamma_0^h \|_{H^1 (h \Z)}) > 0$.
\end{proposition}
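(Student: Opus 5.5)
We bound the Duhamel-type difference in \eqref{conv1-2} by treating it as a time integral of linear-propagator differences applied to the forcing. Set
\[
F(t') := e^{6 h^{-2} t' \dx} \mathcal{E} (\wt \gamma^h \dh \gamma^h) (t'),
\]
which has Fourier support in $[-\frac{\pi}{h}, \frac{\pi}{h}]$, so that $F = P_{\frac{\pi}{h}} F$. Since translation preserves Sobolev norms, Lemma~\ref{LEM:quad} gives
\[
\| F \|_{L_t^2 (I ; H_x^1)} \les \| \mathcal{E} \wt \gamma^h \, \mathcal{E} (\dh \gamma^h) \|_{L_t^2 (I ; H_x^1)} ,
\]
and the right-hand side of \eqref{quad02} is bounded, via Proposition~\ref{PROP:uhbdd} (using $|I| \leq 1$), by a constant $C(\| \gamma_0^h \|_{H^1 (h \Z)})$; the $L^2$ power terms are dominated by the $H^1$ norm. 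So the main task reduces to a linear estimate:
\[
\bigg\| \int_{t_1}^t \big( e^{- 6 h^{-2} (t - t') (\dh - \dx)} - e^{- (t - t') \dx^3} \big) P_{\frac{\pi}{h}} F(t') \, dt' \bigg\|_{S_I} \les h^{\frac 25} \| F \|_{L_t^1 (I ; H_x^1)} .
\]
Since $|I| \leq 1$, Cauchy--Schwarz gives $\| F \|_{L_t^1 (I ; H^1)} \leq \| F \|_{L_t^2 (I ; H^1)}$, which closes the argument.

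For the $C_t L_x^2$ part, I would apply Minkowski in $t'$ and then the $C_tL^2_x$ bound of Lemma~\ref{LEM:conv0} with time variable $t - t' \in [0, 1]$. This gives a factor $|t - t'|^{\frac 15} h^{\frac 25} \leq h^{\frac 25}$ in front of $\| F(t') \|_{H^1}$, uniformly in $t \in I$.

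For the $L_x^\infty L_t^2$ part, the integral has a truncation $t' \le t$, so I would avoid a direct Minkowski argument. Instead, I would write each propagator difference as a product of a propagator evaluated at time $t$ and one evaluated at time $-t'$. For instance, $e^{-6h^{-2}(t-t')(\dh-\dx)} - e^{-(t-t')\dx^3}$ splits into terms of the form $(U_h(t) - U_0(t))U_h(-t')$ and $U_0(t)(U_h(-t') - U_0(-t'))$, where $U_h(t) = e^{-6h^{-2}t(\dh-\dx)}$ and $U_0(t) = e^{-t\dx^3}$. Without the truncation $t' \leq t$, the first term is handled by Lemma~\ref{LEM:conv0} applied to $\int U_h(-t') F(t')\,dt'$ and the second by Lemma~\ref{LEM:lin_d}~(iv) combined with Lemma~\ref{LEM:conv0} in $L^2$. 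The restriction to $t' \le t$ is then removed by the Christ--Kiselev lemma, which applies since $L^1_t \to L^2_t$ exponents satisfy $1 < 2$. One caveat: the time variables in Lemma~\ref{LEM:conv0} become $t$ and $t'$ rather than differences, which produces a factor $\max(|t_1|, |t_2|)^{\frac 15}$. To avoid this, I would first translate time by $t_1$ so that the interval becomes $[0, t_2 - t_1] \subset [0,1]$; both flows are translation invariant in time.

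I expect this second step to be the main obstacle, since local smoothing does not commute well with the Duhamel integral. A fallback is the direct route: apply Minkowski in $t'$ to the norm $\| \cdot \|_{L_x^\infty L_t^2}$, extend the inner $t$-integral to all $t \in I$ (the truncation only decreases the $L_t^2$ norm pointwise in $x$, once we use $|\mathbf 1_{t > t'} G| \le |G|$), and then apply Lemma~\ref{LEM:conv0} for each fixed $t'$ with initial time shifted. This works because the integrand for fixed $t'$ is a free evolution of $F(t')$ evaluated at time $t - t'$. That observation makes the direct route legitimate, and I would use it as the primary argument.
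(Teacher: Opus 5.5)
Your proposal is correct and, once you settle on the direct route as your primary argument, it is the paper's proof: Minkowski in $t'$ (dropping the truncation $t' \le t$), Lemma~\ref{LEM:conv0} for each fixed $t'$ on the shifted interval $[t_1 - t', t_2 - t']$ whose endpoints have modulus at most $|I| \le 1$, Cauchy--Schwarz in $t'$, and then Lemma~\ref{LEM:quad} together with Proposition~\ref{PROP:uhbdd} to bound $\| \mathcal{E} (\wt \gamma^h \dh \gamma^h) \|_{L_t^2 (I; H_x^1)}$ by a constant depending on $\| \gamma_0^h \|_{H^1 (h \Z)}$. The factorization/Christ--Kiselev detour is unnecessary: because the forcing is measured in $L^1_t$ in $t'$, Minkowski already handles the truncation, as you note yourself.
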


\begin{proof}
By Minkowski's integral inequality, Lemma~\ref{LEM:conv0}, and the Cauchy-Schwarz inequality in $t'$, we have
\begin{align}
\begin{split}
    \bigg\| &\int_{t_1}^t \big( e^{- 6 h^{-2} (t - t') (\dh - \partial_x)} - e^{- (t - t') \partial_x^3} \big) e^{6 h^{-2} t' \dx} \mathcal{E} (\wt \gamma^h \dh \gamma^h) (t') dt' \bigg\|_{S_I} \\
    &\leq \int_I \big\| \big( e^{- 6 h^{-2} (t - t') (\dh - \partial_x)} - e^{- (t - t') \partial_x^3} \big) e^{6 h^{-2} t' \dx} \mathcal{E} (\wt \gamma^h \dh \gamma^h) (t') \big\|_{S_{I}} dt' \\
    &\les h^{\frac 25} \int_I \| \mathcal{E} (\wt \gamma^h \dh \gamma^h) (t') \|_{H_x^1 (\R)} dt' \\
    &\leq h^{\frac 25} \| \mathcal{E} (\wt \gamma^h \dh \gamma^h) \|_{L_t^2 (I; H_x^1 (\R))}.
\end{split}
\label{conv1-3}
\end{align}

\noi
To deal with the quadratic term, we use Lemma~\ref{LEM:quad} and Proposition~\ref{PROP:uhbdd} to obtain
\begin{align}
\begin{split}
    \| \mathcal{E} (\wt \gamma^h \dh \gamma^h) \|_{L_t^2 (I ; H_x^1 (\R))} 
    &\les \| u^h \|_{L_t^\infty (I ; H_x^1 (\R))}^2 + \| u^h \|_{L_t^\infty (I ; H_x^1 (\R))}  \| \dh u^h \|_{L_t^4 (I; L_x^\infty (\R))} \\
    &\quad + \| P_{\frac{\pi}{4h}} u^h \|_{L_x^2 (\R ; L_t^\infty (I))} \| \dh \dx u^h \|_{L_x^\infty (\R; L_t^2 (I))} \\
    &\quad + \Big( \| \gamma_0^h \|_{H^1 (h \Z)} + \| \gamma_0^h \|_{L^2 (h \Z)}^{\frac 53} \Big) \| \dh u^h \|_{L_t^4 (I; L_x^\infty (\R))} \\
    &\quad + \Big( \| \gamma_0^h \|_{H^1 (h \Z)} + \| \gamma_0^h \|_{L^2 (h \Z)}^{\frac 53} \Big) \| u^h \|_{L_t^\infty (I; H_x^1 (\R))} \\
    &\leq C ( \| \gamma_0^h \|_{H^1 (h \Z)} )
\end{split}
\label{conv1-4}
\end{align}

\noi
for some constant $C ( \| \gamma_0^h \|_{H^1 (h \Z)} ) > 0$. Combining \eqref{conv1-3} and \eqref{conv1-4}, we obtain the desired estimate \eqref{conv1-2}.
\end{proof}

We now exploit the convergence of the quadratic nonlinearities.

\begin{proposition}
\label{PROP:conv4}
Let $0 < h \leq 1$ and $\gamma_0^h \in H^1 (h \Z)$ be such that the condition \eqref{h_cond} holds. Let $\gamma^h$ be the solution to \eqref{duh_g} with initial data $\gamma_0^h$, $\wt \gamma^h$ be given by \eqref{defgh2}, and $u^h$ be given by \eqref{defuh}. Let $u_0 \in H^1 (\R)$ and let $u \in C (\R; H^1 (\R))$ be the unique global-in-time solution to the KdV equation \eqref{duh_kdv}. Then, for any $I = [t_1, t_2] \subset \R$ with $|I| \leq 1$, we have
\begin{align*}
    \bigg\| &\int_{t_1}^t e^{- (t - t') \dx^3} \big( e^{6 h^{-2} t' \dx} \mathcal{E} (\wt \gamma^h \dh \gamma^h) (t') -  u (t') \partial_x u (t') \big) dt' \bigg\|_{S_I} \\
    &\les C (\| \gamma_0^h \|_{H^1 (h \Z)}, \| u_0 \|_{H^1 (\R)}) \big( h + |I|^{\frac 12} \| u^h - u \|_{S_I} \big)
\end{align*}

\noi
for some constant $C (\| \gamma_0^h \|_{H^1 (h \Z)}, \| u_0 \|_{H^1 (\R)}) > 0$.
\end{proposition}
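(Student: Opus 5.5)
The plan is to reduce everything to the Airy-flow inhomogeneous estimate in the $S_I$ norm. For the Airy group, the standard energy and Kato local smoothing duality estimates (the inhomogeneous versions of Lemma~\ref{LEM:lin_d}~(i),(iv), obtained by the usual $TT^*$/Christ–Kiselev argument) give
\begin{align*}
\Big\| \int_{t_1}^t e^{-(t-t')\dx^3} F(t')\,dt' \Big\|_{S_I} \les \min\big( \|F\|_{L^1_t(I;L^2_x)} ,\ \| \dx^{-1}F\|_{L^1_x(\R; L^2_t(I))}\big).
\end{align*}
With this estimate in hand, I would split the forcing $F = e^{6h^{-2}t'\dx}\mathcal{E}(\wt\gamma^h\dh\gamma^h) - u\dx u$ into an $O(h)$ error and a genuine difference term.

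\textbf{Step 1 (removing discretization errors).} By Lemma~\ref{LEM:Eprod}, $\mathcal{E}(\wt\gamma^h\dh\gamma^h) = \mathcal{E}\wt\gamma^h\,\mathcal{E}\dh\gamma^h + P_{\frac\pi h}^\perp(\cdots)$-type terms plus the $\cos(\frac{2\pi}{h}\cdot)$ term. I will also use $\mathcal{E}\dh\gamma^h = \dh\mathcal{E}\gamma^h$. The cosine term is bounded in $L^2$ by $h$ times an $H^1$ norm via Lemma~\ref{LEM:Ecos}. The sharp projection $P_{\frac\pi h}$ applied to the product differs from the product by high frequencies, which are also $O(h)$ in $L^2$ given an $H^1$ bound. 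Next, $\mathcal{E}\wt\gamma^h - \mathcal{E}\gamma^h = O(h)$ in $L^2$ by \eqref{quad3-2}. Finally, $\dh u^h - \dx u^h$ has symbol $h^{-1}\sin(h\xi)-\xi = O(h^2|\xi|^3)$, which is at most $O(h|\xi|^2)$ on $|\xi|\le\pi/h$. Each such error I put in $L^1_tL^2_x$, pairing it with an $L^\infty$ or $L^4_tL^\infty_x$ bound on the other factor. The required $H^1$ bounds on $\mathcal{E}\wt\gamma^h\,\mathcal{E}\dh\gamma^h$, $\dh u^h$ in $L^4_tL^\infty_x$, and $\dx\dh u^h$ come from Lemma~\ref{LEM:quad}, Proposition~\ref{PROP:uhbdd}, and Sobolev. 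The one delicate term is $(\mathcal{E}\wt\gamma^h-\mathcal{E}\gamma^h)(\dh-\dx)u^h$. Here I will bound $(\dh-\dx)u^h$ by $h\|\dx^2 u^h\|$, using Bernstein in $L^4_tL^\infty_x$, or alternatively $L^\infty_x L^2_t$ via the local smoothing bound on $\dx^2u^h$ from Proposition~\ref{PROP:uhbdd}. Either way the total contribution of Step 1 is $\les h\,C$.

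\textbf{Step 2 (the main difference).} It remains to treat $u^h\dx u^h - u\dx u = \frac12\dx\big((u^h-u)(u^h+u)\big)$. Using the second (smoothing) branch of the inhomogeneous estimate, the derivative is absorbed and this is bounded by
\begin{align*}
\|(u^h-u)(u^h+u)\|_{L^1_xL^2_t} \le \|u^h-u\|_{L^2_xL^\infty_t}\|u^h+u\|_{L^2_xL^2_t}.
\end{align*}
This splitting is not ideal, since $S_I$ does not control the maximal norm. I would instead use the first branch,
\begin{align*}
\|(u^h-u)\dx(u^h+u)\|_{L^1_tL^2_x} \le |I|^{\frac12}\|u^h-u\|_{C_tL^2_x}\|\dx(u^h+u)\|_{L^2_tL^\infty_x},
\end{align*}
with $\|\dx u^h\|_{L^2_tL^\infty_x}\le |I|^{1/4}\|\dx u^h\|_{L^4_tL^\infty_x}$. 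Since $u^h$ has Fourier support in $|\xi|\le\pi/h$, where $|\xi|\les|h^{-1}\sin(h\xi)|$ fails only near $\pm\pi/h$, I need to handle $\dx u^h$ versus $\dh u^h$. To do so, I split off $P_{\frac{\pi}{4h}}^\perp u^h$, whose $L^2$ norm is $O(h)$, and on $P_{\frac\pi{4h}}$ use that $\dx$ and $\dh$ are comparable there, so Proposition~\ref{PROP:uhbdd} applies. For the other, symmetric part, $(u^h+u)\dx(u^h-u)$, I would use the smoothing branch again: bound $\|(u^h+u)\,\dx(u^h-u)\|_{L^1_xL^2_t}$ by $\|u^h+u\|_{L^1_x L^\infty_t}$, which is bad. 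Instead, I keep $\dx(u^h-u)\in L^\infty_xL^2_t$ (part of $S_I$) and put $u^h+u$ in $L^2_xL^\infty_t$, taking the $L^2$-dual version of the inhomogeneous smoothing estimate ($L^2_x$-based, with target $L^1_x L^2_t$ replaced by $L^2_xL^2_t$ with a half derivative). This choice of dual estimate is where I expect the main obstacle.

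\textbf{Main obstacle.} The hardest part is matching the norms so that the difference appears only through $\|u^h-u\|_{S_I}$ with a factor $|I|^{1/2}$. The fallback, which is cleaner, is to write $u^h\dx u^h - u\dx u = (u^h-u)\dx u^h + u\,\dx(u^h-u)$. The first piece goes in $L^1_tL^2_x$ as above, giving $|I|^{1/2}$. The second goes via Kato's dual smoothing estimate $\|\int e^{-(t-t')\dx^3}\dx G\|_{S_I}\les \|G\|_{L^1_xL^2_t}$ after moving the derivative, namely $u\,\dx w = \dx(uw) - (\dx u)w$. This puts $\|u\|_{L^2_xL^\infty_t}\|w\|_{L^2_xL^2_t}\le |I|^{1/2}\|u\|_{L^2_xL^\infty_t}\|w\|_{C_tL^2_x}$, and the maximal bound on $u$ comes from Proposition~\ref{PROP:ubdd}.
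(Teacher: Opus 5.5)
There is a genuine gap, located at the frequencies $\pm\frac{\pi}{h}$ where the discrete dispersion degenerates. Your telescoping passes through $u^h\dx u^h$. That creates the terms $u^h(\dh-\dx)u^h$ (Step~1) and $(u^h-u)\dx u^h$ (Step~2 and your fallback), and neither closes with the bounds actually available.

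\textbf{The term $u^h(\dh-\dx)u^h$.} Since $u^h$ is only uniformly bounded in $H^1$, $(\dh-\dx)u^h$ is $O(1)$, not $O(h)$, in $L^2_x$: the symbol bound $h|\xi|^2$ costs a second derivative. Putting it in $L^4_tL^\infty_x$ via Bernstein merely converts $h\dx^2$ back into $\dx$. So pairing with an $L^\infty$ or $L^4_tL^\infty_x$ bound on $u^h$ gains no factor of $h$. The factor $h$ is visible only through local smoothing, $\|(\dh-\dx)u^h\|_{L^\infty_xL^2_t}\les h\|\dx^2u^h\|_{L^\infty_xL^2_t}$. That forces the other factor into $L^2_xL^\infty_t$, but Proposition~\ref{PROP:uhbdd} gives a maximal bound only for $P_{\frac{\pi}{4h}}u^h$.

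\textbf{The term $(u^h-u)\dx u^h$.} In $L^1_tL^2_x$ this needs $\dx u^h\in L^2_tL^\infty_x$. Proposition~\ref{PROP:uhbdd} controls only $\dh u^h$ in $L^4_tL^\infty_x$, because the Strichartz gain $|\dh|^{\frac14}$ vanishes at $\pm\frac{\pi}{h}$. Your remedy of splitting off $P^\perp_{\frac{\pi}{4h}}u^h$ because its $L^2$ norm is $O(h)$ does not work. In $(u^h-u)\dx P^\perp_{\frac{\pi}{4h}}u^h$ the derivative costs $h^{-1}$ and Bernstein into $L^\infty_x$ costs another $h^{-\frac12}$, so $\|\dx P^\perp_{\frac{\pi}{4h}}u^h\|_{L^\infty_x}$ is only $O(h^{-\frac12})$. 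The cross term you call delicate has the same defect: the Bernstein route leaves $h\|\dx u^h\|_{L^4_tL^\infty_x}$, and the smoothing route needs $\mathcal{E}\wt\gamma^h-\mathcal{E}\gamma^h$ in $L^2_xL^\infty_t$. A uniform bound on $\dx u^h$ in $L^4_tL^\infty_x$ might be provable by a separate dyadic analysis near $\pm\frac{\pi}{h}$. That would be an extra lemma, which your proposal neither states nor proves.

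\textbf{What is missing.} Telescope in an order that never needs $u^h$ in the maximal norm or $\dx u^h$ in a Strichartz norm. First remove the cosine term, the $P^\perp_{\frac{\pi}{h}}$ term, and $(\mathcal{E}\wt\gamma^h-\mathcal{E}\gamma^h)\,\mathcal{E}(\dh\gamma^h)$, all $O(h)$ (the last one paired with $\dh$, not $\dx$). Then write
\begin{align*}
u^h\dh u^h-u\dx u=(u^h-u)\dh u^h+u(\dh-\dx)u^h+u\dx(u^h-u).
\end{align*}
\begin{itemize}
\item The first term uses only $\|\dh u^h\|_{L^4_tL^\infty_x}$.
\item In the second, $(\dh-\dx)u^h$ sits against the KdV solution $u$, whose maximal bound holds on all frequencies (Proposition~\ref{PROP:ubdd}). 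This gives $h\|u\|_{L^2_xL^\infty_t}\|\dx^2u^h\|_{L^\infty_xL^2_t}$.
\item The third is bounded by $\|u\|_{L^2_xL^\infty_t}\|\dx(u^h-u)\|_{L^\infty_xL^2_t}$. This is precisely why $S_I$ contains the local smoothing norm.
\end{itemize}
With this decomposition only the elementary estimate $\|\int_{t_1}^te^{-(t-t')\dx^3}F(t')\,dt'\|_{S_I}\les|I|^{\frac12}\|F\|_{L^2_tL^2_x}$ is needed (Minkowski plus Lemma~\ref{LEM:lin_d}). Your $L^1_xL^2_t$ branch is the Kenig--Ponce--Vega sharp inhomogeneous smoothing estimate. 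It is true, but it does not follow from Christ--Kiselev (both time exponents equal $2$), and it becomes unnecessary.
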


\begin{proof}
For any space-time function $F$, by Minkowski's integral inequality, Lemma~\ref{LEM:lin_d}~(i) and (iv), and H\"older's inequality in $t'$, we have
\begin{align*}
    \bigg\| \int_{t_1}^t e^{- (t - t') \dx^3} F (t') dt' \bigg\|_{C_t (I; L_x^2 (\R))} &\les \int_I \big\| e^{- (t - t') \dx^3} F (t') \big\|_{C_t (I; L_x^2 (\R))} dt' \\
    &\leq |I|^{\frac 12} \| F(t) \|_{L_t^2 (I; L_x^2 (\R))}
\end{align*}

\noi
and
\begin{align*}
    \bigg\| \int_{t_1}^t \dx e^{- (t - t') \dx^3} F (t') dt' \bigg\|_{L_x^\infty (\R; L_t^2 (I))} 
    &\les \int_I \big\| \dx e^{- (t - t') \dx^3} F (t') \big\|_{L_x^\infty (\R; L_t^2 (I))} dt' \\
    &\leq |I|^{\frac 12} \| F(t) \|_{L_t^2 (I; L_x^2 (\R))}.
\end{align*}

\noi
Thus, the above bounds along with Lemma~\ref{LEM:Eprod}, \eqref{defuh}, and the fact that $|I| \leq 1$ give
\begin{align}
\begin{split}
    &\bigg\| \int_{t_1}^t e^{- (t - t') \dx^3} \big( e^{6 h^{-2} t' \dx} \mathcal{E} (\wt \gamma^h \dh \gamma^h) (t') -  u (t') \partial_x u (t') \big) dt' \bigg\|_{S_I} \\
    &\les |I|^{\frac 12} \big\| e^{6 h^{-2} t \dx} \mathcal{E} (\wt \gamma^h \dh \gamma^h) - u \partial_x u \big\|_{L_t^2 (I; L_x^2 (\R))} \\
    &\leq 2 \big\| P_{\frac{\pi}{h}} \big( \cos (\tfrac{2 \pi}{h} \cdot) \mathcal{E} \wt \gamma^h \mathcal{E} (\dh \gamma^h) \big) \big\|_{L_t^2 (I; L_x^2 (\R))} + \big\| P_{\frac{\pi}{h}}^\perp \big( \mathcal{E} \wt \gamma^h \mathcal{E} (\dh \gamma^h) \big) \big\|_{L_t^2 (I; L_x^2 (\R))} \\
    &\quad + \big\| ( \mathcal{E} \wt \gamma^h - \mathcal{E} \gamma^h ) \mathcal{E} (\dh \gamma^h) \big\|_{L_t^2 (I; L_x^2 (\R))} + |I|^{\frac 12} \| u^h \dh u^h - u \dh u^h \|_{L_t^2 (I; L_x^2 (\R))} \\
    &\quad + \| u \dh u^h - u \dx u^h \|_{L_t^2 (I; L_x^2 (\R))} + |I|^{\frac 12} \| u \dx u^h - u \dx u \|_{L_t^2 (I; L_x^2 (\R))} \\
    &=: \textup{I}_1 + \textup{I}_2 + \textup{I}_3 + \textup{I}_4 + \textup{I}_5 + \textup{I}_6 .
\end{split}
\label{conv4-0}
\end{align}

\noi
For $\textup{I}_1$, we use Lemma~\ref{LEM:Ecos}, Lemma~\ref{LEM:quad}, and Proposition~\ref{PROP:uhbdd} to obtain
\begin{align}
\begin{split}
    \textup{I}_1 &\les h \| \mathcal{E} \wt \gamma^h \mathcal{E} (\dh \gamma^h) \|_{L_t^2 (I; H_x^1 (\R))} \\
    &\les h \| u^h \|_{L_t^\infty (I ; H_x^1 (\R))}^2 + h \| u^h \|_{L_t^\infty (I ; H_x^1 (\R))}  \| \dh u^h \|_{L_t^4 (I; L_x^\infty (\R))} \\
    &\quad + h \| P_{\frac{\pi}{4 h}} u^h \|_{L_x^2 (\R ; L_t^\infty (I))} \| \dh \dx u^h \|_{L_x^\infty (\R; L_t^2 (I))} \\
    &\quad + h C (\| \gamma_0^h \|_{H^1 (h \Z)}) \big( \| \dh u^h \|_{L_t^4 (I; L_x^\infty (\R))} + \| u^h \|_{L_t^\infty (I; H_x^1 (\R))} \big) \\
    &\les h C (\| \gamma_0^h \|_{H^1 (h \Z)})
\end{split}
\label{conv4-1}
\end{align}

\noi
for some constant $C (\| \gamma_0^h \|_{H^1 (h \Z)}) > 0$ which may vary from line to line below.
For $\textup{I}_2$, the frequency $\xi$ satisfies $|\xi| > \frac{\pi}{h}$, so that with similar steps to \eqref{conv4-1} we get
\begin{align}
    \textup{I}_2 \les h \| \mathcal{E} \wt \gamma^h \mathcal{E} (\dh \gamma^h) \|_{L_t^2 (I; H_x^1 (\R))} \les h C (\| \gamma_0^h \|_{H^1 (h \Z)}) .
\label{conv4-2}
\end{align}

\noi
For $\textup{I}_3$, we use H\"older's inequality, \eqref{quad3-2}, \eqref{defuh}, and Proposition~\ref{PROP:uhbdd} to obtain
\begin{align}
\begin{split}
    \textup{I}_3 &\leq \| \mathcal{E} \wt \gamma^h - \mathcal{E} \gamma^h \|_{L_t^\infty (I; L_x^2(\R))} \| \mathcal{E} (\dh \gamma^h) \|_{L_t^4 (I; L_x^\infty (\R))} \\
    &\les h C (\| \gamma_0^h \|_{H^1 (h \Z)}) \| \dh u^h \|_{L_t^4 (I; L_x^\infty (\R))} \\
    &\les h C (\| \gamma_0^h \|_{H^1 (h \Z)}) .
\end{split}
\label{conv4-3}
\end{align}

\noi
For $\textup{I}_4$, we use H\"older's inequality and Proposition~\ref{PROP:uhbdd} to obtain
\begin{align}
\begin{split}
    \textup{I}_4 &\les |I|^{\frac 34} \| u^h - u \|_{L_t^\infty (I; L_x^2 (\R))} \| \dh u^h \|_{L_t^4 (I; L_x^\infty (\R))}  \\
    &\les C (\| \gamma_0^h \|_{H^1 (h \Z)}) |I|^{\frac 34} \| u^h - u \|_{S_I} .
\end{split}
\label{conv4-4}
\end{align}

\noi
For $\textup{I}_5$, we first see that by \eqref{dhf} and the fundamental theorem of calculus, we get that for any $x \in \R$,
\begin{align}
\begin{split}
    \dh u^h (x) - \dx u^h (x) &= \frac{1}{2h} \big( u^h (x + h) - u^h (x - h) \big) - \dx u^h (x) \\
    &= \frac 12 \int_{-1}^1 \big( \dx u^h (x + sh) - \dx u^h (x) \big) ds \\
    &= \frac{h}{2} \int_{-1}^1 s \int_0^1 \dx^2 u^h (x + s' sh) ds' ds ,
\end{split}
\label{dhdx}
\end{align}

\noi
so that by H\"older's inequality, \eqref{dhdx}, Minkowski's integral inequality, Proposition~\ref{PROP:uhbdd}, and Proposition~\ref{PROP:ubdd}, we obtain
\begin{align}
\begin{split}
    \textup{I}_5 &\les \| u \|_{L_x^2 (\R; L_t^\infty (I))} \| \dh u^h - \dx u^h \|_{L_x^\infty (\R; L_t^2 (I))} \\
    &\leq h \| u \|_{L_x^2 (\R; L_t^\infty (I))} \| \dx^2 u^h \|_{L_x^\infty (\R; L_t^2 (I))} \\
    &\les h C (\| \gamma_0^h \|_{H^1 (h \Z)}, \| u_0 \|_{H^1 (\R)})
\end{split}
\label{conv4-5}
\end{align}

\noi
for some constant $C (\| \gamma_0^h \|_{H^1 (h \Z)}, \| u_0 \|_{H^1 (\R)}) > 0$. For $\textup{I}_6$, we use H\"older's inequalities, Proposition~\ref{PROP:uhbdd}, and Proposition~\ref{PROP:ubdd} to obtain
\begin{align}
\begin{split}
    \textup{I}_6 &\les |I|^{\frac 12} \| u \|_{L_x^2 (\R; L_t^\infty (I))} \| \dx (u^h - u) \|_{L_x^\infty (\R; L_t^2 (I))} \\
    &\les C (\| \gamma_0^h \|_{H^1 (h \Z)}, \| u_0 \|_{H^1 (\R)}) |I|^{\frac 12} \| u^h - u \|_{S_I} .
\end{split}
\label{conv4-6}
\end{align}

We obtain the desired estimate by combining \eqref{conv4-0}, \eqref{conv4-1}, \eqref{conv4-2}, \eqref{conv4-3}, \eqref{conv4-4}, \eqref{conv4-5}, and \eqref{conv4-6}.
\end{proof}

We are now ready to prove the main statement on the dynamical convergence.
\begin{proof}[Proof of Theorem~\ref{THM:conv}~\textup{(ii)}]
Fix $T > 0$. For simplicity, we only show the convergence of dynamics on $[0, T]$, since the convergence on $[-T, 0]$ is similar. 

We first consider the convergence on the time interval $[0, T_0]$ for some small $0 < T_0 \leq T$ to be chosen later.
From \eqref{eq_uh} and \eqref{duh_kdv}, we have for any $t \in [0, T_0]$ that
\begin{align*}
    u^h (t) - u (t) &= e^{- 6 h^{-2} t (\dh - \dx)} \mathcal{E} \gamma_0^h - e^{- t \dx^3} u_0 \\
    &\quad - 6 \int_0^t \big( e^{- 6 h^{-2} (t - t') (\dh - \dx)} - e^{- (t - t') \dx^3} \big) e^{6 h^{-2} t' \dx} \mathcal{E} (\wt \gamma^h \dh \gamma^h) (t') dt' \\
    &\quad - 6 \int_0^t e^{- (t - t') \dx^3} \big( e^{6 h^{-2} t' \dx} \mathcal{E} (\wt \gamma^h \dh \gamma^h) (t') - u (t') \dx u (t') \big) dt'.
\end{align*}

\noi
By Lemma~\ref{LEM:conv0}, Lemma~\ref{LEM:lin_d}~(i) and (iv), and Lemma~\ref{LEM:Hsbdd}, we have
\begin{align}
\begin{split}
    \big\| &e^{- 6 h^{-2} t (\dh - \partial_x)} \mathcal{E} \gamma^h_0 - e^{- t \partial_x^3} u_0 \big\|_{S_{[0, T_0]}} \\
    &\leq \big\| ( e^{- 6 h^{-2} t (\dh - \partial_x)}  - e^{- t \partial_x^3} ) \mathcal{E} \gamma_0^h \big\|_{S_{[0, T_0]}} + \big\| e^{- t \dx^3} (\gamma_0^h - u_0) \big\|_{S_{[0, T_0]}} \\
    &\leq C_0 T_0^{\frac 15} h^{\frac 25} \| \mathcal{E} \gamma_0^h \|_{H^1 (\R)} + C_0 \| \mathcal{E} \gamma_0^h - u_0 \|_{L^2 (\R)} \\
    &\leq \wt{C}_0 T^{\frac 15} h^{\frac 25} \| \gamma_0^h \|_{H^1 (h \Z)} + C_0 \| \mathcal{E} \gamma_0^h - u_0 \|_{L^2 (\R)},
\end{split}
\label{diff_init}
\end{align}

\noi
for some constants $C_0, \wt C_0 > 0$. Thus, by \eqref{diff_init}, Proposition~\ref{PROP:conv1}, and Proposition~\ref{PROP:conv4}, we have
\begin{align*}
    \| u^h - u \|_{S_{[0, T_0]}} &\leq T^{\frac 15} h^{\frac 25} C( \| \gamma_0^h \|_{H^1 (h \Z)}, \| u_0 \|_{H^1 (\R)} ) +  C_0 \| \mathcal{E} \gamma_0^h - u_0 \|_{L^2 (\R)} \\
    &\quad + T_0^{\frac 12} C( \| \gamma_0^h \|_{H^1 (h \Z)}, \| u_0 \|_{H^1 (\R)} ) \| u^h - u \|_{S_{[0, T_0]}} 
\end{align*}

\noi
for some constant $C( \| \gamma_0^h \|_{H^1 (h \Z)}, \| u_0 \|_{H^1 (\R)} ) > 0$. By choosing $T_0$ small enough in such a way that
\begin{align}
    T_0^{\frac 12} C( \| \gamma_0^h \|_{H^1 (h \Z)}, \| u_0 \|_{H^1 (\R)} ) = \frac 12, 
\label{T0}
\end{align}

\noi
we obtain 
\begin{align}
    \| u^h - u \|_{S_{[0, T_0]}} \leq 2 T^{\frac 15} h^{\frac 25} C( \| \gamma_0^h \|_{H^1 (h \Z)}, \| u_0 \|_{H^1 (\R)} ) +  2 C_0 \| \mathcal{E} \gamma_0^h - u_0 \|_{L^2 (\R)}.
\label{conv_main}
\end{align}

We now repeat the above procedure on the time interval $[T_0, 2T_0]$. For any $t \in [T_0, 2 T_0]$, we have the following difference of the Duhamel integrals:
\begin{align*}
    u^h (t) - u (t) &= e^{- 6 h^{-2} t (\dh - \dx)} u^h (T_0) - e^{- t \dx^3} u (T_0) \\
    &\quad - 6 \int_{T_0}^t \big( e^{- 6 h^{-2} (t - t') (\dh - \dx)} - e^{- (t - t') \dx^3} \big) e^{6 h^{-2} t' \dx} \mathcal{E} (\wt \gamma^h \dh \gamma^h) (t') dt' \\
    &\quad - 6 \int_{T_0}^t e^{- (t - t') \dx^3} \big( e^{6 h^{-2} t' \dx} \mathcal{E} ( \wt \gamma^h \dh \gamma^h ) (t') - u (t') \dx u (t') \big) dt' .
\end{align*}

\noi
For the difference of the linear flows, we use similar steps as \eqref{diff_init} along with Proposition~\ref{PROP:uhbdd} to obtain
\begin{align}
\begin{split}
    \big\| &e^{- 6 h^{-2} t (\dh - \partial_x)} u^h (T_0) - e^{- t \partial_x^3} u (T_0) \big\|_{S_{[T_0, 2T_0]}} \\
    &\leq \big\| ( e^{- 6 h^{-2} t (\dh - \partial_x)}  - e^{- t \partial_x^3} ) u^h (T_0) \big\|_{S_{[T_0, 2T_0]}} + \big\| e^{- t \dx^3} (u^h (T_0) - u (T_0)) \big\|_{S_{[T_0, 2 T_0]}} \\
    &\leq C_0 T^{\frac 15} h^{\frac 25} \| u^h (T_0) \|_{H^1 (\R)} + C_0 \| u^h (T_0) - u (T_0) \|_{L^2 (\R)} \\
    &\leq T^{\frac 15} h^{\frac 25} C ( \| \gamma_0^h \|_{H^1 (h \Z)}, \| u_0 \|_{H^1 (\R)} ) + C_0 \| u^h - u \|_{S_{[0, T_0]}} .
\end{split}
\label{diff_init2}
\end{align}

\noi
Thus, by \eqref{diff_init2}, Proposition~\ref{PROP:conv1}, and Proposition~\ref{PROP:conv4}, we have
\begin{align*}
    \| u^h - u \|_{S_{[T_0, 2 T_0]}} &\leq T^{\frac 15} h^{\frac 25} C( \| \gamma_0^h \|_{H^1 (h \Z)}, \| u_0 \|_{H^1 (\R)} ) +  C_0 \| u^h - u \|_{S_{[0, T_0]}} \\
    &\quad + T_0^{\frac 12} C( \| \gamma_0^h \|_{H^1 (h \Z)}, \| u_0 \|_{H^1 (\R)} ) \| u^h - u \|_{S_{[T_0, 2 T_0]}} ,
\end{align*}

\noi
so that with the choice of $T_0$ in \eqref{T0}, we obtain
\begin{align*}
    \| u^h - u \|_{S_{[T_0, 2 T_0]}} \leq 2 T^{\frac 15} h^{\frac 25} C( \| \gamma_0^h \|_{H^1 (h \Z)}, \| u_0 \|_{H^1 (\R)} ) +  2 C_0 \| u^h - u \|_{S_{[0, T_0]}} .
\end{align*}

We the repeat the above procedure until we reach $T$, and a simple induction with \eqref{conv_main} gives
\begin{align}
\begin{split}
    \| u^h - u \|_{S_{[0, T]}} &\leq (2 C_0)^{4 C ( \| \gamma_0^h \|_{H^1 (h \Z)}, \| u_0 \|_{H^1 (\R)} )^2 T} T^{\frac 15} h^{\frac 25} C( \| \gamma_0^h \|_{H^1 (h \Z)}, \| u_0 \|_{H^1 (\R)} ) \\
    &\quad + (2 C_0)^{4 C ( \| \gamma_0^h \|_{H^1 (h \Z)}, \| u_0 \|_{H^1 (\R)} )^2 T} \| \mathcal{E} \gamma_0^h - u_0 \|_{L^2 (\R)} .
\end{split}
\label{conv}
\end{align}

\noi
The desired convergence in Theorem~\ref{THM:conv}~(ii) then follows from
the estimate \eqref{conv} and also the assumption \eqref{u0cond}.
\end{proof}

\subsection{Long-wave limit in Flaschka's form}
\label{SUB:lw1}

In this subsection, we prove Theorem~\ref{THM:lw}, the long-wave limit of the scaled Toda lattice in Flaschka's form \eqref{todah}.

\begin{proof}[Proof of Theorem~\ref{THM:lw}]
From Theorem~\ref{THM:conv} and \eqref{defuh}, we have
\begin{align}
    \| \mathcal{E} \gamma^h (t, x) - u (t, x - 6 h^{-2} t) \|_{C_T L_x^2 (\R)} \leq C e^{C' T} h^{\frac 25}.
\label{conv_gh-}
\end{align}

\noi
We define a function $w^h$ on $\R \times h \Z$ by
\begin{align*}
    w^h (t, \ld) := u (t, \ld - 6 h^{-2} t).
\end{align*}

\noi
Then, by using the Poisson summation formula (see \cite[(2.13)]{KOVW} and \cite[Theorem~3.2.8]{Gra1}), we have
\begin{align*}
    \F^h w^h (t, \xi) = h \sum_{\ld \in h \Z} u (t, \ld - 6 h^{-2} t) e^{- i \xi \ld} = e^{- 6 i h^{-2} t \xi} \ft{u} (t, \xi).
\end{align*}

\noi
Thus, by Plancherel's identity \eqref{plan}, \eqref{conv_gh-}, and Proposition~\ref{PROP:ubdd}, we obtain
\begin{align*}
    \| &\gamma^h (t, \ld) - w^h (t, \ld) \|_{C_T L_\ld^2 (h \Z)} \\
    &= \big\| \F^h \gamma^h (t, \xi) - \ind_{[- \frac{\pi}{h}, \frac{\pi}{h})} (\xi) e^{- 6 i h^{-2} t \xi} \ft{u} (t, \xi) \big\|_{C_T L_\xi^2 (\R)} \\
    &\leq \| \mathcal{E} \gamma^h (t, x) - u (t, x - 6 h^{-2} t) \|_{C_T L_x^2 (\R)} + \big\| \ind_{[- \frac{\pi}{h}, \frac{\pi}{h})^c} (\xi) \ft{u} (t, \xi) \big\|_{C_T L_\xi^2 (\R)} \\
    &\leq C e^{C' T} h^{\frac 25} + \frac{h}{\pi} \| u \|_{C_T H_x^1 (\R)} \\
    &\leq 2 C e^{C' T} h^{\frac 25}.
\end{align*}

\noi
This gives the desired estimate.
\end{proof}

\subsection{Long-wave limit in the original form}
\label{SUB:lw2}

In this section, we consider the long-wave limit of the scaled Toda lattice in the form \eqref{eqr2} and prove Theorem~\ref{THM:long-wave}.

We first consider global well-posedness of the Toda lattice in the original form \eqref{eqr2}. As mentioned in the introduction, we will reduce this problem to Theorem~\ref{THM:conv}~(i), global well-posedness of the Toda lattice in Flaschka's form \eqref{toda_abh}.
For this purpose, we need the following lemma.
\begin{lemma}
\label{LEM:exp}
Let $0 < h \leq 1$. Let $f^h \in H^1 (h \Z)$ and let $g_1^h$, $g_2^h$ be functions on $h \Z$ defined by
\begin{align*}
    g_1^h (\ld) =
    \begin{cases}
    h^{-2} \exp ( h^2 f^h (\ld) ) - h^{-2} & \textup{if } \frac{\ld}{h} \textup{ is even} \vspace{5pt} \\
    f^h (\ld) & \textup{if } \frac{\ld}{h} \textup{ is odd}
    \end{cases}
\end{align*}

\noi
and
\begin{align*}
    g_2^h (\ld) = h^{-2} \exp ( h^2 f^h (\ld) ) - h^{-2} .
\end{align*}

\noi
Then, we have $g_1^h, g_2^h \in H^1 (h \Z)$ and
\begin{align*}
    \| g_1^h - f^h \|_{H^1 (h \Z)} &\les h^{\frac 12} \exp ( \| f^h \|_{L^{2} (h \Z)} ) , \\
    \| g_2^h - f^h \|_{H^1 (h \Z)} &\les h^{\frac 12} \exp ( \| f^h \|_{L^{2} (h \Z)} ) ,
\end{align*}

\noi
where the underlying constants are independent of $h$.
\end{lemma}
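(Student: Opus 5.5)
The plan is to reduce everything to pointwise Taylor bounds for the scalar map $F(y) := h^{-2}(e^{h^2 y} - 1) - y$, combined with two lattice facts. First, Lemma~\ref{LEM:Lpbdd} gives the uniform smallness
\begin{align*}
h^2 \| f^h \|_{L^\infty (h \Z)} \leq h^{\frac 32} \| f^h \|_{L^2 (h \Z)} \leq \| f^h \|_{L^2 (h \Z)} .
\end{align*}
Second, on the lattice the derivative costs only $h^{-1}$: by \eqref{H1norm} and the triangle inequality,
\begin{align*}
\| w^h \|_{\dot H^1 (h \Z)} \leq 2 h^{-1} \| w^h \|_{L^2 (h \Z)}
\end{align*}
for any $w^h$. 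Note that $g_2^h - f^h = F(f^h)$ and $g_1^h - f^h = \ind_{\textup{even}} F(f^h)$. Since multiplication by the indicator of even sites does not increase the $L^2$ norm, and I will estimate the $\dot H^1$ norm purely through the crude $L^2$ bound above, it suffices to prove
\begin{align*}
\| F(f^h) \|_{L^2 (h \Z)} \les h^{\frac 32} \, C(\| f^h \|_{L^2 (h \Z)}) .
\end{align*}
Multiplying by $2h^{-1}$ then yields the $h^{\frac 12}$ rate in the $H^1$ norm for both $g_1^h$ and $g_2^h$. This also shows $g_j^h \in H^1 (h \Z)$, since $g_j^h = f^h + (g_j^h - f^h)$.

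For the $L^2$ estimate I would write $F(y) = h^2 y^2 \Phi(h^2 y)$ with $\Phi(z) = (e^z - 1 - z)/z^2$, which satisfies $|\Phi(z)| \leq \tfrac12 e^{|z|}$. Pointwise this gives
\begin{align*}
|F(f^h(\ld))| \leq \tfrac12 h^2 |f^h(\ld)|^2 \exp(\| f^h \|_{L^2 (h \Z)}) .
\end{align*}
Hence
\begin{align*}
\| F(f^h) \|_{L^2 (h \Z)} \les h^2 \| f^h \|_{L^\infty (h \Z)} \| f^h \|_{L^2 (h \Z)} e^{\| f^h \|_{L^2}} \leq h^{\frac 32} \| f^h \|_{L^2 (h \Z)}^2 e^{\| f^h \|_{L^2}} ,
\end{align*}
where the last step uses Lemma~\ref{LEM:Lpbdd} again.

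The only delicate point is cosmetic: a polynomial factor $\| f^h \|_{L^2}^2$ appears in front of the exponential. I would absorb it using $x^2 e^{x} \les e^{c x}$ for $x \geq 0$, adjusting the constant in the exponent, which is harmless for the later use of the lemma. Alternatively, one can keep the sharper form $\| f^h \|_{L^2}^2 e^{\| f^h \|_{L^2}}$, which is what the applications need.

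There is also an optional refinement in which $f^h \in H^1$ is genuinely used. A mean value argument gives $F(a) - F(b) = (a - b)(e^{h^2 c} - 1)$ with $|e^{h^2 c} - 1| \leq h^{\frac 32} \| f^h \|_{L^2} e^{\| f^h \|_{L^2}}$, and hence a bound of order $h^{\frac 32} \| f^h \|_{\dot H^1}$. The statement as given, however, only needs the crude $h^{-1}$ inverse estimate, so the plan above suffices.
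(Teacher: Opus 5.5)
Your route is essentially the paper's. Both arguments reduce to the crude inverse estimate $\|w\|_{\dot H^1(h\Z)}\le 2h^{-1}\|w\|_{L^2(h\Z)}$ together with Lemma~\ref{LEM:Lpbdd}, applied to the Taylor remainder of $e^{h^2 f^h}$. The paper sums that remainder term by term using $\|(f^h)^k\|_{L^2}=\|f^h\|_{L^{2k}}^k\le h^{\frac12-\frac k2}\|f^h\|_{L^2}^k$, whereas you use the closed-form remainder $\Phi$ and an $L^\infty\times L^2$ split.

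There is one real shortfall: as written, your argument does not prove the stated inequality. Your final bound is $h^{\frac12}\|f^h\|_{L^2}^2\, e^{\|f^h\|_{L^2}}$. This is not $\les h^{\frac12}e^{\|f^h\|_{L^2}}$ with a constant independent of $f^h$, since the ratio $\|f^h\|_{L^2}^2$ is unbounded. Replacing $e^{x}$ by $e^{cx}$ with $c>1$ proves a different lemma, not this one.

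The loss comes entirely from the estimate $|\Phi(z)|\le\frac12 e^{|z|}$. Since $\Phi(z)=\sum_{j\ge0}z^j/(j+2)!$ has nonnegative coefficients, you have $|\Phi(z)|\le\Phi(|z|)$, and $\Phi$ is increasing on $[0,\infty)$. Set $x:=\|f^h\|_{L^2(h\Z)}$. Then $|z|=h^2|f^h(\lambda)|\le h^{\frac32}x\le x$, and this gives
\[
\|F(f^h)\|_{L^2(h\Z)}\le h^2\,\Phi(x)\,\|f^h\|_{L^\infty(h\Z)}\|f^h\|_{L^2(h\Z)}\le h^{\frac32}x^2\Phi(x)=h^{\frac32}\big(e^{x}-1-x\big)\le h^{\frac32}e^{x}.
\]
This is exactly the quantity the paper's series produces. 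Multiplying by $1+2h^{-1}\le 3h^{-1}$ then gives the stated bound for both $g_1^h$ and $g_2^h$.

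Your weaker version would in fact suffice for the later use in Theorem~\ref{THM:long-wave}, where only some function of $\|f^h\|_{L^2}$ is needed. It is still not the lemma as stated, so the fix above is needed.
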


\begin{proof}
We only show the estimate for $g_1^h$, as the estimate for $g_2^h$ is similar. By the Taylor expansion, when $\frac{\ld}{h}$ is even, we have
\begin{align*}
    g_1^h (\ld) - f^h (\ld) = \sum_{k = 2}^\infty \frac{h^{2k - 2} (f^h (\ld))^k}{k!} .
\end{align*}

\noi
Thus, by \eqref{H1norm} and Lemma~\ref{LEM:Lpbdd}, we have
\begin{align*}
    \| g_1^h - f^h \|_{H^1 (h \Z)} &\les \| g_1^h - f^h \|_{L^2 (h \Z)} + \| g_1^h - f^h \|_{\dot{H}^1 (h \Z)} \\
    &\les h^{-1} \sum_{k = 2}^\infty \frac{h^{2k - 2}}{k!} \| (f^h)^k \|_{L^{2} (h \Z)} \\
    &= h^{-1} \sum_{k = 2}^\infty \frac{h^{2k - 2}}{k!} \| f^h \|_{L^{2 k} (h \Z)}^k \\
    &\leq h^{-1} \sum_{k = 2}^\infty \frac{h^{\frac 32 k - \frac 32}}{k!} \| f^h \|_{L^{2} (h \Z)}^k \\
    &\leq h^{\frac 12} \exp ( \| f^h \|_{L^{2} (h \Z)} ).
\end{align*}

\noi
This gives the desired estimate.
\end{proof}

We will also need the $L^2$-conservation of $r^h$ satisfying the equation \eqref{eqr2}. This is achieved via the following scaled version of the Hamiltonian \eqref{defH}:
\begin{align}
    H^h (t) := h \sum_{\ld \in h \Z} \bigg( \frac{2}{9} \big( h^2 (\dh^+)^{-1} \dt r^h (t, \ld) \big)^2 + h^{-4} \Big( \exp \big( 2 h^2 r^h (t, \ld) \big) - 2 h^2 r^h (t, \ld) - 1 \Big) \bigg) .
\label{defHh}
\end{align}

\begin{proposition}
\label{PROP:L2bddr}
Let $0 < h \leq 1$ and let $r_0^h$, $r_1^h$, and $u_0$ be as given in Theorem~\ref{THM:long-wave}. Let $r^h \in C ([-T, T]; H^1 (h \Z))$ be the solution to the scaled Toda lattice \eqref{eqr2} for some $T > 0$ with initial data $(r^h, \dt r^h)|_{t = 0} = (r_0^h, \pm 3 h^{-2} \dh^+ r_1^h)$. Then, there exists $h_0 = h_0 (\| u_0 \|_{L^2}) > 0$ sufficiently small such that for any $0 < h \leq h_0$ and $t \in [-T, T]$, we have
\begin{align*}
    \| r^h (t) \|_{L^2 (h \Z)} \leq 3 \| u_0 \|_{L^2 (\R)} .
\end{align*}
\end{proposition}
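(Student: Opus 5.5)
The plan is to use conservation of the scaled Hamiltonian $H^h$ in \eqref{defHh} under the flow of \eqref{eqr2}, together with the elementary two-sided comparison of $e^{2y}-2y-1$ with $2y^2$ for $|y|$ small. First I will check that $H^h(t)$ is conserved. Differentiating in time and using \eqref{eqr2} should give $\frac{d}{dt}H^h=0$: the kinetic part contributes $\frac49 h^4\langle (\dh^+)^{-1}\dt r^h,(\dh^+)^{-1}\dt^2 r^h\rangle$, and writing $\Dl_h=\dh^+\dh^-$ and moving one derivative across by $((\dh^+)^{-1})^*=-(\dh^-)^{-1}$ cancels the potential contribution $2h^{-2}\langle e^{2h^2r^h}-1,\dt r^h\rangle$. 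Alternatively, and more rigorously, one can avoid differentiating by noting that $H^h$ is exactly the scaled FPU Hamiltonian \eqref{defH} in the variables \eqref{abr}, hence conserved by the same truncation argument as in Proposition~\ref{PROP:cons}.

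Next I bound $H^h(0)$. Since $\dt r^h(0)=\pm3h^{-2}\dh^+ r_1^h$, the kinetic term equals $2h\sum (r_1^h)^2=2\|r_1^h\|_{L^2}^2$. For the potential term, Taylor expansion gives
\[
h^{-4}(e^{2h^2y}-2h^2y-1)=2y^2+O(h^2|y|^3e^{2h^2|y|}),
\]
so it equals $2\|r_0^h\|_{L^2}^2$ plus an error. By Lemma~\ref{LEM:Lpbdd}, $h^2\|r_0^h\|_{L^\infty}\le h^{3/2}\|r_0^h\|_{L^2}$, so the error is $O(h^{1/2}\|r_0^h\|_{L^2}^3)$. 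Both $r_0^h=r^h_{\rm even}$ and $r_1^h=r^h_{\rm odd}$ are samples of $u_0^h$ on even and odd sites, so $\|r_0^h\|_{L^2}^2+\|r_1^h\|_{L^2}^2=2\|u_0^h\|_{L^2(h\Z)}^2$ (the factor $2$ comes from spacing $h$ versus $2h$; I must track this constant carefully). By Lemma~\ref{LEM:Hsbdd} and \eqref{u0cond2}, $\|u_0^h\|_{L^2(h\Z)}\le\|\mathcal E u_0^h\|_{L^2}\le\|u_0\|_{L^2}+C_2h$. Hence
\[
H^h(0)\le 4\|u_0\|_{L^2}^2+o(1)\qquad\text{as } h\to0 .
\]

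For the lower bound at time $t$, I use $e^{s}-s-1\ge0$, so that $H^h(t)\ge h^{-4}h\sum(e^{2h^2r^h}-2h^2r^h-1)$. To compare this with $2\|r^h(t)\|_{L^2}^2$ I need a uniform smallness $h^2\|r^h(t)\|_{L^\infty}\le c$, which is the main obstacle. I would handle it by a continuity (bootstrap) argument. On $\{|y|\le c\}$ one has $e^{2y}-2y-1\ge y^2$, so as long as $h^2|r^h|\le c$ we get $\|r^h(t)\|_{L^2}^2\le H^h(t)=H^h(0)\le 4\|u_0\|^2+o(1)$, which yields $\|r^h(t)\|_{L^2}\le 2\|u_0\|_{L^2}+o(1)$. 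Then Lemma~\ref{LEM:Lpbdd} gives $h^2\|r^h\|_{L^\infty}\le h^{3/2}\cdot 3\|u_0\|_{L^2}<c/2$ once $h\le h_0(\|u_0\|_{L^2})$, which closes the bootstrap since $t\mapsto r^h(t)\in L^2$ is continuous. For the sharper constant one can use $e^{2y}-2y-1\ge 2y^2(1-O(c))$ in the same way.

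Tracking constants, the final estimate $\|r^h(t)\|_{L^2}\le 3\|u_0\|_{L^2}$ then holds with room to spare. Near $\|u_0\|_{L^2}=0$ the $O(h)$ errors must be absorbed; I would let $h_0$ also depend on $C_2$, or use that the error terms are controlled relatively.
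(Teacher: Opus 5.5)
Your argument is essentially the paper's: conservation of $H^h$, a Taylor comparison of $h^{-4}(e^{2h^2 r^h}-2h^2 r^h-1)$ with $2(r^h)^2$, and a continuity argument in which Lemma~\ref{LEM:Lpbdd} turns the $L^2$ bound into the smallness $h^2\|r^h\|_{L^\infty(h\Z)}\les h^{3/2}\|u_0\|_{L^2(\R)}$. The paper bootstraps $\|r^h(t)\|_{L^2(h\Z)}\le 4\|u_0\|_{L^2(\R)}$ directly, which is equivalent. One harmless slip: $r_0^h$ and $r_1^h$ sample $u_0^h$ on the even and odd sites with the same weight $h$, so $\|r_0^h\|_{L^2(h\Z)}^2+\|r_1^h\|_{L^2(h\Z)}^2=\|u_0^h\|_{L^2(h\Z)}^2$ with no factor $2$; this only improves your constant.
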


\begin{proof}
From \eqref{r0cond}, Lemma~\ref{LEM:Hsbdd}, and the condition in \eqref{u0cond2}, we can let $h = h ( \| u_0 \|_{L^2 (\R)} ) > 0$ be sufficiently small so that
\begin{align}
    \| r_0^h \|_{L^2 (h \Z)} \leq \| u_0^h \|_{L^2 (\R)} \leq 2 \| u_0 \|_{L^2 (\R)}
\label{r0h_bdd}
\end{align}

\noi
and
\begin{align}
    \| r_1^h \|_{L^2 (h \Z)} \leq \| u_0^h \|_{L^2 (\R)} \leq 2 \| u_0 \|_{L^2 (\R)} .
\label{r1h_bdd}
\end{align}

\noi
By continuity in time and \eqref{r0h_bdd}, there exists $0 < T_0 \leq T$ such that
\begin{align}
    \| r^h (t) \|_{L^2 (h \Z)} \leq 2 \| r_0^h \|_{L^2 (h \Z)} \leq 4 \| u_0 \|_{L^2 (\R)} 
\label{rh_bdd1}
\end{align}

\noi
for any $t \in [-T_0, T_0]$. Thus, from the Taylor expansion of $e^x$, Lemma~\ref{LEM:Lpbdd}, and \eqref{rh_bdd1}, we obtain
\begin{align}
\begin{split}
    \bigg| &h \sum_{\ld \in h \Z} h^{-4} \Big( \exp \big( 2 h^2 r^h (t, \ld) \big) - 2 h^2 r^h (t, \ld) - 1 \Big) - 2 h \sum_{\ld \in h \Z} r^h (t, \ld)^2 \bigg| \\
    &\leq h \sum_{\ld \in h \Z} \sum_{k = 3}^\infty \frac{2^{k} h^{2k - 4}}{k!} |r^h (t, \ld)|^k \\
    &\leq 2 \sum_{k = 3}^\infty h^{2k - 4} \| r^h (t) \|_{L^k (h \Z)}^k \\
    &\leq 2 \sum_{k = 3}^\infty h^{\frac 32 k - 3} \| r^h (t) \|_{L^2 (h \Z)}^k \\
    &\leq 2 \sum_{k = 3}^\infty 4^k h^{\frac 32 k - 3} \| u_0 \|_{L^2 (\R)}^k
\end{split}
\label{rh_bdd2}
\end{align}

\noi
for any $t \in [-T_0, T_0]$. By using the conservation of the Hamiltonian $H^h$ in \eqref{defHh}, applying the estimates \eqref{rh_bdd2}, \eqref{r1h_bdd}, and \eqref{r0h_bdd}, and letting $h = h (\| u_0 \|_{L^2 (\R)}) > 0$ be sufficiently small, we then deduce that
\begin{align*}
    \| r^h (t) \|_{L^2 (h \Z)}^2 &\leq \frac 12 H^h (t) + \sum_{k = 3}^\infty 4^k h^{\frac 32 k - 3} \| u_0 \|_{L^2 (\R)}^k \\
    &= \frac 12 H^h (0) + \sum_{k = 3}^\infty 4^k h^{\frac 32 k - 3} \| u_0 \|_{L^2 (\R)}^k \\
    &\leq \| r_1^h \|_{L^2 (h \Z)}^2 + \| r_0^h \|_{L^2 (h \Z)}^2 + 2 \sum_{k = 3}^\infty 4^k h^{\frac 32 k - 3} \| u_0 \|_{L^2 (\R)}^k \\
    &\leq 9 \| u_0 \|_{L^2 (\R)}^2 
\end{align*}

\noi
for any $t \in [-T_0, T_0]$, which improves the bound \eqref{rh_bdd1}. We can then repeat the above bootstrap argument to obtain the desired bound.
\end{proof}

We are now ready to show global well-posedness of the Toda lattice in the original form \eqref{eqr2}.

\begin{proof}[Proof of Theorem~\ref{THM:long-wave}~\textup{(i)}]
Let us define
\begin{align}
    \gamma_0^h (\ld) := 
    \begin{cases}
        h^{-2} \exp \big( r_0^h ( \frac{\ld}{2} ) \big) - h^{-2} & \text{if } \frac{\ld}{h} \text{ is even} \vspace{5pt} \\
        r_1^h ( \frac{\ld + h}{2} ) & \text{if } \frac{\ld}{h} \text{ is odd} 
    \end{cases}
\label{defg0h}
\end{align}

\noi
for any $\ld \in h \Z$. Then, from \eqref{r0cond}, we have the following relationship between $\gamma_0^h$ and $u_0^h$:
\begin{align}
    \gamma_0^h (\ld) = 
    \begin{cases}
        h^{-2} \exp \big( u_0^h ( \ld ) \big) - h^{-2} & \text{if } \frac{\ld}{h} \text{ is even} \vspace{5pt} \\
        u_0^h ( \ld ) & \text{if } \frac{\ld}{h} \text{ is odd} .
    \end{cases}
\label{g0h}
\end{align}

\noi
From Lemma~\ref{LEM:exp} and the assumption \eqref{u0cond2}, we see that $\gamma_0^h$ satisfies the assumption \eqref{u0cond} in Theorem~\ref{THM:conv}. From \eqref{u0cond} and Lemma~\ref{LEM:Hsbdd}, the condition \eqref{h_cond} is then satisfied with all $h = h(\| u_0 \|_{L^2 (\R^2)}) > 0$ sufficiently small. Thus, by Theorem~\ref{THM:conv}~(i), given $h = h (\| u_0 \|_{L^2 (\R)})$ sufficiently small, we have a unique solution $\gamma^h \in C(\R; H^1 (h \Z))$ to the scaled Toda lattice \eqref{todah} with initial data $\gamma^h |_{t = 0} = \gamma_0^h$. 

In view of \eqref{defgh} and \eqref{abr}, we only need to take
\begin{align}
    r^h (t, \ld) = h^{-2} \ln \big( h^2 \al^h (t, \ld) + 1 \big) = h^{-2} \ln \big( h^2 \gamma^h (t, 2 \ld) + 1 \big)
\label{rh_form}
\end{align}

\noi
for any $t \in \R$ and $\ld \in h \Z$. From Proposition~\ref{PROP:L2bdd} and the assumption \eqref{u0cond}, we have with $h = h(\| u_0 \|_{L^2 (\R)}) > 0$ sufficiently small that
\begin{align*}
    h^2 |\al^h (t, \ld)| \leq h^{\frac 32} \| \gamma^h (t) \|_{L^2 (h \Z)} \leq 2 h^{\frac 32} \| \gamma_0^h \|_{L^2 (h \Z)} \leq \frac 12
\end{align*}

\noi
for any $t \in \R$ and $\ld \in h \Z$, so that the logarithm in \eqref{rh_form} makes sense. Moreover, by Lemma~\ref{LEM:exp}, Proposition~\ref{PROP:L2bddr}, \eqref{H1norm}, Proposition~\ref{PROP:L2bdd}, and Proposition~\ref{PROP:H1bdd}, we have
\begin{align*}
    \| r^h (t) \|_{H^1 (h \Z)} &\leq \| r^h (t) - \al^h (t) \|_{H^1 (h \Z)} + \| \al^h (t) \|_{H^1 (h \Z)} \\
    &\les h^{\frac 12} \exp ( \| r^h (t) \|_{L^2 (h \Z)} ) + \| \al^h (t) \|_{L^2 (h \Z)} + \| \al^h (t) \|_{\dot{H}^1 (h \Z)} \\
    &\les h^{\frac 12} \exp ( 3 \| u_0 \|_{L^2 (\R)} ) + \| \gamma^h (t) \|_{L^2 (h \Z)} + \| \gamma^h (t) \|_{\dot{H}^1 (h \Z)} \\
    &\les \exp ( 3 \| u_0 \|_{L^2 (\R)} ) + \| \gamma_0^h \|_{H^1 (h \Z)} ,
\end{align*}

\noi
which gives the desired uniform bound for $r^h$ thanks to the assumption \eqref{u0cond}. This finishes the proof.
\end{proof}

It remains to show the long-wave limit for the Toda lattice in the original form \eqref{eqr2}.

\begin{proof}[Proof of Theorem~\ref{THM:long-wave}~\textup{(ii)}]
In view of \eqref{defgh} \eqref{abr}, we define
\begin{align}
    \gamma^h (t, \ld) =
    \begin{cases}
        h^{-2} \exp \big( r^h (t, \frac{\ld}{2}) \big) - h^{-2} & \text{if } \frac{\ld}{h} \text{ is even} \vspace{5pt} \\
        \frac 13 h^2 (\dh^+)^{-1} \dt r^h (t, \frac{\ld + h}{2}) & \text{if } \frac{\ld}{h} \text{ is odd}
    \end{cases}
\label{gh+}
\end{align}

\noi
for any $t \in \R$ and $\ld \in h \Z$. Then, $\gamma^h$ satisfies the equation \eqref{todah} with initial data $\gamma_0^h$ given by \eqref{g0h} satisfying the assumption \eqref{u0cond} and the condition \eqref{h_cond} with $h = h(\| u_0 \|_{L^2 (\R^2)}) > 0$ sufficiently small. Then, by Theorem~\ref{THM:lw}, for any $T > 0$, we have
\begin{align}
    \big\| \gamma^h (t, 2 \ld) - u (t, 2 \ld - 6 h^{-2} t) \big\|_{C_T L_\ld^2 (h \Z)} \leq C e^{C' T} h^{\frac 25}
\label{gh+est}
\end{align}

\noi
for some constants $C, C' > 0$ depending on the quantities in the assumption \eqref{u0cond} and also $\| u_0 \|_{H^1 (\R)}$. From \eqref{gh+}, Lemma~\ref{LEM:exp}, \eqref{gh+est}, and Proposition~\ref{PROP:L2bddr}, we obtain
\begin{align*}
    \big\| &r^h (t, \ld) - u (t, 2 \ld - 6 h^{-2} t) \big\|_{C_T L_\ld^2 (h \Z)} \\
    &\leq \Big\| h^{-2} \exp \big( r^h (t, \ld) \big) - h^{-2} - r^h (t, \ld) \Big\|_{C_T L_\ld^2 (h \Z)} \\
    &\quad + \big\| \gamma^h (t, 2\ld) - u (t, 2 \ld - 6 h^{-2} t) \big\|_{C_T L_\ld^2 (h \Z)} \\
    &\leq h^{\frac 12} \exp ( \| r^h \|_{C_T L_\ld^2 (h \Z)} ) + C e^{C' T} h^{\frac 25} \\
    &\leq 2 C e^{C' T} h^{\frac 25},
\end{align*}

\noi
which gives the desired estimate.
\end{proof}

\begin{ackno} \rm
The authors would like to thank Monica Vi\c{s}an for a helpful discussion. R.L. and H.K. were supported by the DFG through the Hausdorff Center for Mathematics under Germany's Excellence Strategy - GZ 2047/1, Project-ID 390685813, SFB 1060, Project-ID 211504053  and SFB 1720, Project-ID 539309657.
\end{ackno}

\end{document}